\title{How to Fairly Allocate Easy and Difficult Chores}
\author{Soroush Ebadian}
\email{soroush@cs.toronto.edu}
\affiliation{\institution{University of Toronto}}
\author{Dominik Peters}
\email{dominik@cs.toronto.edu}
\affiliation{\institution{University of Toronto}}
\author{Nisarg Shah}
\email{nisarg@cs.toronto.edu}
\affiliation{\institution{University of Toronto}}
\DeclarePairedDelimiter\floor{\lfloor}{\rfloor}
\DeclarePairedDelimiter\len{\lvert}{\rvert}
\DeclarePairedDelimiter\abs{\lvert}{\rvert}
\DeclarePairedDelimiter\set{\{}{\}}
\newcommand{\Z}{\mathbb{Z}}
\newcommand{\N}{\mathbb{N}}
\newcommand{\R}{\mathbb{R}}
\renewcommand{\ge}{\geqslant}
\renewcommand{\le}{\leqslant}
\renewcommand{\hat}{\widehat}
\newcommand{\xx}{\mathbf{x}}
\newcommand{\yy}{\mathbf{y}}
\providecommand\vv{} \renewcommand{\vv}{\mathbf{v}}
\DeclareMathOperator{\MPB}{MPB}
\DeclareMathOperator{\PB}{PB}
\newcommand{\noviol}{\operatorname{NoViol}}
\DeclareMathOperator*{\argmin}{\arg\min}
\DeclareMathOperator{\poly}{\operatorname{poly}}
\DeclareMathOperator{\level}{\operatorname{level}}
\DeclareMathOperator{\entitled}{\operatorname{entitled}}
\numberwithin{claim}{section}
\numberwithin{observation}{section}
\newcommand{\ls}{\operatorname{ls}\xspace}
\newcommand{\agents}{\mathcal{N}}
\newcommand{\items}{\mathcal{M}}
\newcommand{\partitions}{\mathcal{P}}
\newcommand{\MMS}{\operatorname{MMS}}
\newcommand{\price}{\mathbf{p}}
\newcommand{\priceone}{\price_{\textup{up to 1}}}
\newcommand{\idle}{{\operatorname{idle}}}
\begin{abstract}
A major open question in fair allocation of indivisible items is whether there always exists an allocation of chores that is Pareto optimal (PO) and envy-free up to one item (EF1). We answer this question affirmatively for the natural class of bivalued utilities, where each agent partitions the chores into easy and difficult ones, and has cost $p > 1$ for chores that are difficult for her and cost $1$ for chores that are easy for her. Such an allocation can be found in polynomial time using an algorithm based on the Fisher market.

We also show that for a slightly broader class of utilities, where each agent $i$ can have a potentially different integer $p_i$, an allocation that is maximin share fair (MMS) always exists and can be computed in polynomial time, provided that each $p_i$ is an integer. Our MMS arguments also hold when allocating goods instead of chores, and extend to another natural class of utilities, namely weakly lexicographic utilities. \end{abstract}
\begin{document}

\maketitle

\section{Introduction}\label{sec:intro}

Fair allocation of collective resources and burdens
between agents is a fundamental task in multi-agent systems. Everyday applications include splitting an estate between heirs or joint assets between a divorcing couple (resources), or splitting work shifts between staff or household chores between roommates (burdens). 

We are interested in indivisible resources and burdens (i.e., ones that cannot be subdivided). Let $\items$ be the set of such \emph{items}. Following a canonical model, we assume that each agent $i$ has a valuation $v_i(r)$ for each item $r \in \items$. This gives rise to an \emph{additive} utility function over bundles of items: Agent $i$'s utility for a bundle $S \subseteq \items$ is $v_i(S) = \sum_{r \in S} v_i(r)$. Items are called \emph{goods} if all agents have non-negative valuations for them, and \emph{chores} if all agents have non-positive valuations for them. We will only study cases where either all items are goods, or all items are chores. 
The goal is to find an \emph{allocation} $\xx$, which is a partition of the set $\items$ of items between the agents, with $\xx_i$ denoting the bundle allocated to agent $i$. 
An allocation is \emph{efficient} or \emph{Pareto optimal} (PO) if there is no other allocation $\yy$ which every agent $i$ weakly prefers to $\xx$ (i.e., $v_i(\yy_i) \ge v_i(\xx_i)$), and for which at least one of these inequalities is strict. We are interested in finding allocations that are efficient and also \emph{fair}.
In particular, we will look at restricted classes of utilities that allow us to guarantee stronger fairness axioms than the state of the art for general additive utilities.

\subsection{Envy-Freeness Up To One Item (EF1)}
Perhaps the most compelling fairness guarantee from the literature is \emph{envy-freeness} (EF)~\cite{GS58,Fol67}, which demands that no agent envy another agent (i.e., $v_i(\xx_i) \ge v_i(\xx_j)$ for all agents $i,j$).  
However, it is easy to see that envy-freeness cannot be guaranteed; if we are allocating a single item between two agents, then one will necessarily envy the other. In response, the literature has turned to relaxations which require that agents not envy others by too much.
A particularly appealing axiom is called \emph{envy-freeness up to one item} (EF1)~\citep{Bud11}, which demands that envy between any two agents be avoidable by the removal of a single item from the bundle of one of the two agents. 
For allocating goods, \citet{CKMP+19} show that an elegant rule called \emph{maximum Nash welfare} (MNW) satisfies EF1 and PO simultaneously. Informally, this rule maximizes the product of utilities of the agents for their assigned bundles, i.e., $\prod_i v_i(\xx_i)$. 
Due to its attractive properties, this rule has been deployed to the popular fair division website \href{http://www.spliddit.org}{Spliddit.org}, where it has been used by more than 10,000 people for applications such as dividing estates and settling divorces~\cite{Shah17}.
Unfortunately, MNW has no natural equivalent for chores, and whether an EF1 and PO allocation of chores always exists has remained a major open question. 

To make progress in resolving this problem, we look towards restricted families of utility functions. An example is the class of \emph{binary} utilities, in which all valuations are in $\set{0,-1}$. For allocating goods, the corresponding class of $\set{0,1}$-utilities is interesting and well-understood~\cite{barman2018greedy,halpern2020fair}. But for allocating chores, this class is trivial: first allocate any chore for which some agent has utility $0$ to such an agent; then all agents have utility $-1$ for all remaining chores, and we can allocate them as equally as possible to obtain an EF1 + PO allocation. 

A larger class is that of \emph{bivalued} utilities, where all valuations are in $\set{a,b}$, for some fixed $0 > a > b$. The corresponding class for goods (with $0 < a < b$) has already received significant attention in the literature~\cite{aziz2020random,garg2021computing,akrami2021nash}, where it has been used to achieve fairness guarantees stronger than EF1~\cite{amanatidis2021maximum}. This class seems interesting for practical applications: when eliciting agent preferences, it is often cumbersome for agents to submit exact numerical utilities. Instead, it is much easier to ask each agent to classify chores into easy and difficult ones, with an interface familiar from approval voting. Then, one can fix reasonable values of $a$ and $b$, and assume that all agents have utility $a$ for the chores they consider to be easy and $b$ for the ones they consider to be difficult. 

For our results, scaling an agent's utilities multiplicatively makes no difference. Hence, bivalued utilities for chores can also be thought of as having utilities in $\set{-1,-p}$ for some number $p = \frac{b}{a} > 1$ (or $\set{1,p}$ for goods). Our main contribution is to show that EF1 and PO allocations of chores always exist under bivalued utilities, and that such an allocation can be found in polynomial time. 
We obtain this result via an algorithm based on Fisher markets. Our algorithm borrows some ideas from the existing Fisher-market-based algorithm for finding an EF1 and PO allocation of goods~\cite{barman2018finding,garg2021computing}, but combines it with  a more intricate analysis and new techniques that are key to making the algorithm work for chores. 
In simultaneous independent work, \citet{bivaluedAAAI} obtained the same result, also via Fisher markets.

\subsection{Maximin Share Fairness (MMS)}

In addition to envy-freeness up to one good (EF1), we consider another popular relaxation of envy-freeness called \emph{maximin share fairness} (MMS)~\cite{Bud11}. This notion wants to give each agent at least as much utility as the maximum that the agent can achieve by partitioning the items into $n$ bundles and receiving her least preferred bundle from that partition. For general additive valuations, an MMS allocation may not exist, both for goods \citep{kurokawa2018fair} and for chores \citep{aziz2017algorithms}. Thus, we again turn to restricted utility classes that allow us to guarantee MMS.

We first consider \emph{personalized bivalued utilities}, where the valuations of each agent $i$ for the chores (resp., goods) lie in $\set{-1,-p_i}$ (resp., $\set{1,p_i}$) for some $p_i > 1$. In contrast to bivalued utilities, the value $p_i$ can differ between agents. To elicit such valuations, one can ask each agent $i$ to first partition the chores into easy and difficult ones (resp., goods into ordinary and preferred ones) using an approval interface, and then submit a number $p_i$ indicating how many easy chores they would do instead of a single difficult one (resp., how many ordinary goods they would be willing to take in place of a single preferred one).
We show that for personalized bivalued utilities, for both goods and chores, an allocation satisfying MMS always exists and can be computed in polynomial time, provided that $p_i$ is an \emph{integer} for each agent $i$. 
Integrality would be the natural outcome of the aforementioned elicitation.
Whether MMS can be guaranteed for non-integral $p_i$ remains an open question.
For (non-personalized) bivalued utilities (with integer $p$), we show that we can compute in polynomial time an MMS allocation that is also PO.

We also prove the existence of MMS allocations for another class of utilities, namely \emph{weakly lexicographic} utilities, for both goods and chores. 
Weakly lexicographic utilities are a natural assumption if valuations are elicited by a system that asks agents to rank the items in order of desirability, allowing for ties.
The defining assumption is that an agent likes each good (resp., dislikes each chore) more than all strictly less preferred goods (resp., more preferred chores) combined. 
Such utility functions, which we refer to as weakly lexicographic utilities, have been considered in the literature~\cite{aziz2019efficient}. We prove that for these utilities, an allocation that satisfies both MMS and PO always exists and can be computed in polynomial time. \citet{HSVX21} prove this for the special case of allocating goods under (strictly) \emph{lexicographic} utilities (in which there are no ties); our result extends theirs to allocating goods or chores under weakly lexicographic utilities. 

Both of our MMS existence results depend on a simple algorithm for computing MMS values (i.e., the utility value guaranteed by the MMS property on a given instance). Computing these values is NP-hard for both goods and chores under general additive valuations (being a special case of the \textsc{3-Partition} problem~\citep[p.~224]{GJ79}), but we show that it can be done in polynomial time for \emph{factored} utility functions, which includes both personalized bivalued and weakly lexicographic utilities as special cases. A utility function is factored if the non-zero utility values, say $p_1,\ldots,p_k$, that it uses are such that $p_{j+1}$ is an integer multiple of $p_j$ for each $j \in [k-1]$; 
examples are $\{1,2,6,12\}$-valuations and $\{0,-1, -5, -45\}$-valuations.

\Cref{fig:hasse} shows the utility classes that we study, together with inclusion relationships and relevant results, both known and new.

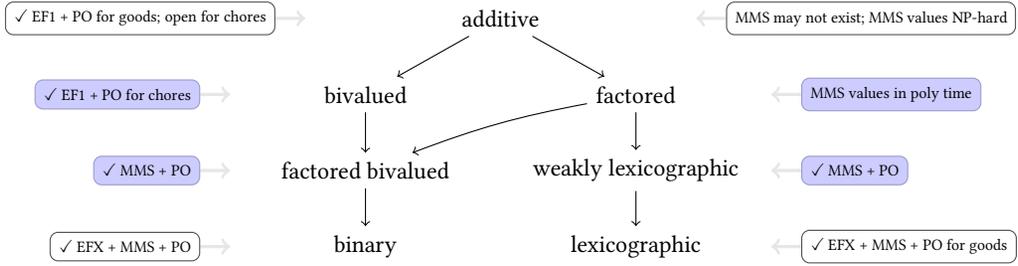
\begin{figure}
\centering
\begin{tikzpicture}
	[
	valuation class/.style={font=\small},
	known result/.style={font=\tiny, draw=black!70, rounded corners=3pt},
	new result/.style={font=\tiny, draw=blue!50!black!40!white, fill=blue!20!white, rounded corners=3pt},
	right col/.style={anchor=west},
	left col/.style={anchor=east},
	result pointer/.style={->, very thick, draw=black!10}]
	\node[valuation class] (add) at (3,0) {additive};
	\node[valuation class] (biv) at (1.2,-1) {bivalued};
	\node[valuation class] (fact) at (4.8,-1) {factored};
	\node[valuation class] (fact-biv) at (1.2,-2) {factored bivalued};
	\node[valuation class] (wolex) at (4.8,-2) {weakly lexicographic};
	\node[valuation class] (binary) at (1.2,-3) {binary};
	\node[valuation class] (lex) at (4.8,-3) {lexicographic};
	
	\path[->] (add)      edge (biv)
					     edge (fact)
			  (biv)      edge (fact-biv)
			  (fact-biv) edge (binary)
			  (fact)     edge (wolex)
			             edge[bend right=5] (fact-biv)
			  (wolex)    edge (lex);
	
	\draw[result pointer] (6,0) -- (5.6,0);
	\draw[result pointer] (0,0) -- (0.4,0);	
	\foreach \y in {-1,-2,-3} {
		\draw[result pointer] (7,\y) -- (6.6,\y);
		\draw[result pointer] (-1,\y) -- (-0.6,\y);
	}
	
	\node[known result, left col] at (0,0) (fact-mms-comp) {$\smash\checkmark$ EF1 + PO for goods; open for chores};
	\node[new result, left col] at (-1,-1) (fact-mms-comp) {$\smash\checkmark$ EF1 + PO for chores};
	\node[new result, left col] at (-1,-2) (fact-mms-comp) {$\smash\checkmark$ MMS + PO};
	\node[known result, left col] at (-1,-3) (fact-mms-comp) {$\smash\checkmark$ EFX + MMS + PO};
	
	\node[known result, right col] at (6,0) (fact-mms-comp) {MMS may not exist; MMS values NP-hard};		  
	\node[new result, right col] at (7,-1) (fact-mms-comp) {MMS values in poly time};
	\node[new result, right col] at (7,-2) (fact-mms-comp) {$\smash\checkmark$ MMS + PO};
	\node[known result, right col] at (7,-3) (fact-mms-comp) {$\smash\checkmark$ EFX + MMS + PO for goods};
\end{tikzpicture}
\caption{Hasse diagram of valuation classes and results. Shaded blue nodes are new results of this paper, boxed results are known. Checkmarks ($\smash\checkmark$) denote existence results, which all come with polynomial-time algorithms. Results hold for both goods and chores unless otherwise indicated.}
\label{fig:hasse}
\end{figure}

\subsection{Related Work}\label{sec:related-work}

Let us summarize a few related threads of work on fair allocation of goods and chores to better contextualize our contributions. 

\medskip\noindent\textbf{Fisher market.} As mentioned earlier, we achieve our main result --- an EF1 + PO allocation of chores with bivalued utilities --- using the framework of Fisher markets and competitive equilibria. Fisher markets are typically studied for items that are \emph{divisible}, i.e., that can be portioned out fractionally between the agents. In this case, a Fisher market equilibrium allocation exists and is EF + PO~\cite{Var74}. For goods, these allocation happen to be those that maximize the Nash welfare, and they can be computed in strongly polynomial time~\cite{devanur2008market,orlin2010improved}. For chores, the set of equilibria has a more intricate structure~\cite{bogomolnaia2017competitive} and their computation is an open question~\cite{branzei2019algorithms}; \citet{boodaghians2022polynomial} design an FPTAS for this problem. One issue with chore allocation is that neither minimizing nor maximizing the product of agents' costs for their assigned bundles (the equivalent of the Nash welfare objective for goods) yields a desirable allocation. However, \citet{bogomolnaia2017competitive} show that maximizing this objective \emph{subject to PO} yields one of the aforementioned equilibria. Unfortunately, the natural analog of this rule for indivisible chores fails EF1, even for bivalued utilities.\newcommand{\mnw}[1]{\underline{#1}}\footnote{An example with $4$ agents and $8$ items has valuations 
$(-4, -4, -1, -1, \mnw{-1}, \mnw{-1}, \mnw{-1}, \mnw{-1})$,
$(-4, -4, \mnw{-1}, \mnw{-1}, -4, -4, -4, -4)$,
$(-4, \mnw{-4}, -4, -4, -1, -1, -4, -4)$, and
$(\mnw{-4}, -4, -4, -4, -4, -4, -1, -1)$, 
where underlined entries indicate items allocated by the rule described in the text. Under this allocation, the first agent envies the second, even up to one item.}
\citet{barman2018finding} adapt Fisher markets to indivisible goods. They use this framework to show that an EF1 + PO allocation can be found in pseudo-polynomial time. \citet{garg2021computing} improve the running time to strongly polynomial when each agent has at most polynomially many utility levels across all bundles of goods. 
The Fisher market approach has also been used to obtain efficient allocations that are proportional up to one item (PROP1) for both goods \citep{barman2019proximity} and chores \citep{branzei2019algorithms}.

\medskip\noindent\textbf{Factored bivalued utilities and max Nash welfare.} The special case of bivalued utilities in which the utility values lie in $\set{a,b}$ for $|a| < |b|$ and $b/a$ is an \emph{integer} (which we refer to as factored bivalued utilities) has been studied in the context of allocating goods. The maximum Nash welfare (MNW) rule is NP-hard to compute for general additive utilities~\cite{CKMP+19}, while \citet{barman2018greedy} show that it can be computed in polynomial time for binary ($\set{0,1}$) utilities. For bivalued utilities, its computability was an open question until recently when \citet{akrami2021nash} established a surprising dichotomy: it is polynomial-time computable when $b/a$ is an integer (factored bivalued utilities) but NP-hard to compute when $a$ and $b$ are coprime.

\medskip\noindent\textbf{Existential results in restricted cases.} Our approach of resolving the question of EF1+PO allocation of chores --- open for general additive utilties --- under restricted settings is reminiscent of recent advances that achieve similar goals for other open questions. For example, for allocating goods, envy-freeness up to any good (EFX)~\cite{CKMP+19} is a fairness property stronger than EF1, which demands that it be possible to remove envy between any two agents by removing \emph{any} good from the bundle of the envied agent. It is an open question whether an EFX allocation of goods always exists for general additive utilities, and recent advances has resolved this positively under restricted cases of bivalued utilities~\cite{amanatidis2021maximum}, identical utilities~\cite{plaut2020almost}, and three agents with general additive utilities~\cite{chaudhury2020efx}.

\medskip\noindent\textbf{MMS.} For allocating goods, \citet{kurokawa2018fair} show that there exists an instance with additive utilities in which no allocation satisfies MMS. This motivates two threads of work. One, similarly to our work, focuses on establishing the existence (and sometimes efficient computability) of MMS allocations under restricted utility classes such as utility functions with identical multisets~\cite{BL16}, (strictly) lexicographic utilities~\cite{HSVX21}, and ternary ($\set{0,1,2}$) utilities~\cite{amanatidis2021maximum}. Also, note that factored bivalued utilities include $\set{1,2}$-utilities as a special case, and, since we argued in the introduction that $0$ utilities can be easily addressed for chores, our MMS result in this case mirrors that of \citet{amanatidis2021maximum}. The other thread focuses on approximating the MMS guarantee for general additive utilities: the best known multiplicative approximations are (slightly better than) $3/4$ for goods~\cite{garg2021improved} and $9/11$ for chores~\cite{huang2021algorithmic}.
 \section{Preliminaries}\label{sec:prelim}

For $k \in \N$, define $[k] = \set{1,\ldots,k}$. 

\medskip\noindent\textbf{Instances:} A \emph{fair division instance} is given by $I = (\agents,\items,\vv)$, where $\agents = [n]$ is a set of $n$ agents, $\items$ is a set of $m$ indivisible items, and $\vv = (v_1,\ldots,v_n)$ is the utility profile with $v_i : \items \to \R$ being the utility function of agent $i$ and $v_i(r)$ indicating $i$'s utility for item $r$. 

In this work, we assume that either all items are \emph{goods} for all agents (i.e., $v_i(r) \ge 0$ for all $i \in \agents$ and $r \in \items$), in which case we refer to $I$ as a \emph{goods division} instance, or all items are \emph{chores} for all agents (i.e., $v_i(r) \le 0$ for all $i \in \agents$ and $r \in \items$), in which case we refer to $I$ as a \emph{chore division} instance.

We focus our attention to the class of additive utility functions, in which the utility of agent $i$ for a set of items $S \subseteq \items$ is given by, with slight abuse of notation, $v_i(S) = \sum_{r \in S} v_i(r)$. We are interested in the following subclasses of additive utilities. Let $v$ denote an additive utility function over a set of items $\items$ in a goods division or chore division instance. 

\begin{definition}[Factored utilities]\label{def:factored}
	We say that a utility function $v: \items \to \set{0,p_1,\ldots,p_k} \subset \Z$ is \emph{factored} if $p_j$ divides $p_{j+1}$ (i.e., $p_{j+1} = q \cdot p_j$ for some $q \in \N_{> 0}$) for each $j \in [k-1]$. 
\end{definition}

\begin{definition}[Weakly lexicographic utilities]\label{def:wolex}
	We say that $v$ is \emph{weakly lexicographic} if there is a partition $(L_1,\ldots,L_k)$ of $\items$ with
	\begin{enumerate}
		\item $\forall i \in [k]$ and $r,r' \in L_i$, we have $|v(r)| = |v(r')| > 0$, and
		\item $\forall i \in [k]$ and $r \in L_i$, we have $\abs{v(r)} > \abs{\sum_{r'\in L_{i + 1} \cup \ldots \cup L_{k}}v(r')}$.
	\end{enumerate}
	Further, if $k=m$, then we say that $v$ is (strictly) lexicographic. 
\end{definition}

Weakly lexicographic utilities can be seen as a special case of factored utilities, as we may assume that $|v_i(r)|$ is a power of $m$. The following lemma shows that we can make that assumption without changing the ordinal preferences over bundles. 

\begin{lemma}\label{lem:wolex-factored}
	Let $v$ be a weakly lexicographic utility function over a set of items $\items$. Then, there exists a weakly lexicographic factored utility function $v'$ given by $v' : \items \to \set{1,m,m^2,\ldots}$ for goods or $v' : \items \to \set{-1,-m,-m^2,\ldots}$ for chores such that $v(S) \le v(S') \Leftrightarrow v'(S) \le v'(S')$ for all $S,S' \subseteq \items$. 
\end{lemma}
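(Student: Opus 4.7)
The plan is to construct $v'$ directly from the partition $(L_1, \ldots, L_k)$ underlying $v$ and to verify it has the required properties. I would set $v'(r) = \sigma \cdot m^{k-i}$ for every $r \in L_i$, where $\sigma = +1$ in the goods case and $\sigma = -1$ in the chores case. This places the image of $v'$ inside $\{1, m, \ldots, m^{k-1}\}$ (or its negation), and consecutive absolute values differ by the integer factor $m$, so $v'$ is factored in the sense of Definition \ref{def:factored}.

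To see that $v'$ is itself weakly lexicographic with the same partition, I need to check the second clause of Definition \ref{def:wolex}, i.e., for each $i \in [k-1]$, that $m^{k-i} > \sum_{j=i+1}^{k} |L_j| \cdot m^{k-j}$. The key observation is that every $L_j$ with $j > i$ satisfies $|L_j| \le m - 1$, since any item of the non-empty part $L_i$ lies outside $L_j$. Hence
\[
\sum_{j=i+1}^{k} |L_j| \cdot m^{k-j} \;\le\; (m-1) \sum_{\ell=0}^{k-i-1} m^{\ell} \;=\; m^{k-i} - 1 \;<\; m^{k-i},
\]
which is the desired strict inequality. (The degenerate cases $m=1$ or $k=1$ collapse to a constant $v'$ and are immediate.)

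It then remains to verify that $v$ and $v'$ rank bundles identically. I would factor this through a general observation about weakly lexicographic utilities: for any such $w$ with partition $(L_1,\ldots,L_k)$ and fixed sign, $w(S) \ge w(S')$ is determined by the lexicographic comparison of the integer tuples $(|S \cap L_1|, \ldots, |S \cap L_k|)$ and $(|S' \cap L_1|, \ldots, |S' \cap L_k|)$, with the direction flipped in the chores case. This is the standard telescoping argument: at the first coordinate $i^*$ where the tuples differ, the magnitude of the contribution from level $i^*$ alone is at least one level-$i^*$ item value, while the aggregate magnitude of the contributions from all later levels is strictly less than that, precisely by the second clause of Definition \ref{def:wolex}. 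Applied to both $v$ and $v'$, which share their partition and their per-level signs, this yields the claimed ordinal equivalence.

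The main obstacle is the strict inequality in the second paragraph: using base $m$ is only barely enough, and it breaks down without the non-emptiness assumption on the parts (which is what forces $|L_j| \le m-1$ for $j > i$). Everything else, including the ordinal-equivalence argument, is a routine bookkeeping exercise once the structural observation about weakly lexicographic utilities is isolated.
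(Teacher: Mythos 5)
Your proof is correct and uses the paper's approach (assign a power of $m$ to each level of the partition), but you set $v'(r) = m^{k-i}$ for $r \in L_i$ so that $L_1$ remains the dominant level, consistent with \Cref{def:wolex}; the paper's proof as written assigns $m^i$, which inverts the level ordering and would actually reverse the induced preference over bundles, so your indexing is the intended one. You also explicitly verify that $v'$ is itself weakly lexicographic via the observation that $|L_j| \le m-1$ for $j>i$ (and then a geometric-sum bound), a step the paper's proof leaves implicit.
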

\begin{proof}
	Let $(L_1,\ldots,L_k)$ be the partition of $\items$ under $v$ as in \Cref{def:wolex}. Let $S, S' \subseteq \items$ be two arbitrary subsets of items that $v(S) \le v(S')$. Suppose $v$ is a valuation function for goods.
	
	If $v(S) = v(S')$, then for all $i \in [k],$ $\len{S \cap L_i} = \len{S' \cap L_i}$.
	Therefore, $v'(S) = \sum_{i \in [k]} \len{S \cap L_i} \cdot m^i =
	\sum_{i \in [k]} \len{S' \cap L_i} \cdot m^i = v'(S')$. 
	
	If $v(S) < v(S')$, then there exists an $i \in [k]$, such that $\len{S \cap L_{i}} < \len{S' \cap L_{i}}$, and for all $ i' > i$, $\len{S \cap L_{i'}} = \len{S' \cap L_{i'}}$. Then,
	\begin{equation*}
		\scriptsize
		v'(S') - v'(S) = \sum_{j \in [i]} \left(\len{S' \cap L_j} - \len{S \cap L_j}\right) \cdot m^j \ge m^i - \sum_{j \in [i - 1]}  \len{S \cap L_j} \cdot m^j  \ge m^i - (m - 1) \cdot m^{i - 1} > 0.
	\end{equation*}
	The proof for the chores case is similar.
\end{proof}
 
\begin{definition}[Bivalued utilities]
	We say that $v$ is \emph{bivalued} if there are non-zero $a,b \in\R$ such that $v(r) \in \set{a,b}$ for all $r \in \items$. In case of goods, we will use the convention $0 < a < b$, and in case of chores, we will use the convention $0 > a > b$. Further, if $a$ divides $b$, we say that $v$ is \emph{factored bivalued}. 
\end{definition}

We say that a goods division or chore division instance has factored (resp., weakly lexicographic) utilities if every agent has a factored (resp., weakly lexicographic) utility function. We say that the instance has bivalued utilities if all agents have bivalued utilities for some common $a,b$ (i.e., there exist $a,b$ such that $v_i(r) \in \set{a,b}$ for all $i,r$). We say that the instance has \emph{personalized bivalued} utilities if each agent $i$ has a bivalued utility function (perhaps with personalized $a_i,b_i$).\footnote{Personalized bivalued utilities are a special case of what \citet{garg2021fair} call $k$-ary utilities.}

\medskip\noindent\textbf{Allocations:} An \emph{allocation} $\xx = (\xx_1, \dots, \xx_n)$ is a collection of bundles $\xx_i \subseteq \items$, one for each agent $i\in\agents$, such that the bundles are pairwise disjoint ($\xx_i \cap \xx_j = \emptyset$ for all distinct $i,j \in \agents$) and every item is allocated ($\bigcup_{i\in\agents} \xx_i = \items$).

\medskip\noindent\textbf{Fairness and Efficiency Desiderata:} 
We study two prominent fairness notions for the allocation of indivisible items, known as envy-freeness up to one item~\cite{LMMS04,Bud11,CKMP+19} and maximin share fairness~\cite{Bud11,kurokawa2018fair}. These are respectively relaxations of the classical notions of envy-freeness and of proportionality. We give definitions that work for both goods and chores \cite{aziz2022fair}.

\begin{definition}[Envy-freeness up to one item]
	An integral allocation $\xx$ is said to be \emph{envy-free up to one item} (EF1) if, for every pair of agents $i, j \in \agents$ such that $\xx_i \cup \xx_{j} \ne \emptyset$, there exists an item $r \in \xx_i \cup \xx_{j}$ such that $v_i(\xx_i \setminus \set{r}) \ge v_i(\xx_{j} \setminus \set{r})$. 
\end{definition}

In a goods division problem, this reduces to $v_i(\xx_i) \ge v_i(\xx_{j} \setminus \set{g})$ for some good $g \in \xx_{j}$ (a good removed from the bundle of agent $j$), while in a chore division problem, it reduces to $v_i(\xx_i \setminus \set{c}) \ge v_i(\xx_{j})$ for some $c \in \xx_i$ (a chore removed from the bundle of agent $i$).

\begin{definition}[Maximin share fairness]
	For $k \in \N$, let $\partitions^k(\items)$ be the set of all partitions of $\items$ into $k$ bundles. For agent $i \in \agents$, let
	\[
	\textstyle \MMS^k_i = \max_{(S_1,\ldots,S_k) \in \partitions^k(\items)} \min_{t \in [k]} v_i(S_t).
	\]
	Note that this is the maximum utility she can obtain by partitioning the items into $k$ bundles and receiving the least valued bundle. We refer to an optimal partition $(S_1,\ldots,S_k)$ in the above equation as a maximin $k$-partition for agent $i$. The \emph{maximin share} of agent $i \in \agents$ is defined as $\MMS^n_i$. For simplicity of notation, we write $\MMS^n_i$ as $\MMS_i$ and refer to a maximin $n$-partition as a \emph{maximin partition}. An allocation $\xx$ is said to be maximin share fair (MMS) if each agent receives at least as much utility as her maximin share, i.e., if $v_i(\xx_i) \ge \MMS_i$ for each agent $i \in \agents$. 
\end{definition}

Finally, we define a prominent notion of economic efficiency. 
\begin{definition}[Pareto optimality]
	We say that allocation $\xx$ is \emph{Pareto dominated} by allocation $\xx'$ if $v_i(\xx_i) \le v_i(\xx'_i)$ for every agent $i \in \agents$ and at least one inequality is strict. An allocation $\xx$ is said to be \emph{Pareto optimal} (PO) if it is not Pareto dominated by any allocation.
\end{definition}

 \section{EF1 + PO for Bivalued Chores}
\label{sec:ef1}

In this section, we present a polynomial-time algorithm that finds an EF1 and PO allocation for chore division instances with bivalued utilities, thereby also establishing the existence of such allocations. Specifically, we scale agent utilities such that for some $p > 1$, the utility of each agent $i$ for every chore $c$ is $v_i(c) \in \set{-1,-p}$. Further, if some agent $i$ has $v_i(c) = -p$ for all chores $c$, then we will scale this so that $v_i(c) = -1$ for all chores $c$. This will ensure that each agent values at least one chore at $-1$. Recall that scaling the utilities of any agent does not affect whether an allocation is EF1 or PO. 

Our algorithm builds on the algorithm by \citet{barman2018finding} for finding an EF1 and PO allocation of goods. Their algorithm starts with a PO allocation and then moves items around until it is EF1, while maintaining that the allocation is PO at every step. Pareto optimality is maintained in the algorithm by ensuring that the allocation remains an equilibrium in a \emph{Fisher market}. Thus, we start by introducing some basic concepts about Fisher markets.

\subsection{Fisher Markets for Chore Division}

A \emph{price vector} $\price$ assigns a price $\price(c) > 0$ to each chore $c$. For a subset $S \subseteq \items$ of chores, we write $\price(S) = \sum_{c \in S} \price(c)$. Given this price vector, the \emph{pain per buck} (PB) ratio of agent $i$ for chore $c$ is defined as $\PB_i(c) = \frac{\abs{v_i(c)}}{ \price(c)}$, and the \emph{minimum pain per buck} (MPB) ratio of agent $i$ is defined as $\MPB_i = \min_{c \in \items} \PB_i(c)$. A chore $c$ with $\PB_i(c) = \MPB_i$ is called an \emph{MPB chore} for agent $i$.
\begin{definition}
	A pair $(\xx,\price)$ of an allocation $\xx$ and a price vector $\price$ is a (Fisher market) \emph{equilibrium}\footnote{This is sometimes called a \emph{quasi-equilibrium}, because we do not specify an exogenous budget for each agent.}
if each agent is allocated only her MPB chores, i.e., if $\PB_i(c) = \MPB_i$ for all $i \in \agents$ and all $c \in \xx_i$.
\end{definition}

We say that $\xx$ is an equilibrium allocation if $(\xx,\price)$ is an equilibrium for some price vector $\price$. The following is known to hold by the so-called first welfare theorem.

\begin{proposition}
	Every equilibrium allocation is Pareto optimal.
\end{proposition}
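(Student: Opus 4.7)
\medskip

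The plan is to run the standard first welfare theorem argument, adapted for chores. Let $(\xx,\price)$ be an equilibrium and suppose, for contradiction, that some allocation $\yy$ Pareto dominates $\xx$, i.e., $v_i(\yy_i) \ge v_i(\xx_i)$ for every agent $i$, with strict inequality for at least one agent $i^*$. Since all valuations are non-positive, this is equivalent to $|v_i(\yy_i)| \le |v_i(\xx_i)|$ for all $i$, strictly for $i^*$.

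The first key observation I will use is the per-agent comparison between cost and spending. For every agent $i$ and every chore $c \in \items$, the definition of $\MPB_i$ gives $|v_i(c)| = \PB_i(c) \cdot \price(c) \ge \MPB_i \cdot \price(c)$; summed over $c \in \yy_i$, this yields $|v_i(\yy_i)| \ge \MPB_i \cdot \price(\yy_i)$. At the equilibrium allocation $\xx$, the equilibrium condition says each chore $c \in \xx_i$ is an MPB chore for $i$, so this inequality becomes an equality: $|v_i(\xx_i)| = \MPB_i \cdot \price(\xx_i)$.

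Combining these with Pareto dominance, for each $i$
\[
    \MPB_i \cdot \price(\yy_i) \;\le\; |v_i(\yy_i)| \;\le\; |v_i(\xx_i)| \;=\; \MPB_i \cdot \price(\xx_i),
\]
with strict inequality in the middle for $i = i^*$. Since valuations are nonzero ($v_i(c) \in \set{-1,-p}$), each $\MPB_i > 0$, so dividing gives $\price(\yy_i) \le \price(\xx_i)$ for all $i$, strictly for $i^*$. Summing over agents,
\[
    \sum_{i \in \agents} \price(\yy_i) \;<\; \sum_{i \in \agents} \price(\xx_i).
\]
This is the desired contradiction: both $\xx$ and $\yy$ are allocations of the same item set $\items$, so each side must equal $\price(\items) = \sum_{c \in \items} \price(c)$.

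There is no real obstacle here; the only subtlety is keeping the direction of the absolute-value inequalities straight (less cost is better), and noting that the strict improvement at $i^*$ must translate to strictly less spending because $\MPB_{i^*} > 0$, which is where integrality of $\price$ being positive and valuations being nonzero is used. I would state the proof in exactly the order above: set up contradiction, derive $|v_i(\yy_i)| \ge \MPB_i \cdot \price(\yy_i)$ in general and equality at equilibrium, chain the inequalities, and sum to contradict conservation of total price.
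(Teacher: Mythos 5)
Your proof is correct and takes essentially the same route as the paper: both are the first welfare theorem argument resting on the inequality $|v_i(S)| \ge \MPB_i \cdot \price(S)$ with equality at equilibrium, followed by dividing by $\MPB_i > 0$ and summing. The paper phrases the contradiction as $\xx$ maximizing the weighted welfare $\sum_i v_i(\xx_i)/\MPB_i$, while you phrase it as conservation of total spending $\price(\items)$; since the equilibrium condition makes $v_i(\xx_i)/\MPB_i = -\price(\xx_i)$, these are two sides of the same coin.
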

\begin{proof}
	Let $(\xx, \price)$ be an equilibrium. Suppose $c \in \xx_i$. Then $\MPB_i = \PB_i(c)$ and hence, remembering that $v_i(c)$ is negative, we have $v_i(c) / \MPB_i = v_i(c) / \PB_i(c) = -\price(c)$. On the other hand, if $c \not \in \xx_i$, then $\MPB_i \le \PB_i(c)$ and hence $v_i(c) / \MPB_i \le v_i(c) / \PB_i(c) = -\price(c)$.
	
	From this it follows that if $c \in \xx_i$, then we have $v_i(c)/\MPB_i \ge v_j(c)/\MPB_j$ for all $j \in \agents$.
	Hence $\xx$ maximizes the value $\sum_{i \in \agents} v_i(\xx_i)/\MPB_i$. But any Pareto improvement over $\xx$ would strictly increase this value (noting that $\MPB_i > 0$), so $\xx$ must be Pareto optimal.
\end{proof}

In fact, a stronger statement is true: every equilibrium allocation is \emph{fractionally Pareto optimal} (fPO), which means it is not even Pareto dominated by a \emph{fractional} allocation \citep{barman2018finding}. Moreover, a second welfare theorem shows that every fPO allocation arises as an equilibrium allocation \citep{barman2018finding}. For the special case of bivalued utilities, we prove in \Cref{sec:fpo} that PO and fPO are equivalent. 
\begin{theorem}
	\label{thm:fpo}
	Given a goods or chore division problem with bivalued utilities, an allocation $\xx$ is Pareto optimal	 if and only if it is fractionally Pareto optimal.
\end{theorem}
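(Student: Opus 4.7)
The plan is to prove the contrapositive of the nontrivial direction; the implication fPO $\Rightarrow$ PO is immediate, since any integral Pareto dominator of $\xx$ is a special case of a fractional one. So let $\xx$ be not fPO and I will produce an integral Pareto improvement. By the second welfare theorem (equivalently, by LP duality applied to the MPB feasibility system from the preceding discussion), $\xx$ fails to be fPO precisely when the exchange graph contains a \emph{negative cycle}: a sequence of agents $j_0, j_1, \ldots, j_{k-1}, j_k = j_0$ and chores $c_t \in \xx_{j_t}$ with $\prod_t \abs{v_{j_{t+1}}(c_t)} < \prod_t \abs{v_{j_t}(c_t)}$. Decomposing any closed negative walk into simple cycles lets me restrict to simple cycles, and I fix one of minimum length $k$.

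Next I observe that the integral cyclic swap passing each $c_t$ from $j_t$ to $j_{t+1}$ is a weak Pareto improvement exactly when $\abs{v_{j_t}(c_{t-1})} \le \abs{v_{j_t}(c_t)}$ for every $t$, and strictness at some $t$ is automatic from the cycle being negative (if every such inequality were an equality, the cycle product would equal $1$ rather than being strictly less). In the bivalued regime $\abs{v_i(c)} \in \set{1,p}$, the only way the inequality can fail at an index $t^*$ is the ``bad'' configuration $\abs{v_{j_{t^*}}(c_{t^*-1})} = p$ and $\abs{v_{j_{t^*}}(c_{t^*})} = 1$. The crux of the proof, and the place I expect most difficulty, is a shortening lemma stating that no bad agent can appear in a minimum-length simple negative cycle.

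To prove the shortening lemma, assume $k \ge 3$ and that $j_{t^*}$ is bad; I excise $j_{t^*}$ and reuse the chore $c_{t^*-1} \in \xx_{j_{t^*-1}}$ on the new direct edge $j_{t^*-1} \to j_{t^*+1}$, obtaining another simple cycle of length $k-1$. Its ratio $\abs{v_{j_{t^*+1}}(c_{t^*-1})}/\abs{v_{j_{t^*-1}}(c_{t^*-1})}$ is at most $p/\abs{v_{j_{t^*-1}}(c_{t^*-1})}$ because the numerator lies in $\set{1,p}$, whereas the product of the two replaced ratios equals $p \cdot \abs{v_{j_{t^*+1}}(c_{t^*})}/\abs{v_{j_{t^*-1}}(c_{t^*-1})} \ge p/\abs{v_{j_{t^*-1}}(c_{t^*-1})}$ since $\abs{v_{j_{t^*+1}}(c_{t^*})} \ge 1$, so the shortened cycle still has product $<1$, contradicting minimality. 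The base case $k=2$ I handle by inspecting the three shapes of a negative $2$-cycle (both ratios $1/p$; or one ratio $1/p$ and one ratio $1$ in either order), checking each time that the outright swap of $c_0$ and $c_1$ is an integral Pareto improvement. Putting the pieces together, the minimum-length negative cycle is bad-agent-free, and its integral cyclic swap contradicts PO, completing the contrapositive. The goods case follows by the symmetric argument: the fPO-violating cycle condition becomes $\prod \abs{v_{j_{t+1}}(g_t)} > \prod \abs{v_{j_t}(g_t)}$, a bad agent is one who would surrender a $p$-valued good for a $1$-valued one, and the shortening inequality $p \cdot v_{j_{t^*+1}}(g_{t^*-1}) \ge v_{j_{t^*+1}}(g_{t^*})$ still holds automatically because values lie in $\set{1,p}$ — which is also the precise place where the argument would break for three or more value levels, consistent with PO $\ne$ fPO beyond the bivalued class.
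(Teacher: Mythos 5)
Your proof is correct, but it takes a genuinely different route from the paper's. The paper argues directly from a fractional Pareto dominator: it fixes a dominator $\zz$ that minimizes the total amount of items transferred, observes the social-welfare increase forces a $-p \to -1$ transfer, chases a chain of $-1 \to -1$ transfers from there, and then does case analysis, using the minimality of $\zz$ to rule out the obstructing cases (where a receiving agent values the chore at $-p$, or a sub-cycle not through $i_1$ closes). You instead invoke the standard characterization (first/second welfare theorems plus shortest-path feasibility of the weight system $\log\lambda_i - \log\lambda_j \le \log(|v_j(c)|/|v_i(c)|)$): $\xx$ fails fPO iff the exchange multigraph has a negative cycle. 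You then take a shortest simple negative cycle and prove a shortening lemma to excise every ``bad'' agent, with a separate check for $2$-cycles. Both proofs have the same combinatorial core — shorten a cycle past any agent who would trade a $1$-chore for a $p$-chore, using $|v_{j_{t^*+1}}(c_{t^*-1})| \le p \le p\cdot|v_{j_{t^*+1}}(c_{t^*})|$ — but you factor the fractional-allocation bookkeeping out into the black-box negative-cycle criterion, which trades self-containment for conceptual clarity. Your version also pinpoints exactly where bivaluedness with a \emph{common} $p$ enters (the bound $|v_{j_{t^*+1}}(c_{t^*-1})| \le p$), which the paper only demonstrates post hoc via the two-agent counterexample for personalized bivalued utilities. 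One small presentational remark: the direction of the cycle characterization you actually need, not fPO implies negative cycle, follows from the first welfare theorem (no negative cycle $\Rightarrow$ potentials $\Rightarrow$ equilibrium $\Rightarrow$ fPO) rather than the second, but since both directions are standard this is a labeling nit rather than a gap.
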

Thus, the stronger efficiency guarantee fPO that Fisher markets provide does not have bite in our setting. Conversely, though, this equivalence means that any method that identifies a PO allocation for bivalued utilities must implicitly calculate an equilibrium, so in a sense the Fisher market method is the most natural approach for constructing PO allocation for bivalued utilities.

As an invariant, our algorithm will keep the considered allocation an equilibrium. Our aim is to find a fair equilibrium, by which we will mean that the prices of agents' bundles are approximately equal. This notion is an adaption to the chores case of a property introduced by \citet{barman2018finding}. 

\begin{definition}[Price envy-freeness up to one item] We say that $(\xx,\price)$ is \emph{price envy-free up to one item} (pEF1) if, for all $i,j \in \agents$ with $\xx_i \ne \emptyset$, there is a chore $c \in \xx_i$ such that $\price(\xx_i \setminus \set{c}) \le \price(\xx_{j})$.
\end{definition}

Like for the goods division case~\cite{barman2018finding}, pEF1 implies EF1.
\begin{lemma}
	\label{lem:pef1-gives-ef1}
	If $(\xx,\price)$ is a pEF1 equilibrium, then $\xx$ is EF1.
\end{lemma}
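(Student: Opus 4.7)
The plan is to translate the price-based pEF1 inequality into the utility-based EF1 inequality by exploiting the equilibrium property. The key identities to set up first are: at an equilibrium $(\xx,\price)$, for any chore $c \in \xx_i$ we have $\PB_i(c) = \MPB_i$, which rearranges (using $v_i(c) < 0$) to the equality $v_i(c) = -\MPB_i \cdot \price(c)$; and for any chore $c \notin \xx_i$, the minimality definition gives $\PB_i(c) \ge \MPB_i$, which rearranges to the inequality $v_i(c) \le -\MPB_i \cdot \price(c)$. Summing these over bundles yields $v_i(\xx_i \setminus \{c\}) = -\MPB_i \cdot \price(\xx_i \setminus \{c\})$ and $v_i(\xx_j) \le -\MPB_i \cdot \price(\xx_j)$ for any pair of agents $i,j$ and any $c \in \xx_i$.

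The main case is when $\xx_i \neq \emptyset$. Fix a pair $i,j$. By pEF1, there exists $c \in \xx_i$ with $\price(\xx_i \setminus \{c\}) \le \price(\xx_j)$. Multiplying by $-\MPB_i < 0$ flips the inequality, and chaining with the two facts above gives
\[
v_i(\xx_i \setminus \{c\}) \;=\; -\MPB_i \cdot \price(\xx_i \setminus \{c\}) \;\ge\; -\MPB_i \cdot \price(\xx_j) \;\ge\; v_i(\xx_j),
\]
which is exactly the EF1 condition for chores (removing $c$ from $i$'s bundle).

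For the case $\xx_i = \emptyset$ (where pEF1 gives no condition), EF1 between $i$ and $j$ follows trivially from the fact that all items are chores: for any $r \in \xx_j$, we have $\xx_i \setminus \{r\} = \emptyset$, so $v_i(\xx_i \setminus \{r\}) = 0 \ge v_i(\xx_j \setminus \{r\})$ (and if $\xx_j = \emptyset$ as well, then $\xx_i \cup \xx_j = \emptyset$ and the EF1 condition is vacuous). I expect no real obstacle; the only slightly subtle corner is $\xx_j = \emptyset$ with $\xx_i \neq \emptyset$, where pEF1 forces $\price(\xx_i \setminus \{c\}) \le 0$, which, since prices are strictly positive, forces $\xx_i = \{c\}$ — so $v_i(\xx_i \setminus \{c\}) = 0 = v_i(\xx_j)$ and EF1 holds. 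This handles all cases.
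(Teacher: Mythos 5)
Your proof is correct and takes essentially the same approach as the paper: translate pEF1's price inequality into a utility inequality via the equilibrium identities $v_i(c) = -\MPB_i\cdot\price(c)$ for $c\in\xx_i$ and $v_i(c) \le -\MPB_i\cdot\price(c)$ otherwise, then chain them. You phrase it with signed quantities and sign flips where the paper works with absolute values, and you spell out the boundary cases ($\xx_i=\emptyset$, $\xx_j=\emptyset$) more explicitly, but the substance is identical.
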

\begin{proof}
	Fix a pair of agents $i,j \in \agents$. We want to show that $v_i(\xx_i) \ge v_i(\xx_j)$. If $\xx_i = \emptyset$, this holds trivially. Otherwise, pEF1 indicates that there exists a chore $c \in \xx_i$ such that $\price(\xx_i \setminus \set{c}) \le \price(\xx_j)$. Then, using the definition of $\PB_i$ and $\MPB_i$, we have
	\begin{align*}
		|v_i(\xx_i \setminus \set{c})| = \MPB_i \cdot \price(\xx_i \setminus \set{c}) \le \MPB_i \cdot \price(\xx_j) \le \sum_{c' \in \xx_j} \PB_i(c') \cdot \price(c') = |v_i(\xx_j)|,
	\end{align*}
	where the first transition uses the fact that in an equilibrium allocation agent $i$ is only assigned her MPB chores. Hence, we have $v_i(\xx_i \setminus \set{c} \ge v_i(\xx_j)$, as needed.
\end{proof}
 
For $S \subseteq \items$, define $\priceone(S) = \price(S) - \max_{c \in S} \price(c)$, if $S \neq \emptyset$, and $0$ if $S = \emptyset$. We often write $\ls\in\agents$ for the \emph{least spender}, i.e., an agent $\ls \in \argmin_{i \in \agents} \price(\xx_i)$. Then we see that $(\xx, \price)$ is pEF1 if and only if $\priceone(\xx_i) \le \price(\xx_{\ls})$ for all $i \in \agents$. Let us call an agent $i \in \agents$ a \emph{violator} if $\priceone(\xx_i) > \price(\xx_{\ls})$. Thus, $(\xx, \price)$ is pEF1 if and only if no agent is a violator.

Given an equilibrium $(\xx, \price)$, we write $j \stackrel{c}{\gets} i$ if agent $i$ owns item $c$ (so $c\in\xx_i$) and $c$ is an MPB chore for $j$. Thus, if we have $j \stackrel{c}{\gets} i$ then the allocation $\xx'$ obtained from $\xx$ by transferring item $c$ from $i$ to $j$ is still an equilibrium.

\newcommand{\reachable}{\mathrel{\reflectbox{$\leadsto$}}}
\begin{definition}[MPB alternating path]
	An \emph{MPB alternating path} of length $\ell$ from $i_\ell$ to $i_0$ is a sequence $i_0 \stackrel{c_1}{\gets} i_1 \stackrel{c_2}{\gets} \cdots \stackrel{c_\ell}{\gets} i_\ell$.
\end{definition}

If there exists an MPB alternating path from $i_\ell$ to $i_0$, we write $i_0 \reachable i_\ell$. We always have $i_0 \reachable i_0$.

\subsection{Algorithm}

We now present \Cref{alg:full-algo} which computes an PO and EF1 allocation given a chore division instance with bivalued utilities.

\begin{algorithm}
	\DontPrintSemicolon
	\SetAlgoNoEnd
	\SetAlgoLined
	\SetKwProg{PhaseI}{Phase 1}{}{}
	\SetKwProg{PhaseII}{Phase 2}{}{}
	\SetKwProg{PhaseIIa}{Phase 2a}{}{}
	\SetKwProg{PhaseIIb}{Phase 2b}{}{}
	\SetKwProg{PhaseIII}{Phase 3}{}{}
	\caption{EF1 + PO for Bivalued Chores}
	\label{alg:full-algo}
	\PhaseI{Initialization}{
		Let $\xx$ be an allocation maximizing social welfare $\sum_{i\in\agents} v_i(\xx_i)$.\;
		For each $c \in \items$, let $\price_c = p \cdot \abs{ \max_{i \in \agents} v_i(c) }$\;
		\label{line:initialize-prices}
		$k \gets 1$, the number of the current iteration\;
	}
	\PhaseIIa{Reallocate chores}{
		\label{line:phase-2-start}
		\For{$\ell \in (k - 2, k - 3, \ldots, 2, 1)$}{
			\While{true}{
				\label{line:phase-2a-start}
				$i \gets$ an agent from $  \arg\max_{i\in H_\ell} \priceone(\xx_i)$\;
				$j \gets$ an agent from $  \arg\min_{j \in H_{\ell + 1} \cup \dots \cup H_{k-1}} \price(\xx_j)$\;\label{line:phase-2a-def-j}
				\eIf{$\priceone(\xx_i) > \price(\xx_j)$}{
					\label{line:phase-2a-violator-condition}
					$c \leftarrow $ any item from $\xx_i \setminus \entitled(i)$\; \label{line:non-entitled-item-exists}
					\label{line:phase-2a-transfer}
					Transfer $c$ from $i$ to $j$\;
				}{
					\textbf{break}\; \label{line:phase-2a-end}
				}
			}
		}
	}
	\PhaseIIb{Reallocate chores}{
		\While{true}{
			\label{line:phase-2b-start}
			$\ls \leftarrow$ an agent from $ \arg\min_{i\in\agents}\price(\xx_i)$ \;
			\eIf{there is an MPB alternating path $\smash{\ls \stackrel{c_1}{\gets} i_1 \stackrel{c_2}{\gets} \cdots \stackrel{c_\ell}{\gets} i_\ell}$ with $\priceone(\xx_{i_\ell}) > \price(\xx_{\ls})$}{
				\label{line:phase-2b-condition}
				Choose such a path of minimum length $\ell$\;
				\label{line:phase-2b-transfer}
				Transfer $c_\ell$ from $i_\ell$ to $i_{\ell-1}$\;
			}{
				\textbf{break}\; \label{line:phase-2b-end}
			}
		}
		\If{$\xx$ satisfies \textup{pEF1}}{
			\Return $\xx$\; \label{line:return}
		}
	}
	\PhaseIII{Price reduction}{
		\label{line:phase-3-start}
$H_k \gets \set{i \in \agents : \text{there is an agent $\ls \in \arg\min_{i\in\agents}\price(\xx_i)$ with} \ls \reachable i}$\;
		$\blacktriangleright$ Timestamp: $t_{k,b}$
		\label{line:timestamp_t_k-b}\;
		$\alpha \leftarrow \min \set{\PB_i(c)/\MPB_i : i \in H_k, c \in \bigcup_{j \in \agents\setminus H_k} \xx_j}$\;
			\label{line:setting-alpha}
		\For{$i \in H_k$} {
			$\entitled(i) \gets \xx_i$ \;
			\For{$c\in \xx_i$} {
				$\price_c \leftarrow \frac{1}{\alpha} \cdot \price_c$\;
				\label{line:update-prices}
			}
		}
		$\blacktriangleright$ Timestamp: $t_{k,a}$
		\label{line:timestamp_t_k-a}\;
		$k \gets k + 1$\;
		Start Phase 2a (i.e. go to line \ref{line:phase-2-start})\;
	}
\end{algorithm}

\begin{theorem}
	Given a chore division problem $I = (\agents, \items, \vv)$ with bivalued utilities, \Cref{alg:full-algo} finds a PO and EF1 allocation in $\poly(n, m)$ time.
	\label{theorem:bival-poly}
\end{theorem}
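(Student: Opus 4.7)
The plan is to establish three facts in sequence: (i) the algorithm maintains a Fisher market equilibrium $(\xx,\price)$ as an invariant throughout, which gives Pareto optimality via the first welfare theorem; (ii) whenever the algorithm returns on line~\ref{line:return}, the returned allocation is pEF1, so it is EF1 by \Cref{lem:pef1-gives-ef1}; and (iii) all three phases together terminate in $\poly(n,m)$ time.

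For (i), I would verify the equilibrium invariant operation by operation. Phase~1 assigns each chore $c$ to some welfare-maximizer (an agent $i$ with $|v_i(c)|$ minimal among those owning positive weight on $c$) and sets $\price_c = p\cdot|\max_j v_j(c)|$; I would check this implies $\PB_i(c)=\MPB_i$ for the owner. Every transfer in Phase~2b happens along MPB alternating paths by construction; every transfer in Phase~2a is also MPB-preserving because the non-entitled items agent $i\in H_\ell$ holds were received from lower-level agents along MPB edges during earlier iterations. Finally, in Phase~3 the scalar $\alpha$ is chosen precisely as the largest number such that rescaling the prices of $H_k$-owned chores by $1/\alpha$ leaves the chores of each $H_k$-agent at MPB while no external chore becomes MPB-preferred for an $H_k$-agent, which is exactly what the equilibrium invariant requires.

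For (ii) and (iii), I would argue that the hierarchy $H_k$ strictly grows between consecutive invocations of Phase~3. The rescaling in Phase~3 lowers the MPB ratio of every $H_k$-agent, and $\alpha$ is selected so that at the new prices some chore currently held by an agent outside $H_k$ becomes MPB-tight for an $H_k$-agent; this extends the reachability relation $\reachable$, yielding $H_{k-1}\subsetneq H_k$. Since $|\agents|=n$, Phase~3 fires at most $n$ times; after that the least-spender component has absorbed enough agents that Phase~2a/2b restore pEF1 and the algorithm returns on line~\ref{line:return}.

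The main obstacle is polynomially bounding the inner loops of Phases~2a and~2b. Two delicate points require care. First, line~\ref{line:non-entitled-item-exists} must always find a chore in $\xx_i\setminus\entitled(i)$; I would show that if $i\in H_\ell$ is a violator with respect to some $j\in H_{\ell+1}\cup\cdots\cup H_{k-1}$, then at the moment $i$ entered the hierarchy its entitled bundle had price bounded by $\price(\xx_j)$ (this is the purpose of the $\entitled$ snapshot and the Phase~3 rescaling), so any current price excess must come from items received after that timestamp. Second, I would exhibit a strictly decreasing monovariant for Phase~2b: the shortest-path rule makes transfers progress along a lexicographic measure (for instance, the multiset of distances from $\ls$ to each violator), and because bivalued utilities admit only two distinct PB ratios per agent, the set of possible prices and hence of monovariant values is polynomially bounded in $n$ and $m$. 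Combining this count with the $\le n$ Phase~3 iterations and the polynomial cost of computing MPB paths and updating prices yields the $\poly(n,m)$ running time, which together with (i) and (ii) proves the theorem.
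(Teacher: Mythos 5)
There is a genuine gap in your proposal: you claim that ``the hierarchy $H_k$ strictly grows between consecutive invocations of Phase~3, \ldots yielding $H_{k-1}\subsetneq H_k$,'' but this is not how the algorithm works and is not the invariant the proof needs. Each $H_k$ is freshly computed in iteration~$k$ as the set of agents reachable from the current least spender, and the crucial structural fact (hypothesis \hypdisjoint in the paper) is that the sets $H_1, H_2, \ldots$ are pairwise \emph{disjoint}, not nested. Termination in $\le n$ Phase~3 invocations then follows because each $H_k$ is nonempty. Establishing disjointness is the heart of the proof: one must show that when Phase~2b ends without returning, the remaining violator lies outside $H_1\cup\cdots\cup H_k$, which in turn requires showing that no agent in $H_1\cup\cdots\cup H_k$ is ever a violator at the end of Phase~2b. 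Your sketch says Phase~2a/2b ``restore pEF1'' once the component has ``absorbed enough agents,'' but gives no argument for this; in the paper it is the content of \Cref{lemma:phase2a-no-violators} and \Cref{lem:phase-2b-analysis}, and it requires non-trivial machinery, including a monotonicity result on least spending of sub-hierarchies (\Cref{lem:higher-prices-higher-ls}) and a possession-preservation lemma (\Cref{lem:possession-subset}).

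A second, smaller issue: your justification that Phase~2a transfers are MPB-preserving --- that the non-entitled items of $i\in H_\ell$ ``were received from lower-level agents along MPB edges during earlier iterations'' --- does not hold up (the item may have been in $\xx_i$ from the start, or transferred during Phase~2b), and it also does not immediately give what is needed, since MPB-ness of $c$ for the \emph{giver} $i$ is not the same as MPB-ness for the \emph{receiver} $j$. The clean argument uses the price/MPB invariants specific to bivalued chores: every non-entitled item has price exactly $p$ (hypothesis \hypprices) and every agent in $H_1\cup\cdots\cup H_{k-1}$ has $\MPB = 1$ (hypothesis \hypmpb), so $\PB_j(c)=1=\MPB_j$. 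Similarly, well-definedness of \cref{line:non-entitled-item-exists} is not a matter of ``price excess must come from items received after that timestamp''; it requires showing $\xx_i^t\supsetneq\entitled(i)$, which again rests on the load-balancing \Cref{lem:higher-prices-higher-ls}. Your outer structure (equilibrium $\Rightarrow$ PO, pEF1 $\Rightarrow$ EF1, bound iterations) matches the paper, and your polynomial bound on Phase~2b via a lexicographic monovariant and the observation that $|\mathcal{U}_i|\le m^2$ for bivalued utilities is right; but the disjointness of the $H_k$ and the no-violator guarantees for $H_1\cup\cdots\cup H_k$ are the essential missing pieces.
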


\newcommand{\hypdisjoint}{(H1)\xspace}
\newcommand{\hyptransferable}{(H2)\xspace}
\newcommand{\hypnotviolator}{(H3)\xspace}
\newcommand{\hypentitled}{(H4)\xspace}
\newcommand{\hypalpha}{(H5)\xspace}
\newcommand{\hypprices}{(H6)\xspace}
\newcommand{\hypmpb}{(H7)\xspace}

The algorithm starts with an $(\xx,\price)$ that is guaranteed to be an equilibrium.  Then, it proceeds in \emph{iterations}. The value $k$, maintained by the algorithm, signifies the current iteration number. In each iteration $k$, the algorithm goes through Phases 2a, 2b, and 3 (except that in the final iteration the algorithm terminates after Phase 2b). During Phases 2a and 2b, the algorithm keeps the price vector $\price$ fixed and updates the allocation $\xx$, and in the subsequent Phase 3, it then keeps the allocation $\xx$ fixed, identifies a certain set $H_k$ of agents and updates the price vector $\price$ by reducing the prices of the chores allocated to $H_k$ by a multiplicative factor $\alpha$. 

A key property of our algorithm is that it ensures that the sets $H_k$ are disjoint across different iterations. This helps prove that our algorithm always terminates after at most $n$ iterations, since each $H_k$ contains at least one agent. This property differentiates our algorithm from the algorithm of \citet{barman2018finding} for allocating goods and requires us to introduce Phase 2a, which is not present in their algorithm. Phase 2b, on the other hand, is very similar to Phase 2 in their algorithm. 

Another key ingredient of our algorithm is that once an agent $i$ is assigned to a set $H_k$, the chores assigned to $i$ at that time become \emph{entitled chores} of agent $i$, denoted $\entitled(i)$. These are the chores which went through a price reduction while they were allocated to agent $i$. Subsequently the algorithm will never move the entitled chores away from $i$. Finally, in order to reason about the equilibria at different times during the execution of the algorithm, we timestamp important steps of the algorithm: $t_{k,b}$ and $t_{k,a}$ denote the time right \emph{before} and right \emph{after} the execution of Phase 3 in iteration $k$. 

We prove the correctness of the algorithm by induction on $k$. Specifically, we prove that for all $k \ge 1$ such that \Cref{alg:full-algo} reaches time $t_{k,a}$,
\begin{enumerate}
	\item[\hypdisjoint] $H_k \cap H_\ell = \emptyset$ for all $1 \le \ell < k$.
	\item[\hyptransferable] During iteration $k$, each time the algorithm reaches \cref{line:non-entitled-item-exists}, there exists a chore $c \in \xx_i \setminus \entitled(i)$. All such chores are MPB chores for agent $j$.
	\item[\hypnotviolator] At time $t_{k,b}$, each $i \in H_1 \cup \cdots \cup H_k$ is not a violator, so $\priceone(\xx_i) \le \price(\xx_{\ls})$ where $\ls$ is the least spender.
	\item[\hypentitled] At time $t_{k,a}$, each $i \in H_1 \cup \cdots \cup H_{k}$ owns every entitled item, $\entitled(i) \subseteq \xx_i$.
	\item[\hypalpha] When \cref{line:setting-alpha} is reached during iteration $k$, $\alpha$ is set to $p$.
	\item[\hypprices] At time $t_{k,a}$, we have $\price(c) \in \{1, p\}$ for all $c\in\items$. If $\price(c) = 1$, then $c \in \entitled(i)$ for some $i \in H_1 \cup \cdots \cup H_k$.
	\item[\hypmpb] At time $t_{k,a}$, we have $\MPB_i = 1$ for all $i \in H_1 \cup \dots \cup H_k$, and $\MPB_i = 1/p$ for all other agents.
\end{enumerate}

Let us first check that these statements together imply that $(\xx, \price)$ remains an equilibrium throughout the execution of the algorithm, and that the algorithm terminates in polynomial time, in \cref{line:return}. Then $(\xx, \price)$ is an equilibrium satisfying pEF1, and thus we have found an PO and EF1 allocation, as required.

\begin{lemma}
	\label{lem:always-equilibrium}
	Assume that \hypdisjoint to \hypmpb hold for all $k \ge 1$ such that \Cref{alg:full-algo} reaches time $t_{k,a}$. Then, throughout the algorithm's execution, $(\xx, \price)$ is an equilibrium.
\end{lemma}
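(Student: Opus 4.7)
The plan is to verify that $(\xx,\price)$ is an equilibrium at initialization and that every modification the algorithm makes preserves this property. Equilibrium means that every chore an agent owns is an MPB chore for that agent, so I would track how $\MPB_i$ evolves over time and check the bundle condition after each modification.

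For the initialization, I would compute PB ratios directly. The welfare-maximizing starting allocation assigns each chore $c$ to an agent $i$ with $v_i(c) = \max_j v_j(c)$, and the initialization sets $\price_c = p\cdot|\max_j v_j(c)|$, so the owner's PB ratio equals $1/p$. A short case analysis on whether $c$ is easy for some agent (price $p$) or universally difficult (price $p^2$) shows that every PB ratio lies in $\{1/p,1\}$. Using the scaling convention that each agent values at least one chore at $-1$, I would conclude $\MPB_i = 1/p$ for every agent, so every owned chore is an MPB chore.

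Phases 2a and 2b leave $\price$ unchanged, so MPB values are frozen. In Phase 2a, \hyptransferable directly states that the transferred chore is MPB for the recipient, while in Phase 2b the chore is MPB for both endpoints of the MPB edge along which it is transferred, by definition. Since removing a chore from a bundle can never destroy equilibrium, each transfer preserves the property.

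The main work is Phase 3, where the prices of chores in $\bigcup_{i\in H_k}\xx_i$ are divided by $\alpha$. By \hypalpha, $\alpha = p$. Combining \hypmpb at time $t_{k-1,a}$ with the disjointness \hypdisjoint, each agent in $H_k$ enters Phase 3 with $\MPB_i = 1/p$ attained on her bundle; scaling those prices by $1/p$ turns the bundle PB ratios into $1$. The definition of $\alpha$ as the minimum of $\PB_i(c)/\MPB_i$ over $i\in H_k$ and $c \notin \bigcup_{j\in H_k}\xx_j$ guarantees that every such $c$ has new PB ratio at least $p\cdot(1/p)=1$ from any $H_k$ agent's perspective, so each $H_k$ agent's new $\MPB$ is exactly $1$, attained on her bundle. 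For agents outside $H_k$, prices of their own chores are unchanged while some other prices decrease, which \emph{raises} the corresponding PB ratios; hence their $\MPB$ cannot strictly decrease and is still attained on their bundles. The main obstacle is precisely this Phase 3 bookkeeping: it requires combining \hypalpha, \hypmpb, and \hypdisjoint in just the right way to show that no agent's MPB slips below the PB ratios of the chores she actually owns.
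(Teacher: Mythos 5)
Your overall structure mirrors the paper's: verify the initialization, dispose of Phases 2a and 2b by appealing to \hyptransferable and the MPB alternating path definition, and then do the real work in Phase 3. The initialization argument is the same as the paper's, and your treatment of Phases 2a/2b is too. The difference is in how Phase~3 is handled. The paper's proof is \emph{generic}: it observes that $\alpha \ge 1$ directly from the fact that $(\xx,\price)$ is currently an equilibrium (so all $\PB_i(c)/\MPB_i \ge 1$), and then shows $\MPB_i' = \alpha\MPB_i$ for agents $i \in H_k$ by a three-way case split on where $c$ lives. This does not use \hypalpha, \hypmpb, or \hypdisjoint at all. You instead plug in $\alpha = p$ and $\MPB_i = 1/p$ from \hypalpha, \hypmpb, and \hypdisjoint, and verify the resulting concrete ratios. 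That is a valid alternative --- it buys explicit numerical values but costs more hypotheses, and is a bit less self-contained since it ties Phase~3 correctness to assertions about what $\alpha$ happens to be rather than to the mere definition of $\alpha$.

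One small gap in your sketch: when establishing $\MPB_i' = 1$ for $i \in H_k$, you treat two kinds of chores --- those in $\xx_i$ (ratio becomes exactly $1$) and those in $\xx(\agents \setminus H_k)$ (ratio $\ge 1$ by the definition of $\alpha$). You do not address chores held by \emph{other} agents $j \in H_k$ with $j \neq i$, whose prices also drop by a factor $p$; for those you need the separate observation that $\PB_i(c) \ge \MPB_i = 1/p$ before the change, so $\PB_i'(c) = p\cdot\PB_i(c) \ge 1$ after. The paper handles this case explicitly as ``for all $c \in \xx(H_k)$, $\PB_i'(c) = \alpha\PB_i(c) \ge \alpha\MPB_i$.'' It is a quick fix, but it does need to be stated for the claim ``new $\MPB$ is exactly $1$'' to be justified.
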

\begin{proof}
	After initialization in \cref{line:initialize-prices}, $(\xx, \price)$ is an equilibrium because if $c \in \xx_i$ then $v_i(c) = \max_{j\in \agents}v_j(c)$ since $\xx$ is welfare-maximizing. Hence $\PB_i(c) = |v_i(c)|/(p\cdot|v_i(c)|) = 1/p$. On the other hand if $d \not\in \xx_i$ then $v_i(d) \le \max_{j\in \agents}v_j(d)$ so $\PB_i(d) \ge 1/p$ by the same calculation. Hence $\MPB_i = 1/p$ and $c$ is an MPB chore for $i$. So, $(\xx, \price)$ is an equilibrium.
	
	Item transfers in \cref{line:phase-2a-transfer} of Phase 2a keep $(\xx, \price)$ in equilibrium because $c$ is an MPB chore for $j$ by \hyptransferable. Item transfers in \cref{line:phase-2b-transfer} of Phase 2b preserve equilibrium because $c_\ell$ is an MPB chore for $i_{\ell -1}$ by the definition of MPB alternating path.
	
	Finally, price changes in \cref{line:update-prices} of Phase 3 preserve equilibrium by the definition of $\alpha$. To see this, note that $\alpha \ge 1$ (because, as we have seen, when we set $\alpha$ we are currently in equilibrium, so always $\PB_i(c) \ge \MPB_i(c)$ and so $\PB_i(c) / \MPB_i(c) \ge 1$). Thus the price change reduces prices, and thus increases some pain-per-buck ratios. It follows that for all $i \in \agents \setminus H_k$, items owned by $i$ remain MPB items for $i$ (since $\MPB_i$ can only go up and the prices of chores owned by $i$ do not change). Now write $\MPB_i'$ and $\PB_i(c)'$ for values after the price reduction. Let $i \in H_k$. We need to prove that all items in $\xx_i$ are MPB items for $i$ after the price change. First we claim that $\MPB_i' = \alpha \MPB_i$. For $c \in \xx(\agents \setminus H_k)$, we have by choice of $\alpha$ that
	\[
		\PB_i'(c) = \PB_i(c) = \frac{\PB_i(c)}{\MPB_i} \MPB_i \ge \alpha \MPB_i.
	\]
	For all $c \in \xx(H_k)$,
	\[
		\PB_i'(c) = \alpha \PB_i(c) \ge \alpha \MPB_i.
	\]
	Finally for all $c \in \xx_i$ we have $\PB_i(c) = \MPB_i$ since $c$ was an MPB item for $i$ before the price change. Hence
	\[
		\PB_i'(c) = \alpha \PB_i(c) = \alpha \MPB_i.
	\]
	From these, it follows that indeed $\MPB_i' = \alpha \MPB_i$, and that $\PB_i'(c) = \MPB_i'$ for all $c \in \xx_i$. So all items owned by $i$ are MPB items for $i$ after the price change, as required.
\end{proof}
 
For the statement about termination, we need a few properties of Phase 2b of the algorithm, which is very similar to Phase 2 of the original algorithm due to \citet{barman2018finding}. 
The proof of this result is deferred to the appendix.

\begin{lemma}[Properties of Phase 2b]
	\label{lem:properties-phase2b}
	Consider a run of Phase 2b, and assume that $(\xx, \price)$ is an equilibrium at the start of the run.
	\begin{enumerate}
		\item \label{enum:phase2b-poly} The run terminates after $\text{poly}(n,m)$ time.
		\item \label{enum:phase2b-ls-spending} Least spending $\min_{i\in \agents} \price(\xx_i)$ never decreases during the run.
	\end{enumerate}
\end{lemma}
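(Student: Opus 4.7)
The plan is to handle the two parts separately: Part~(2) drops out from the violator condition, while Part~(1) needs a potential-function argument anchored by the minimum-length choice of the chosen alternating path.

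\smallskip\noindent\emph{Part (2).} In the transfer on \cref{line:phase-2b-transfer}, only $i_\ell$'s spending can fall. Since $i_\ell$ is a violator, $\priceone(\xx_{i_\ell}) > \price(\xx_\ls)$. Because $\priceone$ strips off the most expensive chore in $\xx_{i_\ell}$, we have $\priceone(\xx_{i_\ell}) \le \price(\xx_{i_\ell}) - \price(c_\ell)$. Chaining the two inequalities gives $\price(\xx_{i_\ell}) - \price(c_\ell) > \price(\xx_\ls) \ge \min_{i\in\agents}\price(\xx_i)$, so the new spending of $i_\ell$ still exceeds the old minimum, and therefore $\min_{i\in\agents}\price(\xx_i)$ cannot decrease.

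\smallskip\noindent\emph{Part (1).} The structural observation I would start from is that, by the min-length choice in \cref{line:phase-2b-condition}, every intermediate agent $i_{\ell'}$ with $\ell' < \ell$ satisfies $\priceone(\xx_{i_{\ell'}}) \le \price(\xx_\ls)$; otherwise a strict prefix of the path would witness a shorter valid MPB alternating path to a violator. In particular, the receiving agent $i_{\ell-1}$ is not a violator, which tightly controls its spending both before and just after the transfer. I would then exhibit a lex-potential $\Phi$ whose natural coordinates are $\min_{i\in\agents}\price(\xx_i)$ (monotone by Part~(2)), the negative of the number of minimum spenders, and a combinatorial tag recording the level sets of agents reachable from $\ls$ via MPB alternating paths of each given length. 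Since each spending is a nonnegative integer combination of $\{1,p\}$ bounded above by $\price(\items)\le pm$ and there are only $n$ agents, $\Phi$ can take at most $\poly(n,m)$ values, so strict lex-progress yields the desired polynomial bound.

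\smallskip\noindent\emph{Main obstacle.} The subtlety is that a naive sorted-spending vector is not lex-monotone under legal transfers: one can construct a valid transfer that simultaneously lifts a middle spending above the previous second-smallest and pulls the previous largest below it, strictly lex-decreasing the sorted vector. The potential must therefore not be purely spending-based. My expectation is that the workable version hinges on a monotonicity property of the MPB-reachable set from $\ls$ within a single Phase~2b run: each transfer either (a) strictly raises the minimum spending, (b) strictly reduces the number of minimum-spenders, or (c) strictly enlarges the MPB-reachable set from $\ls$ at some level, each of which can occur only $\poly(n,m)$ times. Formalizing (c) and verifying it using the min-length path property is the core combinatorial challenge I anticipate, and it is the step where I expect the bulk of the technical work to live.
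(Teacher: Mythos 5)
Your Part~(2) argument is correct and matches the paper's proof (Lemma~\ref{lem:ls-spending-increase} in the appendix) almost verbatim: the violator condition on $i_\ell$ forces $\price(\xx_{i_\ell})-\price(c_\ell)\ge\priceone(\xx_{i_\ell})>\price(\xx_\ls)$, the recipient's spending only rises, and all other spendings are unchanged, so the minimum cannot drop.

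Part~(1), however, is an outline rather than a proof, and you say so yourself: the combinatorial-tag coordinate of your lex-potential is exactly where the argument has to live, and it is left undefined. Your preliminary observation — that by minimality of the chosen path every intermediate $i_{\ell'}$ has $\priceone(\xx_{i_{\ell'}})\le\price(\xx_\ls)$ — is correct and is used in the paper, but it does not by itself control progress when the minimum spending and the set of minimum spenders are both unchanged. The paper fills this gap with a concrete potential (Lemma~\ref{lem:phase2-ls-change}): define $\level(i,t)$ as the length of the shortest MPB alternating path from the current least spender to $i$ (or $n$ if none exists), let $G(i,t)$ be the set of chores in $\xx_i^t$ that terminate such a shortest path, and set $\phi(t)=\sum_i m\cdot(n-\level(i,t))+|G(i,t)|$. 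The minimality of the transfer path is then used to show no agent's level decreases and $i_\ell$ either strictly increases its level or loses an element of $G(i_\ell,t)$, so $\phi$ strictly decreases with each transfer; this bounds the steps between changes of the least spender's identity by $O(mn^2)$. The paper then closes the argument differently from your single-lex-potential plan: Lemma~\ref{lem:phase2-same-ls} shows that when an agent regains least-spender status her utility has strictly dropped, and Lemma~\ref{lem:phase2-pseudopoly} multiplies the two bounds, instantiating $|\mathcal{U}_i|\le m^2$ for bivalued chores. Your observation that spendings lie in a $\poly(m)$-size set is correct and could in principle replace the utility-count step, but without a verified level-set potential — the step you flag as the ``core combinatorial challenge'' — the termination claim is not established.
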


Assuming the induction hypotheses and using the lemmas mentioned above, we can now prove that the algorithm terminates, and is hence correct.
\begin{lemma}
	Assume that \hypdisjoint to \hypmpb hold for all $k \ge 1$ such that \Cref{alg:full-algo} reaches time $t_{k,a}$. Then the algorithm terminates in polynomial time and returns a pEF1 equilibrium.
\end{lemma}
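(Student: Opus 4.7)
The plan is to combine \Cref{lem:always-equilibrium} and \Cref{lem:properties-phase2b} with the structural invariants \hypdisjoint and \hypnotviolator to show that the outer $k$-loop runs for at most $n$ iterations and that each iteration performs only polynomial work. Since \Cref{lem:always-equilibrium} already guarantees that $(\xx,\price)$ is an equilibrium whenever \cref{line:return} is reached, polynomial-time termination automatically implies that the returned $(\xx,\price)$ is a pEF1 equilibrium.

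To bound the outer iterations, first observe that each $H_k$ is non-empty: the least spender $\ls$ chosen at the start of Phase~3 trivially satisfies $\ls \reachable \ls$ via the zero-length alternating path, so $\ls \in H_k$. Together with invariant \hypdisjoint, this gives $|H_1 \cup \cdots \cup H_k| \ge k$. Suppose for contradiction that the algorithm does not return in any iteration $k \le n$. Then Phase~3 of iteration $n$ must execute, and after $H_n$ is computed we have $H_1 \cup \cdots \cup H_n = \agents$. By invariant \hypnotviolator, at time $t_{n,b}$ every agent satisfies $\priceone(\xx_i) \le \price(\xx_{\ls})$. Since Phase~3 only computes $H_n$ between the pEF1 check at the end of Phase~2b and timestamp $t_{n,b}$---in particular, without altering $\xx$---pEF1 already held at that check, so the algorithm must have returned at \cref{line:return}, a contradiction.

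For the per-iteration work, Phase~2b runs in polynomial time by \Cref{lem:properties-phase2b}, and Phase~3 performs only $O(m)$ price updates plus bookkeeping. For Phase~2a, I would fix a value of $\ell$ in its outer \textbf{for} loop and use the potential $\Phi_\ell = \sum_{i \in H_\ell} |\xx_i \setminus \entitled(i)|$. Invariant \hyptransferable guarantees that each time \cref{line:non-entitled-item-exists} is reached, a chore $c \in \xx_i \setminus \entitled(i)$ exists, and the transfer on \cref{line:phase-2a-transfer} moves $c$ to an agent in $H_{\ell+1} \cup \cdots \cup H_{k-1}$, which by \hypdisjoint lies outside $H_\ell$; hence $\Phi_\ell$ strictly decreases by~$1$. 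The subtle point making this argument go through is invariant \hypentitled, which prevents chores moved into higher $H$-strata in earlier iterations from ever leaking back into $\xx_i$ for some $i \in H_\ell$ and inflating $\Phi_\ell$. Since $\Phi_\ell \le m$ initially, the inner \textbf{while} loop must break within $m$ iterations, and combining the polynomial bounds for the three phases with the $O(n)$ bound on the number of outer iterations yields polynomial-time termination.
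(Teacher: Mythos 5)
Your argument is correct and follows essentially the same decomposition as the paper: it bounds the number of Phase 3 executions by $n$ via \hypdisjoint and the fact that each $H_k$ contains the least spender, and bounds each iteration's work via \Cref{lem:properties-phase2b} for Phase 2b, direct inspection for Phase 3, and a counting argument for Phase 2a. One small correction: the monotone decrease of your potential $\Phi_\ell$ follows simply from the algorithm's structure --- Phase 2a only transfers chores from an agent in $H_\ell$ to an agent in $H_{\ell+1}\cup\cdots\cup H_{k-1}$, which is disjoint from $H_\ell$ by \hypdisjoint, and $\entitled(i)$ is not modified during Phase 2 --- and does not actually rely on invariant \hypentitled, which you cite as the ``subtle point''; what \hyptransferable (not \hypentitled) does give you is that the transfer in question is always well-defined.
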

\begin{proof}
	Every step of the algorithm is well-defined. This is obvious except for \cref{line:non-entitled-item-exists}, where the algorithm implicitly asserts the existence of a chore satisfying a certain property. But by \hyptransferable such a chore exists every time \cref{line:non-entitled-item-exists} is reached.
	
	The only way that the algorithm can terminate is if $(\xx, \price)$ is pEF1 (\cref{line:return}), at which time it is also an equilibrium by \Cref{lem:always-equilibrium}. So, it suffices to show that the algorithm terminate in polynomial time.

	Consider an execution of Phase 2a. For any value of $\ell$, consider the while loop in \cref{line:phase-2b-start}. In each step of the while loop, a chore is transferred from an agent in $H_\ell$ to an agent in $H_t$ for some $t > \ell$. Since chores only move from lower-numbered $H$-sets to higher-numbered $H$-sets, each item can be moved at most once. Hence, this while loop terminates after at most $m$ steps, and hence, Phase 2a terminates in polynomial time. 
	
	Phase 2b terminates in polynomial time by \Cref{lem:properties-phase2b}(a) which we can apply since $(\xx,\price)$ is an equilibrium by \Cref{lem:always-equilibrium}.
	
	Phase 3 can be executed at most $n$ times, because in each execution at least one agent (the least spender $\ls$) is assigned to a set $H_k$, and that agent was not previously assigned to such a set by \hypdisjoint. Since there are only $n$ agents, this can happen at most $n$ times.
\end{proof}

We now turn to proving our induction hypotheses. Recall that we prove them by induction on the iteration number $k$. First, let us prove them in the base of $k=1$. 
\begin{lemma}[Base case]
	\label{lem:base-case}
	\hypdisjoint to \hypmpb hold for $k = 1$.
\end{lemma}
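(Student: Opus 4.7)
The base case reduces to checking seven invariants at time $t_{1,a}$. Three are essentially free: \hypdisjoint is vacuous, \hyptransferable is vacuous because Phase 2a's for-loop has empty index range $(k-2,\ldots,1)$ when $k=1$, and \hypentitled holds by construction because Phase 3 sets $\entitled(i) \gets \xx_i$ for each $i \in H_1$. For the remaining four, I would first establish a preliminary equilibrium fact: after Phase 1, every agent has $\MPB_i = 1/p$ and every price lies in $\{p, p^2\}$. This comes from a short case analysis using $|v_i(c)| \in \{1,p\}$ and $\price(c) = p \cdot |\max_j v_j(c)| \in \{p, p^2\}$, together with the observation that the combination $|v_i(c)|=1$ with $\price(c)=p^2$ is impossible; the rescaling convention that each agent values at least one chore at $-1$ then ensures $\MPB_i = 1/p$ is attained on an owned (or at worst witness) chore. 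Since Phase 2b does not touch prices, these facts carry unchanged to time $t_{1,b}$.

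For \hypnotviolator, note that the algorithm reaches Phase 3 only when $\xx$ fails pEF1 after Phase 2b, which means Phase 2b exited through its \textbf{break} with some least spender $\ls$ having no MPB alternating path to any violator; taking $H_1$ to be the agents reachable from that same $\ls$ then makes the invariant immediate. The remaining three hypotheses rest on a single extensibility observation: if $c \in \xx_j$ is an MPB chore for some $i \in H_1$, then $j \in H_1$, because the path $\ls \reachable i$ extends along $i \stackrel{c}{\gets} j$. Applied contrapositively, every $c$ owned outside $H_1$ satisfies $\PB_i(c) > \MPB_i = 1/p$ for all $i \in H_1$; since $\PB_i(c) \in \{1/p, 1\}$, this forces $\PB_i(c) = 1$ and hence $\PB_i(c)/\MPB_i = p$, giving $\alpha = p$ and thus \hypalpha (the minimization set is non-empty because \hypnotviolator together with the existence of a violator forces $H_1 \neq \agents$). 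The same extensibility pushes every chore with initial price $p^2$ into $\bigcup_{i \in H_1} \xx_i$, since at price $p^2$ every agent is MPB on that chore so its owner must lie in $H_1$; after dividing the prices of $H_1$-owned chores by $\alpha=p$, every price lies in $\{1, p\}$ and any price-$1$ chore is an entitled chore, establishing \hypprices. Finally, \hypmpb follows from the identity $\MPB_i' = \alpha \MPB_i$ proved inside \Cref{lem:always-equilibrium} for $i \in H_1$, while for $i \notin H_1$ the agent cannot be a least spender (since $\ls \reachable \ls$ trivially places every least spender in $H_1$) and hence owns some chore whose price is unchanged, pinning $\MPB_i = 1/p$.

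The main obstacle is wiring the dependency chain carefully: the extensibility step used for \hypalpha, \hypprices, and \hypmpb requires \hypnotviolator to know that $H_1 \neq \agents$, and the $\ls$ used to define $H_1$ must be the same one that terminated Phase 2b's loop --- either by a tie-breaking convention or by observing that Phase 2b's loop re-samples $\arg\min$ every iteration and thus cannot halt while any least spender still reaches a violator. Given these ingredients, the remaining verifications amount to routine bookkeeping against the initial $\MPB = 1/p$ baseline and the equilibrium-preserving nature of Phase 2b.
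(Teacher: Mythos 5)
Your proof follows the paper's argument closely: it establishes the same preliminary equilibrium facts ($\MPB_i = 1/p$ for all $i$, prices in $\set{p,p^2}$ with $p^2$ reserved for chores valued $-p$ by everyone), uses the same ``extensibility'' observation that MPB chores of $H_1$-agents must be owned by $H_1$-agents to pin down $\alpha = p$ and push all price-$p^2$ chores into $H_1$, and recovers \hypmpb from the $\MPB_i' = \alpha\MPB_i$ identity plus the fact that non-$H_1$ agents own a chore with unchanged price. Your extra remark that $H_1 \neq \agents$ (ensuring the minimization defining $\alpha$ is over a nonempty set) is a nice detail the paper leaves implicit.

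One small caveat: your side justification that ``Phase 2b's loop re-samples $\arg\min$ every iteration and thus cannot halt while any least spender still reaches a violator'' does not quite hold as stated. The loop breaks the moment the \emph{single} sampled $\ls$ has no path to a violator, so as written it could in principle halt while a \emph{different} tied least spender still reaches one. This is a subtlety the paper's one-line justification of \hypnotviolator also glosses over; the cleanest repair is to read the break condition as ranging over all least spenders (or to impose a tie-breaking convention), rather than appealing to re-sampling. It does not affect the structure of your argument, which otherwise mirrors the paper's.
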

\begin{proof}
	\hypdisjoint holds vacuously. \hyptransferable also holds vacuously because \cref{line:non-entitled-item-exists} is never reached in iteration 1, because the for-loop of Phase 2a is not executed. \hypnotviolator holds because otherwise Phase 2b would not have stopped. \hypentitled holds by the definition of $\entitled(i)$.
	
	Call a chore \emph{very difficult} if $v_i(c) = -p$ for all $i \in \agents$. In \cref{line:initialize-prices}, we set prices to be $p^2$ for very difficult chores, and $p$ for other chores.
	
	Consider time $t_{1,b}$, when prices are the same as at initialization. Note that for all $i \in \agents$, we have $\MPB_i = 1/p$, because we have assumed that every $i$ values at least one item $c$ at $-1$, so $c$ is not very difficult and $\price(c) = p$ giving $\PB_i(c) = 1/p$. (Clearly $\MPB_i$ cannot be less than $1/p$ since the only possible pain-per-buck ratios are $p/p$, $1/p$, and $p/p^2$. The ratio $1/p^2$ is not possible since only very difficult chores have price $p^2$.) Let $c$ be a very difficult item. Then $\PB_i(c) = p/p^2 = 1/p$ for all $i \in \agents$. Hence $c$ is an MPB chore for all agents. It follows that if $i$ is the owner of $c$ at time $t_{1,b}$, then $i \in H_1$. Thus, at $t_{1,b}$ all very difficult chores are owned by agents in $H_1$. Next, let $c \in \cup_{i' \in \agents \setminus H_1} \xx_{i'}$ be a chore not owned by an agent in $H_1$, say $c \in \xx_j$. Then for all $i \in H_1$ we must have $\PB_i(c) > \MPB_i$ or else $c$ is an MPB chore for $i$ and then we would have $j \in H_1$ by the definition of $H_1$. Hence $\PB_i(c) = 1$. It follows that in \cref{line:setting-alpha}, we set 
	\[
	\alpha = \min \left\{\PB_i(c)/\MPB_i : i \in H_1, c \in \xx({\agents \setminus H_1}) \right\} =  \tfrac{1}{1 / p} = p.
	\]
	This gives \hypalpha. 
	
	Next, in \cref{line:update-prices}, we multiply the price of each item owned by $H_1$ by $1/\alpha$. In particular, we update the price of every very difficult item from $p^2$ to $p$, and we may update some other chores' prices from $p$ to $1$. After this update at time $t_{1,a}$, we thus have $\price(c) \in \{1, p\}$ for all chores $c \in \items$. Also, any item $c$ that is now priced 1 must have had its price updated, so $c$ is owned by someone in $H_1$, and hence $c \in \entitled(i)$ for some $i \in H_1$. This gives \hypprices. 
	
	Finally, we calculate the values of $\MPB_i$ after the price change, i.e. at time $t_{1,a}$. Let $i \in H_1$. There exists some item $c$ with $v_i(c) = -1$. Before the price change, $c$ was an MPB chore for $i$. Thus by the definition of $H_1$, the owner of $c$ is in $H_1$. Now $c$'s price has changed from $p$ to $1$. Thus if $v_i(c) = -1$ then $\price(c) = 1$, and $\PB_i(c)=1$. On the other hand, for chores $c$ with $v_i(c) = -p$, we have $\price(c) \le p$ and so $\PB_i(c) \ge 1$.  It follows that $\MPB_i = 1$ after the price change.
	Next let $j \in \agents \setminus H_1$. Note that $j$ was not a least spender at $t_{1,b}$ (because least spenders are in $H_1$). Hence $\price(\xx_j) > 0$ and so $\xx_j \neq \emptyset$. Take some $c \in \xx_j$. At $t_{1,b}$, we had $\MPB_j = 1/p$, so $\PB_j(c) = 1/p$. The price of $c$ did not change, because $c$ is not owned by $H_1$. So also at $t_{1,a}$, we have $\PB_j(c) = 1/p$ and hence $\MPB_j = 1/p$ because $1/p$ is the smallest possible pain-per-buck ratio. This gives \hypmpb.
\end{proof}

From now on, we assume that \hypdisjoint to \hypmpb hold for all $\ell$ with $1 \le \ell \le k$ for some $k\ge 1$. Our goal is to show that \hypdisjoint to \hypmpb hold for iteration $k+1$. The next lemma shows that \hyptransferable holds for iteration $k+1$.
\begin{lemma}
	\label{lem:non-entitled-transferable}
	During iteration $k + 1$, each time the algorithm reaches \cref{line:non-entitled-item-exists}, there exists a chore $c \in \xx_i \setminus \entitled(i)$, and any such chore is an MPB chore for the agent $j$ identified in \cref{line:phase-2a-def-j}.
\end{lemma}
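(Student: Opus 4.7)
Between $t_{k,a}$ and the current moment, prices have not changed (only Phase~3 alters them), so invariants \hypprices and \hypmpb still apply; in particular $\MPB_i = \MPB_j = 1$ since both $i \in H_\ell$ and $j \in H_{\ell+1} \cup \cdots \cup H_k$ lie in $H_1 \cup \cdots \cup H_k$. Also, every transfer Phase~2a has performed so far in iteration $k+1$ is of a non-entitled item, by \cref{line:non-entitled-item-exists}. Combined with \hypentitled at $t_{k,a}$, this yields the following \emph{structural invariant} that the rest of the argument leans on: for every $i'' \in H_1 \cup \cdots \cup H_k$, each chore in $\entitled(i'')$ still lies in $\xx_{i''}$ right now.

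I first show any $c \in \xx_i \setminus \entitled(i)$ is MPB for $j$, then show such a $c$ exists. By \Cref{lem:always-equilibrium}, $(\xx,\price)$ is an equilibrium, so $\PB_i(c) = \MPB_i = 1$, giving $|v_i(c)| = \price(c) \in \set{1, p}$ via \hypprices. If $\price(c) = 1$, then \hypprices provides some $i''$ with $c \in \entitled(i'')$; the structural invariant then forces $i'' = i$ (as a chore lies in only one bundle), giving $c \in \entitled(i)$ --- a contradiction. Hence $\price(c) = p$ and $v_i(c) = -p$. Now $\PB_j(c) = |v_j(c)|/p \in \set{1/p,\, 1}$, and the value $1/p$ would violate $\MPB_j = 1$, so $v_j(c) = -p$, $\PB_j(c) = 1 = \MPB_j$, and $c$ is MPB for $j$.

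For existence, suppose for contradiction $\xx_i = \entitled(i)$. By the above, every chore in $\xx_i$ has price $1$, so $\priceone(\xx_i) = |\entitled(i)| - 1$; I will contradict the violator condition by lower-bounding $\price(\xx_j)$ by $|\entitled(i)| - 1$. Let $L$ be a least spender at time $t_{\ell,b}$ (so $L \in H_\ell$). At $t_{\ell,b}$, every chore owned by an agent in $H_\ell$ must have had price $p$, since otherwise Phase~3's reduction by $\alpha = p$ would produce a price of $1/p$, contradicting \hypprices at $t_{\ell,a}$. Thus \hypnotviolator at iteration~$\ell$ rearranges to $|\entitled(i)| - 1 \le |\entitled(L)|$. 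Also at $t_{\ell,b}$, $j \notin H_1 \cup \cdots \cup H_\ell$ has $\MPB_j = 1/p$, forcing every chore $j$ owns then to have price $p$; since $L$ was a least spender, the size of $\xx_j$ at $t_{\ell,b}$ is at least $|\entitled(L)|$. The final ingredient is that the size of $\xx_j$ has not dropped below that value by now: once $j$ joins some $H_{\ell'}$ with $\ell' > \ell$, its entitled bundle is preserved by \hypentitled, and beforehand the intervening Phase~2b runs control the net chore count in the spirit of \Cref{lem:properties-phase2b}. Since every current chore in $\xx_j$ has price at least $1$, this yields $\price(\xx_j) \ge |\xx_j| \ge |\entitled(L)| \ge |\entitled(i)| - 1$, the desired contradiction.

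The main obstacle is that last step --- controlling the size of $\xx_j$ across the several iterations between $t_{\ell,b}$ and the current moment. It rests on the complementary facts that entitled chores never leave their owner (via \hypentitled combined with the Phase~2a transfer restriction) and that non-entitled chores always carry price $p$ (the contrapositive used in part~(b)), which together let price comparisons be translated into chore-count comparisons at different times.
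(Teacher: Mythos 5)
Your argument for the MPB portion is correct and essentially matches the paper's: you use \hypprices and \hypmpb at $t_{k,a}$ (noting prices are frozen during Phase 2a) to force $\price(c) = p$ and then $\PB_j(c) = 1 = \MPB_j$. Making the ``structural invariant'' (entitled chores stay put through Phase 2a) explicit is a reasonable addition; the paper leans on it implicitly.

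The existence portion, however, has two genuine gaps. First, the claim ``every chore in $\xx_i$ has price 1'' when $\xx_i = \entitled(i)$ is false for $i \in H_1$. If $i$ owns a \emph{very difficult} chore $c$ (one with $v_{i'}(c) = -p$ for all $i'$), then $c$ had initial price $p^2$ and, after the single reduction in Phase 3 of iteration~1, still has price $p$ at $t_{1,a}$ while being entitled. Your derived identity $\priceone(\xx_i) = |\entitled(i)| - 1$ and the parallel claim ``every chore owned by an agent in $H_\ell$ had price $p$ at $t_{\ell,b}$'' therefore fail for $\ell = 1$: at $t_{1,b}$ prices lie in $\{p, p^2\}$, not $\{1,p\}$, so the ``price would drop to $1/p$'' deduction is vacuous. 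Second, and more seriously, the last step --- lower-bounding $|\xx_j|$ at the current time by $|\entitled(L)|$ --- is not proved: you gesture at \Cref{lem:properties-phase2b}, but that lemma bounds the minimum \emph{spending} during a single Phase 2b run, not the count of chores in a particular agent's bundle across several iterations with intervening price reductions. You acknowledge this is ``the main obstacle,'' and it is indeed the crux.

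The paper sidesteps both issues by never reducing to a chore count. It derives the chain $\priceone^t(\xx_i^t) = \tfrac1p\,\priceone^{t_{\ell,b}}(\xx_i^{t_{\ell,b}}) \le \tfrac1p\,\min_o \price^{t_{\ell,b}}(\xx_o^{t_{\ell,b}})$ (which holds regardless of any $p$ vs.\ $p^2$ prices, since the rescaling in Phase 3 is uniform on $\xx_i$), and then closes with two auxiliary lemmas that you did not invoke: \Cref{lem:possession-subset} (to establish that the chores held by $H_{\ell+1}\cup\cdots\cup H_k$ at $t_{\ell,b}$ are a subset of those held at time $t$) and \Cref{lem:higher-prices-higher-ls} (the load-balancing lemma, which converts a containment and a price increase from $\{1\}$ to $\{1,p\}$ into a lower bound on minimum spending). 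Those two lemmas are precisely what your informal ``controlling the size of $\xx_j$'' step needs to become; without them, your proof does not go through.
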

Before we prove \Cref{lem:non-entitled-transferable}, we need two additional results.
\begin{lemma}
	\label{lem:possession-subset}
	For all $1 \le \ell < k$, we have 
	$
	\xx^{t_{\ell, b}}(H_{\ell + 1} \cup \ldots \cup H_{k}) \subseteq \xx^{t_{k, b}} (H_{\ell + 1} \cup \ldots \cup H_{k}).
	$
\end{lemma}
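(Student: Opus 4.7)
The plan is to prove the lemma by carefully tracing chores in $\xx^{t_{\ell,b}}(H_{\ell+1}\cup\ldots\cup H_k)$ through the allocation-changing phases between times $t_{\ell,b}$ and $t_{k,b}$, using the inductive hypotheses \hypdisjoint--\hypmpb (available for iterations $\le k$) together with \Cref{lem:always-equilibrium}. The allocation changes only during Phase~2a and Phase~2b of each iteration $r \in \{\ell+1,\ldots,k\}$, since Phase~3 modifies only prices, so only these need to be analyzed.

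A useful first step is to characterize the chores in question. Since agents in $H_{\ell+1}\cup\ldots\cup H_k$ lie outside $H_1\cup\ldots\cup H_\ell$, \hypmpb for iteration $\ell$ gives them $\MPB = 1/p$ at time $t_{\ell,b}$. By \hypprices, every chore has price $1$ or $p$, and every price-$1$ chore is entitled to some agent in $H_1\cup\ldots\cup H_\ell$; since in any equilibrium a price-$1$ chore can only be MPB for a $\MPB=1$ agent, it is owned by an agent in $H_1\cup\ldots\cup H_\ell$. Hence every chore in our set has price $p$ and is easy ($v=-1$) for its current owner. Next, I would analyze Phase~2a of iteration $r$: its inner loop only transfers chores from $i \in H_{\ell'}$ to $j \in H_{\ell'+1}\cup\ldots\cup H_{r-1}$ for some $\ell' \le r-2$, which involves only agents in $H_1\cup\ldots\cup H_{r-1}$ and strictly increases $H$-set indices. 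Hence a chore in our set with current owner in $H_m$ is either untouched (if $m \ge r$) or moved to some $j \in H_{m+1}\cup\ldots\cup H_{r-1}\subseteq H_{\ell+1}\cup\ldots\cup H_k$ (if $\ell+1 \le m \le r-2$), so it remains inside $H_{\ell+1}\cup\ldots\cup H_k$.

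The technical heart is Phase~2b of iteration $r$, where a chore $c_m$ is transferred from $i_m$ to $i_{m-1}$ along a shortest MPB alternating path $\ls \stackrel{c_1}{\gets} \cdots \stackrel{c_m}{\gets} i_m$ whenever $\priceone(\xx_{i_m}) > \price(\xx_\ls)$. Each agent's $\MPB$ is $1$ or $1/p$ by \hypmpb, so a case analysis on $\price(c_m)$ and the $\MPB$-values of $i_m$ and $i_{m-1}$ leaves only one dangerous possibility: $\price(c_m)=p$, $\MPB_{i_m}=1/p$ (so $i_m$ lies outside $H_1\cup\ldots\cup H_{r-1}$), $\MPB_{i_{m-1}}=1$ and $v_{i_{m-1}}(c_m)=-p$, and (for the lemma to fail) $i_{m-1}\in H_1\cup\ldots\cup H_\ell$. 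The plan is to rule this out by arguing that the least spender $\ls$ during Phase~2b of iteration $r$ must lie in $H_{\ell+1}\cup\ldots\cup H_{r-1}$: after the Phase~3 price reductions in iterations $\ell+1,\ldots,r-1$, the bundles of the newly-added $H$-set agents become strictly cheaper than those of every agent in $H_1\cup\ldots\cup H_\ell$ (using \hypnotviolator at $t_{\ell,b}$ combined with \Cref{lem:properties-phase2b}(b) to propagate the ordering through the subsequent Phases~2a and~2b). Shortest-path minimality then forces every agent on the path to lie outside $H_1\cup\ldots\cup H_\ell$, ruling out the escape.

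The main obstacle is this spending-ordering plus minimality argument, which must be established uniformly across all intermediate iterations $r' \in \{\ell+1,\ldots,r-1\}$. An induction on $k$ (equivalently, on $r$ running from $\ell+1$ up to $k$) propagates the required spending invariants as each Phase~3 reduces prices; combined with the straightforward analysis of Phase~2a above, this yields the desired containment at time $t_{k,b}$.
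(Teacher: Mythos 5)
Your approach is fundamentally different from the paper's and, as sketched, does not close the key gap. The paper's proof is short and non-constructive: it fixes $c \in \xx^{t_{\ell,b}}_s$ with $s \in H_r$ and splits into two cases depending on whether $c$'s price ever changes during iterations $\ell+1,\dots,k$. If it does change at iteration $u$, then $c$ becomes an entitled chore of some $i \in H_u$ and \hypentitled pins $c$ in $\xx_i$ through $t_{k,b}$. If it never changes, then $\PB_s(c)$ is unchanged and (by \hypmpb for iteration $r-1$) $\MPB_s$ stays $1/p$, so $c$ remains an MPB chore for $s$ at $t_{r,b}$; the edge $s \stackrel{c}{\gets} j$ to the owner $j$ at $t_{r,b}$ then places $j \in H_r$, contradicting that $c$'s price did not change at iteration $r$. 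Note that this argument makes essential use of \hypentitled, and does not need to track the chore's path at all -- in particular, it is compatible with the chore temporarily leaving the set and coming back, which you cannot rule out.

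Your step-by-step tracking attempt has concrete problems. First, the claim that the least spender during Phase~2b of iteration $r$ lies in $H_{\ell+1}\cup\ldots\cup H_{r-1}$ is almost certainly false: the least spender at $t_{r,b}$ is, by definition of $H_r$, a member of $H_r$, and $H_r$ is disjoint from $H_{\ell+1}\cup\ldots\cup H_{r-1}$ by \hypdisjoint. Nothing in \hypnotviolator or \Cref{lem:properties-phase2b}(2) gives the strict spending dominance you want, since \hypnotviolator bounds $\priceone(\xx_i)$, not $\price(\xx_i)$. Second, even granting the least-spender claim, the appeal to ``shortest-path minimality forces every agent on the path to lie outside $H_1\cup\ldots\cup H_\ell$'' is not an argument: minimality constrains path length, not $H$-set membership of the intermediate agents, and the paper's analogous argument in \Cref{lem:phase-2b-analysis} works only because of the specific observation that every price-$p$ chore is an MPB chore for every agent with $\MPB = 1$, an observation you would need to spell out. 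Third, your Phase~2b case analysis appears to omit the case $\MPB_{i_m} = 1$, i.e.\ $i_m \in H_{\ell+1}\cup\ldots\cup H_{r-1}$, giving $c_m$ to some $i_{m-1} \in H_1\cup\ldots\cup H_\ell$; this is exactly the kind of transfer that \Cref{lem:phase-2b-analysis} rules out, but that lemma is proved \emph{after} \Cref{lem:possession-subset} and is not available here. The crucial idea you are missing is \hypentitled: once $c$'s price is cut, its owner at that moment must keep it forever, which is what eliminates the need to track transfers at all.
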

\begin{proof}
Consider some agent $s \in H_r \subseteq H_{\ell + 1} \cup \ldots \cup H_{k}$ and some item $c \in \xx^{t_{\ell,b}}_s$.
Suppose that the price of item $c$ changes at some iteration $u$ with $\ell+1 \le u \le k$. Then $c \in \entitled(i)$ for some $i \in H_u$, and then by \hypentitled we have $c \in \xx^{t_{k,b}}_i$ as desired. Otherwise, the price of $c$ does not change. Consider the owner of $c$ at time $t_{r,b}$, say $c \in \xx^{t_{r,b}}_j$. Then $j \not\in H_r$ since the price of $c$ did not change. Now, at time $t_{\ell,b}$, agent $s$ owned $c$, so at that time $\PB_s(c) = \MPB_s$ (because the algorithm is always in equilibrium by \Cref{lem:always-equilibrium}).  Since the price of $c$ has not changed, and since the value of $\MPB_s$ has continued to be $1/p$ by \hypmpb applied to iteration $r-1$, we still have $\PB_s(c) = \MPB_s$ at time $t_{r,b}$. But then since $s \in H_r$, we also have $j \in H_r$, a contradiction.
\end{proof}

The next lemma is a sort of load balancing lemma. Intuitively, it says that if a group of agents are allocated some chores of equal price, and over the time, they receive more chores and the prices of the chores increase, then regardless of how the distribution of those chores between the agents changes, the minimum spending in the group can only increase. 
\begin{lemma}
	\label{lem:higher-prices-higher-ls}
	Let $\agents$ be a set of agents. Let $\items$ and $\items'$ be two sets of chores with $|\items| \le |\items'|$.
	Let $(\xx,\price)$ and $(\xx',\price')$ be pEF1 equilibria, where $\xx$ and $\xx'$ are allocations of $\items$ and $\items'$, respectively, to the agents in $\agents$. Suppose that $\price(c) = 1$ for all $c \in \items$ and that $\price'(c') \in \set{1, p}$ for all $c' \in \items'$. Then
	$
	\min_{i \in \agents} \price(\xx_i) \le \min_{i \in \agents} \price'(\xx'_{i}).
	$
\end{lemma}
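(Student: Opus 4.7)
The plan is to reduce the claim to a counting argument relying only on pEF1 and the fact that every chore costs at least $1$.

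First, I would compute the left-hand side exactly. Since $\price(c) = 1$ for every $c \in \items$, we have $\price(\xx_i) = |\xx_i|$, and pEF1 forces bundle sizes to differ by at most one (because $|\xx_i| - 1 = \priceone(\xx_i) \le \price(\xx_j) = |\xx_j|$ whenever $\xx_i \neq \emptyset$); combined with $\sum_i |\xx_i| = m$ this yields $\min_i \price(\xx_i) = \lfloor m/n \rfloor$, writing $n = |\agents|$. Let $q = \lfloor m/n \rfloor$, so it remains to show $k := \min_i \price'(\xx'_i) \ge q$.

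Next, I would bound each bundle size in $\xx'$. Let $i^* \in \argmin_i \price'(\xx'_i)$. Since every chore in $\items'$ has price at least $1$, $|\xx'_{i^*}| \le \price'(\xx'_{i^*}) = k$, so by integrality $|\xx'_{i^*}| \le \lfloor k \rfloor$. For any other agent $j$, pEF1 gives $\priceone(\xx'_j) \le k$; removing the max-priced chore from $\xx'_j$ leaves $|\xx'_j| - 1$ chores each of price at least $1$, so $\priceone(\xx'_j) \ge |\xx'_j| - 1$, and thus $|\xx'_j| \le \lfloor k \rfloor + 1$ (again by integrality). Summing across agents and using $m \le m'$:
\begin{equation*}
nq \le m \le m' = \sum_{j \in \agents} |\xx'_j| \le \lfloor k \rfloor + (n-1)(\lfloor k \rfloor + 1) = n\lfloor k \rfloor + n - 1,
\end{equation*}
so $\lfloor k \rfloor \ge q - 1 + 1/n$, which forces $\lfloor k \rfloor \ge q$ since $\lfloor k \rfloor$ and $q$ are integers and $1/n > 0$; hence $k \ge q$, as desired.

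The main delicacy I expect is that $k$ itself need not be an integer---it has the form $a + bp$ for $a, b \in \N_{\ge 0}$, and $p$ may be irrational---so I cannot directly compare $k$ to the integer $q$. The trick that resolves this is to pass through the bundle sizes $|\xx'_j|$, which are always integers; rounding $k$ down then only loses an additive $1/n$ in the final inequality, which is exactly the slack needed to promote $\lfloor k \rfloor > q - 1$ to $\lfloor k \rfloor \ge q$.
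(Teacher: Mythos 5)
Your proof is correct and uses essentially the same counting argument as the paper: both establish $\min_i \price(\xx_i) = \lfloor |\items|/n \rfloor$ from pEF1 with unit prices, bound the bundle sizes under $\xx'$ via pEF1 and the fact that every price is at least $1$, and then sum across agents to reach the conclusion (the paper phrases it as a contradiction, you argue directly, but the mechanism is identical).
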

\begin{proof}
	Let the least spenders of $\xx$ and $\xx'$ be $\ls_1$ and $\ls_2$ respectively. 
	Let $k = \floor{\nicefrac{\len{\items}}{n}}$.
	Because $\xx$ is pEF1 and all items are priced $1$, then each agent is allocated $k$ or $k + 1$ items in $\xx$ and so $\price(\xx_{\ls_1}) = k$. 
	
	For a contradiction, assume that $\price'(\xx'_{\ls_2}) < \price(\xx_{\ls_1}) = k$. Then $\len{\xx'_{\ls_2}} < k$. For other agents $o \in \agents \setminus \set{\ls_2}$, we have $\len{\xx'_o} - 1 \le \priceone'(\xx'_o) \le \price'(\xx'_{\ls_2}) < k$, where the first inequality holds because $\price'(c) \ge 1$ for all $c$, and the second holds because $\xx'$ is pEF1. Thus $\len{\xx'_o} - 1 < k$ and so $\len{\xx'_o} \le k$. Now,
	\[
	\len{\items'} = \len{\xx'_{\ls_2}} + \sum_{o \in \agents \setminus \set{ls_2}} \len{\xx'_o} \le k - 1 + (n - 1)k = nk - 1,
	\]
	However, $\len{\items'} \ge \len{\items} \ge nk$, which is a contradiction.
\end{proof}

We are now ready to prove \Cref{lem:non-entitled-transferable}.
\begin{proof}[Proof of \Cref{lem:non-entitled-transferable}]
	We prove the second part first. Any $c \in \xx_i \setminus \entitled(i)$ must have $\price(c) = p$ by \hypprices for iteration $k$. Since $j \in H_1 \cup \dots \cup H_k$, we have $\MPB_j = 1$ by \hypmpb applied to iteration $k$. Thus $\PB_j(c)$ cannot be $1/p$, so $\PB_j(c) = 1 = \MPB_j$. Hence $c$ is an MPB chore for $j$.
	
	For the first part, assume that there is some time $t$ when the algorithm select agents $i \in H_\ell$ and $j \in H_{\ell+1}\cup \dots \cup H_k$ where no $c \in \xx_i^t \setminus \entitled(i)$ exists. Let $t$ be the first such time. By \cref{line:phase-2a-violator-condition}, we have $\priceone^t(\xx_i^t) > \price^t(\xx_j^t)$. 
By \hypentitled, we had $\entitled(i) \subseteq \xx_i^{t_{k},b}$. Since $t$ is the time of first failure, so far no entitled item has been transferred away from $i$. Thus $\xx_i^t = \entitled(i)$. By definition of $\entitled(i)$ and since $i \in H_\ell$, thus $\xx_i^t = \xx_i^{t_{\ell,a}}$. Now we have
	\begin{align*}
		\priceone^t(\xx_i^t) &= \priceone^{t_{\ell,a}}(\xx^{t_{\ell,a}}_i) \tag{since prices of entitled chores have not changed after $t_{\ell,a}$ by \hypdisjoint} \\
		&= \tfrac1p \cdot \priceone^{t_{\ell,b}}(\xx^{t_{\ell,b}}_i) \tag{since $\alpha = \frac1p$ in iteration $\ell$ by \hypalpha} \\
		&\le \tfrac1p \cdot \text{min}_{o \in \agents} \: \price^{t_{\ell,b}}(\xx^{t_{\ell,b}}_o). \tag{since $i \in H_\ell$ was not a violator at $t_{\ell,b}$ by \hypnotviolator}
	\end{align*}
	By \Cref{lem:possession-subset},
	$
	\xx^{t_{\ell, b}}(H_{\ell + 1} \cup \ldots \cup H_{k}) \subseteq \xx^{t_{k, b}} (H_{\ell + 1} \cup \ldots \cup H_{k}).
	$
	And, so far in Phase 2a of iteration $k+1$, no item has been transferred out of $H_{\ell + 1} \cup \ldots \cup H_{k}$. Therefore, 
	$
	\xx^{t_{k, b}} (H_{\ell + 1} \cup \ldots \cup H_{k}) \subseteq \xx^{t} (H_{\ell + 1} \cup \ldots \cup H_{k}).
	$
	By \hypprices, at $t_{\ell, b}$, all chores owned by $H_{\ell + 1} \cup \ldots \cup H_{k}$ were priced $p$. Now, at $t$, they own a superset of those chores with prices of $1$ or $p$. By applying \Cref{lem:higher-prices-higher-ls} with the first set of chores being $\xx^{t_{\ell, b}}(H_{\ell + 1} \cup \ldots \cup H_{k})$ all priced 1, and the second set being $\xx^{t'} (H_{\ell + 1} \cup \ldots \cup H_{k})$ with their current prices at time $t'$, we can conclude that
	\[
	\tfrac1p \cdot  \min_{o \in H_{\ell + 1} \cup \ldots \cup H_{k}} \price^{t_{\ell, b}}(\xx_o^{t_{\ell, b}}) \le \min_{o \in H_{\ell + 1} \cup \ldots \cup H_{k}} \price^{t}(\xx_o^t) = \price^{t} (\xx_j^t),
	\]
	where the last equality holds by choice of $j$.
	Combining this with the previous inequalities, we get $\priceone^t(\xx_i^t) \le \price^{t} (\xx_j^t)$, a contradiction.
\end{proof} 
The next lemma proves the usefulness of Phase 2a, which is that at the end of this phase, no agent in $H_1 \cup \dots \cup H_k$ (i.e., no agent who has gone through a price reduction) can be a violator. 
\newcommand{\tmid}{t_{\text{mid}}}
\begin{lemma}
	\label{lemma:phase2a-no-violators}
	Let $\tmid$ denote the time when the algorithm reaches \cref{line:phase-2b-start} in iteration $k + 1$, i.e. when Phase 2a ends and Phase 2b begins. We claim that at time $\tmid$, no agent in $H_1 \cup \dots \cup H_k$ is a violator.
\end{lemma}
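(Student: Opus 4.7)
The plan is to prove the lemma by downward induction on the outer for-loop index $\ell$ in Phase 2a of iteration $k+1$, maintaining the strengthened invariant that at time $T_\ell$, the moment the inner while loop at index $\ell$ terminates, every agent in $H_\ell \cup H_{\ell+1} \cup \cdots \cup H_k$ is a non-violator; taking $T_k := t_{k,a}$ as the base case and observing $T_1 = \tmid$ then yields the lemma.

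Two preparatory observations drive the argument. First, by \hypprices together with \Cref{lem:non-entitled-transferable}, every chore transferred during Phase 2a is non-entitled and therefore has price exactly $p$, so each transfer shifts precisely $p$ units of spending between source and receiver; since prices lie in $\set{1,p}$, a short case analysis gives the identity $\priceone(\xx_j^{\mathrm{new}}) = \price(\xx_j^{\mathrm{old}})$ whenever $j$ receives such a chore. Second, an auxiliary separation claim---proved in the spirit of \Cref{lem:non-entitled-transferable} using \Cref{lem:possession-subset} and \Cref{lem:higher-prices-higher-ls}---asserts that throughout Phase 2a the overall least spender lies inside $H_1 \cup \cdots \cup H_k$ with a factor-$p$ gap separating it from agents outside, and that at each time $T_\ell$ the overall least spender in fact lies in $H_{\ell+1} \cup \cdots \cup H_k$.

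The base case at $T_k = t_{k,a}$ follows from \hypnotviolator scaled by $1/p$: Phase 3 of iteration $k$ divides the prices of $H_k$'s chores by $p$, so every $i \in H_k$ satisfies $\priceone(\xx_i^{t_{k,a}}) = \priceone(\xx_i^{t_{k,b}})/p \le \price(\xx_\ls^{t_{k,b}})/p = \price(\xx_\ls^{t_{k,a}})$, where the last equality uses that the overall least spender at $t_{k,a}$ sits in $H_k$. For the inductive step from $T_{\ell+1}$ to $T_\ell$: inside the while loop at $\ell$, each transfer gives a price-$p$ chore to the current argmin-spender $j \in H_{\ell+1} \cup \cdots \cup H_k$, and by the identity together with the separation claim $\priceone(\xx_j^{\mathrm{new}}) = \price(\xx_j^{\mathrm{old}}) \le \price(\xx_\ls^{\mathrm{new}})$, so $j$ remains a non-violator; non-receivers in the same block are untouched and retain non-violator status because the overall minimum is non-decreasing during the loop. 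When the loop terminates, $\max_{i \in H_\ell} \priceone(\xx_i) \le \min_{j \in H_{\ell+1} \cup \cdots \cup H_k} \price(\xx_j) = \price(\xx_\ls^{T_\ell})$ places $H_\ell$ into the non-violator set.

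The main obstacle will be establishing and maintaining the separation claim. The strategy there is to compare the current state against the earlier timestamp $t_{\ell,b}$: by \hypprices all chores in $H_{\ell+1} \cup \cdots \cup H_k$ were then priced at $p$, and by \hypnotviolator the least spender of that block coincided with the overall least spender; then \Cref{lem:possession-subset} together with \Cref{lem:higher-prices-higher-ls} would propagate the lower bound on $\min_{H_{\ell+1} \cup \cdots \cup H_k} \price(\xx)$ forward through the price reductions of iterations $\ell+1, \ldots, k$ and the transfers of Phase 2a of iteration $k+1$, so that it continues to dominate the spending of any agent outside the block.
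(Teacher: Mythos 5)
Your high-level structure (downward induction on $\ell$, with the invariant at each $T_\ell$) matches the paper's, but the argument hinges on a ``separation claim'' that is false as stated and whose proposed proof goes in the wrong direction. Concretely, you assert that at each $T_\ell$ the overall least spender lies in $H_{\ell+1}\cup\dots\cup H_k$, and you use the resulting equality $\min_{j\in H_{\ell+1}\cup\dots\cup H_k}\price(\xx_j)=\price(\xx_{\ls}^{T_\ell})$ both to place $H_\ell$ among non-violators at $T_\ell$ and to argue that receivers in the block stay non-violators. But after transfers, the overall least spender can migrate into $H_\ell$. A small example: $p=2$, $k=2$, $H_1=\set{i}$, $H_2=\set{j}$; at $t_{2,a}$ agent $j$ holds one chore priced $1$ (spending $1$) and agent $i$ holds two entitled chores priced $1$ plus one non-entitled chore priced $2$ (spending $4$, $\priceone=2$). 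The while loop at $\ell=1$ transfers the price-$2$ chore to $j$, yielding spendings $\price(\xx_i)=2$, $\price(\xx_j)=3$, and stops. At $T_1$ the overall least spender is $i\in H_1$, not in $H_2$, so the separation claim fails; the quantity $\min_{H_2}\price(\xx_j)=3$ is strictly larger than $\price(\xx_{\ls})=2$, so the chain of inequalities you rely on does not establish what is needed. (The lemma's conclusion still happens to hold here, but your route to it does not.) Moreover, the suggested way to prove the separation claim via \Cref{lem:higher-prices-higher-ls} points the wrong way: that lemma provides a \emph{lower} bound on $\min_{H_{\ell+1}\cup\dots\cup H_k}\price(\xx)$, whereas locating the overall least spender in that block requires showing this minimum is \emph{below} the spendings of agents outside the block.

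The paper avoids the issue entirely: it never asserts anything about where the global least spender sits. Instead it (i) shows $\max_{i\in R_{\ell-1}}\priceone(\xx_i)$ is non-increasing during the while loop and that the spendings of $\agents\setminus R_{\ell-1}$ are untouched, giving $\noviol(R_{\ell-1}\to\agents\setminus R_{\ell-1})$, and (ii) maintains three sub-invariants $\noviol(H_{\ell-1}\to H_{\ell-1})$, $\noviol(R_\ell\to R_\ell)$, $\noviol(R_\ell\to H_{\ell-1})$ through every single transfer, combining them with the loop-termination condition $\noviol(H_{\ell-1}\to R_\ell)$ to obtain $\noviol(R_{\ell-1}\to R_{\ell-1})$. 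Your proof is also missing the $\noviol(H_\ell\to H_\ell)$ part, which the loop's stopping condition alone cannot deliver (the loop only compares $H_\ell$'s $\priceone$ against the block's minimum, not against $H_\ell$'s own spendings); this is exactly what the paper's invariant (a) supplies. To repair your argument you would essentially need to reintroduce these pairwise invariants, at which point it collapses to the paper's proof.
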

\begin{proof}
	For an allocation $\xx$ and sets $S,T \subseteq \agents$, let us write
	\[
	\noviol(\xx, S \to T) \iff \priceone(\xx_i) \le \price(\xx_j) \text{ for all $i \in S$ and $j \in T$}.
	\]
	In this notation, an agent $i$ is not a violator if and only if $\noviol(\xx, \{i\} \to \agents)$.
	
	For each $\ell \in [k]$, write $R_{\ell} = H_\ell \cup \dots \cup H_k$. By \hypnotviolator applied to iteration $k$, no agent in $R_\ell$ is a violator at time $t_{k,b}$. In particular, this means that at time $t_{k,b}$ we have $\noviol(\xx, R_{\ell} \to \agents \setminus R_{\ell})$ for each $\ell \in [k]$. The same is true at time $t_{k,a}$, because the price reduction reduces the prices of goods held by $R_\ell$ but not those held by $\agents\setminus R_\ell$. So we have
	\begin{equation}
		\label{eq:no-viol-rl-base}
		\noviol(\xx, R_{\ell} \to \agents \setminus R_{\ell}) \quad \text{for all $\ell\in[k]$, at time $t_{k,a}$.}
	\end{equation}

	\paragraph{Induction on $\ell$} Now, we prove inductively for $\ell = k-1, \dots, 1$, that at the end of the for-loop iteration corresponding to $\ell$, we have
	\begin{equation}
		\label{eq:no-viol-rl-induction-hypothesis}
		\noviol(\xx^{\ell}, R_{\ell} \to \agents),
\end{equation}
	where $\xx^\ell$ is the allocation at the end of the $\ell$-th iteration of the for-loop of Phase 2a.
	
	As a base case, we take $\ell = k$, noting that just before the for-loop starts (i.e. at time $t_{k,a}$), we have $\noviol(\xx^{k}, R_{k} \to \agents)$ because
	\begin{itemize}
		\item from \eqref{eq:no-viol-rl-base} we know that $\noviol(\xx^{k}, R_{k} \to \agents \setminus R_k)$, and
		\item from \hypnotviolator we had $\noviol(\xx^{k}, H_{k} \to H_k)$ at time $t_{k,b}$, and since the price reduction changes the prices of all chores held by $H_k$ by the same factor $\alpha$, we also have $\noviol(\xx^{k}, H_{k} \to H_k)$ at time $t_{k,a}$.  Since $R_k = H_k$, hence $\noviol(\xx^{k}, R_{k} \to R_k)$.
	\end{itemize}
	
	Once we have established the induction step, we can apply \eqref{eq:no-viol-rl-induction-hypothesis} for $\ell = 1$ to get that $R_1 = H_1 \cup \dots \cup H_k$ are not violators at $\tmid$, which is the lemma statement. \\
	
	Suppose that \eqref{eq:no-viol-rl-induction-hypothesis} holds for $\ell$ at the start of iteration $\ell - 1$. (This is true either by the base case, or because \eqref{eq:no-viol-rl-induction-hypothesis} for $\ell$ held at the end of iteration $\ell$ and hence also at the start of iteration $\ell - 1$.) Our goal is to show that \eqref{eq:no-viol-rl-induction-hypothesis} holds for $\ell-1$ at the end of iteration $\ell -1$. We prove this in two steps: first, we show $\noviol(\xx^{\ell-1},R_{\ell-1} \to \agents \setminus R_{\ell-1})$, and then we show $\noviol(\xx^{\ell-1},R_{\ell-1} \to R_{\ell-1})$.
	
	\paragraph{No violators to $\agents \setminus R_{\ell - 1}$.} For all agents $s \in \agents \setminus R_{\ell - 1}$, at the start of iteration $\ell - 1$ we have
	\begin{itemize}
		\item $\max_{i \in R_{\ell}} \priceone(\xx^\ell_i) \le \price(\xx^\ell_s)$ by the induction hypothesis \eqref{eq:no-viol-rl-induction-hypothesis}, and
		\item $\max_{i \in H_{\ell - 1}} \priceone(\xx^\ell_i) \le \price(\xx^\ell_s)$ by \eqref{eq:no-viol-rl-base}, noting that the bundles of $H_{\ell-1}$ and of $\agents \setminus R_{\ell - 1}$ have not been changed since $t_{k,a}$.
	\end{itemize}
	Hence, $\max_{i \in R_{\ell - 1}} \priceone(\xx^\ell_i) \le \price(\xx^\ell_s)$. 
	
	During the execution of iteration $\ell - 1$ of Phase 2a, the value $\max_{i\in R_{\ell - 1}}\priceone(\xx_i)$ never increases: This is because if we transfer $c$ from agent $i$ to agent $j$, then 
	\begin{itemize}
		\item $\priceone(\xx_i)$ decreases because $i$ gave away an item,
		\item $\priceone(\xx_j)$ increases but not too much: Write $\xx$ and $\xx'$ for the allocations before and after the transfer, and recall that $\priceone(\xx_i) > \price(\xx_j)$ since we performed the transfer. Then
		\[
		\priceone(\xx_j') = \priceone(\xx_j \cup \{c\}) \le \price(\xx_j) < \priceone(\xx_i).
		\]
		Hence $\priceone(\xx_j')$ is smaller than the previous maximum value of $\priceone$.
	\end{itemize}
	Recall that $\xx^{\ell-1}$ is the allocation at the end of iteration $\ell - 1$. Thus for all $s \in \agents \setminus R_{\ell - 1}$,
	\[
	\max_{i\in R_{\ell - 1}}\priceone(\xx_i^{\ell - 1}) \le \max_{i\in R_{\ell - 1}}\priceone(\xx_i^{\ell}) \le \price(\xx_s^{\ell}) =\price(\xx_s^{\ell - 1})
	\]
	where the last equality holds because the bundle $\xx_s$ has not changed since the start of iteration $\ell - 1$. Therefore, $\noviol(\xx^{\ell - 1}, R_{\ell - 1} \to \agents \setminus R_{\ell - 1})$. 
	
	\paragraph{No violators for $ R_{\ell - 1}$.}
	
At the start of iteration $\ell - 1$ we have all of the following:
	\begin{enumerate}
		\item[(a)] $\noviol(\xx^\ell, H_{\ell - 1} \to H_{\ell - 1})$ by \hypnotviolator since the bundles of $H_{\ell-1}$ have not changed since $t_{k, b}$,
		\item[(b)] $\noviol(\xx^\ell, R_{\ell } \to R_{\ell})$ by inductive hypothesis \eqref{eq:no-viol-rl-induction-hypothesis},
		\item[(c)] $\noviol(\xx^\ell, R_\ell \to H_{\ell - 1})$ by inductive hypothesis \eqref{eq:no-viol-rl-induction-hypothesis}.
	\end{enumerate}
	We now show inductively that after each transfer, we still have (a), (b), and (c).
	
	So suppose (a), (b), and (c) hold for allocation $\xx$. We now transfer item $c$ from $i \in H_{\ell - 1}$ to $j \in R_\ell$, obtaining allocation $\xx'$. We show that (a), (b), and (c) also hold for allocation $\xx'$.
	
	\begin{enumerate}
		\item[(a)] For all $s \in H_{\ell - 1}$,
		\[
		\priceone(\xx'_s) \le \priceone(\xx_s) \le \priceone(\xx_i) \le \price(\xx_i) - \price(c) = \price(\xx'_i),
		\]
		where the first inequality holds because $H_{\ell-1}$ did not receive items, and the second inequality holds by choice of $i$. Hence, $\noviol(\xx', H_{\ell - 1} \to \set{i})$. Because (a) held before the transfer, because the transfer did not change the bundles for $H_{\ell - 1} \setminus \{i\}$, and because $\priceone(\xx_i') \le \priceone(\xx_i)$, we have $\noviol(\xx', H_{\ell - 1} \to H_{\ell - 1} \setminus \{i\})$. Putting the two together, we have $\noviol(\xx', H_{\ell - 1} \to H_{\ell - 1})$.
		\item[(b)] For all $s \in R_{\ell}$,
		\[
		\priceone(\xx'_j) \le \price(\xx_j \cup \set{c}) - \price(c) = \price(\xx_j) \le \price(\xx_s) \le \price(\xx'_s),
		\]
		where the penultimate inequality holds by choice of $j$ and the last because $R_\ell$ does not give away items.
		Therefore, $\noviol(\xx', \set{j} \to R_{\ell})$.  Because (b) held before the transfer, because the allocation did not change for $R_{\ell} \setminus \{j\}$, and because $\price(\xx'_j) \ge \price(\xx_j)$, we have $\noviol(\xx', R_{\ell} \setminus \{j\} \to R_{\ell})$. Putting the two together, we have $\noviol(\xx', R_{\ell} \to R_{\ell})$.
		\item[(c)] 
		Note first that since $c$ is not an entitled chore, we have $\price(c) = p$ due to \hypprices, so $c$ is a chore with maximum price. Therefore
		\begin{equation}
			\label{eq:max-price}
			\priceone(\xx'_j) = \price(\xx_j \cup \{c\}) - \price(c) = \price(\xx_j) 
			\quad \text{and} \quad
			\priceone(\xx_i) = \price(\xx_i) - \price(c) = \price(\xx_i'). 
		\end{equation}
		We show that (c) holds for $\xx'$ in four parts.
		\begin{itemize}
			\item Because (c) held before the transfer, we have $\noviol(\xx, R_{\ell} \setminus \set{j} \to H_{\ell - 1} \setminus \set{i})$. Since the transfer did not change the bundles of $R_\ell \setminus \set{j}$ and of $H_{\ell - 1} \setminus \set{i}$, we thus have $\noviol(\xx', R_{\ell} \setminus \set{j} \to H_{\ell - 1} \setminus \set{i})$.
			\item Using \eqref{eq:max-price}, and because we performed the transfer, we have
			\begin{equation}
				\label{eq:j-does-not-violate-i}
				\priceone(\xx'_j) = \price(\xx_j) < \priceone(\xx_i) = \price(\xx_i').
			\end{equation}
			Hence $\noviol(\xx', \set{j} \to \set{i})$.
			\item For all $s \in H_{\ell - 1} \setminus \set{i}$, we have
			\[
			\priceone(\xx'_j) \overset{\eqref{eq:j-does-not-violate-i}}{<} \price(\xx'_i) \overset{\eqref{eq:max-price}}{=} \priceone(\xx_i) \le \price(\xx_s) = \price(\xx'_s)
			\]
			which shows $\noviol(\xx', \set{j} \to H_{\ell - 1} \setminus \set{i})$.
			\item For all $s \in R_{\ell} \setminus \set{j}$, we have
			\[
			\priceone(\xx'_s) = \priceone(\xx_s) \le \price(\xx_j) \overset{\eqref{eq:max-price}}{=}  \priceone(\xx'_j) \overset{\eqref{eq:j-does-not-violate-i}}{<} \price(\xx'_i)
			\]
			which shows $\noviol(\xx', R_i \setminus \set{j} \to \set{i})$.
		\end{itemize}
		Putting all these together, we get $\noviol({\xx', R_{\ell} \to H_{\ell - 1}})$, which is (c). 
	\end{enumerate}
	The induction tells us that (a), (b), and (c) hold when iteration $\ell-1$ finishes, i.e. they hold for $\xx^{\ell-1}$. In addition, because the iteration ended (\cref{line:phase-2a-violator-condition}), we must have $\noviol(\xx^{\ell - 1} , H_{\ell - 1}  \to R_{\ell})$. This together with (a), (b), and (c) gives $\noviol(\xx^{\ell - 1} , R_{\ell - 1}  \to R_{\ell - 1})$, as desired.
\end{proof}
 
The next lemma proves a useful guarantee for Phase 2b.
\begin{lemma}
	\label{lem:phase-2b-analysis}
	During the execution of Phase 2b in iteration $k + 1$, no entitled items are transferred. Further, at the end of Phase 2b, no agent $i \in H_1 \cup \cdots \cup H_k$ is a violator.
\end{lemma}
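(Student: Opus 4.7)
The plan is to maintain a joint invariant (I) throughout Phase 2b: no agent in $H := H_1 \cup \cdots \cup H_k$ is a violator, and no entitled chore has been transferred. I will prove (I) by induction on the number of transfers performed so far in Phase 2b; the lemma follows directly upon termination.

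\emph{Base case.} At time $\tmid$ (when Phase 2b begins), (I) holds: no $H$-agent is a violator by \Cref{lemma:phase2a-no-violators}, and \Cref{lem:non-entitled-transferable} guarantees that Phase 2a only ever moves chores from $\xx_i \setminus \entitled(i)$, so by \hypentitled for iteration $k$ every entitled chore still resides with its $H$-owner.

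\emph{Inductive step.} Consider the next transfer of $c_\ell$ from the violator $i_\ell$ to $i_{\ell-1}$ along the chosen shortest MPB alternating path $\ls \stackrel{c_1}{\gets} \cdots \stackrel{c_\ell}{\gets} i_\ell$. By (I), $i_\ell \in \agents \setminus H$, so $\MPB_{i_\ell} = 1/p$ by \hypmpb. Because values and prices both lie in $\{1,p\}$ (by \hypprices), the only chores that can be MPB for $i_\ell$ have price $p$; hence every chore in $\xx_{i_\ell}$ has price $p$, and since entitled chores have price $1$, the transferred chore $c_\ell$ is not entitled. This preserves the ``no entitled chore moved'' part of (I).

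For the ``no $H$-violator'' part, only $\xx_{i_{\ell-1}}$ grows, so the only candidate for a new $H$-violator is $i_{\ell-1}$, and only when $i_{\ell-1} \in H$. In that case $\MPB_{i_{\ell-1}} = 1$, and $c_\ell$ being MPB for $i_{\ell-1}$ with $\price(c_\ell) = p$ forces $|v_{i_{\ell-1}}(c_\ell)| = p$. Because $c_\ell$ attains the maximum possible price $p$, adding it yields $\priceone(\xx_{i_{\ell-1}} \cup \{c_\ell\}) = \price(\xx_{i_{\ell-1}})$. The main obstacle is showing this quantity is bounded by the new least spending. The argument combines (i) the minimality of the chosen path---its length-$(\ell-1)$ subpath to $i_{\ell-1}$ was not taken, so $\priceone(\xx_{i_{\ell-1}}) \le \price(\xx_\ls)$ just before the transfer; (ii) the bivalued structure, which restricts the maximum price present in $\xx_{i_{\ell-1}}$ to $\{0,1,p\}$; and (iii) \Cref{lem:properties-phase2b}(b), which states that the least spending never decreases during Phase 2b. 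A short case analysis on the pre-transfer maximum price in $\xx_{i_{\ell-1}}$ then yields $\priceone(\xx_{i_{\ell-1}} \cup \{c_\ell\}) \le \price(\xx_{\ls})$ after the transfer, so $i_{\ell-1}$ is not a violator.

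When the while loop of Phase 2b exits at \cref{line:phase-2b-end}, invariant (I) still holds, establishing both claims of the lemma.
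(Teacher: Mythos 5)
Your invariant (I) — that \emph{after every transfer} no agent in $H := H_1 \cup \cdots \cup H_k$ is a violator — is too strong, and this is where the proof breaks. Take $i_{\ell-1} \in H$ receiving $c_\ell$ (priced $p$). As you note, $\priceone(\xx_{i_{\ell-1}} \cup \{c_\ell\}) = \price(\xx_{i_{\ell-1}})$. But the minimality-of-path fact you cite gives only $\priceone(\xx_{i_{\ell-1}}) = \price(\xx_{i_{\ell-1}}) - \max_{c \in \xx_{i_{\ell-1}}}\price(c) \le \price(\xx_\ls)$, so $\price(\xx_{i_{\ell-1}}) \le \price(\xx_\ls) + p$ whenever $\xx_{i_{\ell-1}}$ contains a $p$-priced chore (which is perfectly possible for an $H$-agent: by \hypmpb and \hypprices, $i_{\ell-1} \in H$ owns any chore $c$ with $\price(c) = p$ and $v_{i_{\ell-1}}(c) = -p$). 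Neither the ``case analysis on the maximum price'' nor \Cref{lem:properties-phase2b}(b) (least spending is nondecreasing, so it cannot become \emph{smaller} and save you) closes this gap of $p$. Concretely, with $p = 2$, an $H$-agent holding two chores priced $p$ and $\price(\xx_\ls) = p$ is not a violator, but becomes one as soon as it receives a third $p$-priced chore.

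The paper accepts that an $H$-agent can \emph{temporarily} become a violator right after a receive and instead proves a weaker, two-step claim: if $i \in H$ becomes a violator at time $t'$ by receiving $c$ from $s \not\in H$, then by shortest-path minimality $i$ is the unique closest violator at $t'$, so the \emph{very next} transfer sends some $c'$ out of $i$ to some $j$; moreover $j \not\in H$ (else a length-$\ell$ path to $s$ would have existed at time $t$), $c'$ must have $\price(c') = p$ and $v_j(c') = -1$, hence $c'$ is not a very difficult chore and so not in $\entitled(i)$; and since $i$'s bundle gained and then lost one $p$-priced chore while least spending did not decrease, $i$ is a non-violator again at $t''$. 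This ``becomes a violator for exactly one step, then recovers by shedding a non-entitled chore to a non-$H$ agent'' analysis is the key idea your write-up is missing. Your proof also never rules out that an $H$-agent becomes a violator from a transfer originating inside $H$; the paper handles this by a ``first such occurrence'' argument. Separately, your side remark that ``entitled chores have price 1'' is false — entitled very-difficult chores in $H_1$ have price $p$ — though that particular slip is not what sinks the argument.
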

\begin{proof}
	Write $R = H_1 \cup \dots \cup H_k$.
	Recall that at the start of Phase 2b, no agent in $R$ was a violator (\Cref{lemma:phase2a-no-violators}). Thus, for $i\in R$ to give away a chore in Phase 2b, $i$ needs to become a violator. This can only happen if $i$ receives a chore during Phase 2b. 
	
	Consider a transfer during Phase 2b of a chore $c$ from $s \not\in R$ to $i \in R$ at time $t$, after which $i$ becomes a violator. At time $t$, there was an MPB alternating path $\ls \reachable \smash{i \stackrel{c}{\gets} s}$ of length $\ell + 1$, say. (At this time, $i$ cannot be a least spender, because a least spender cannot become a violator after being given one item.) Since Phase 2b chooses MPB alternating paths of minimum length, there was no suitable path of length $\ell$ available. At time $t'$ (the time step immediately after the transfer), there is an MPB alternating path $\ls \reachable i$ of length $\ell$, which means that $i$ is the violator uniquely closest to $\ls$ at this point. It follows that at $t'$, we perform a transfer of some item $c'$ from $i$ to some agent $j$, using an MPB alternating path $\ls \reachable \smash{j \stackrel{c'}{\gets} i}$ of length $\ell$. We now claim that $j \not\in R$. This is because if $j$ was a member of $R$, then at time $t$, we would have $\smash{j \stackrel{c}{\gets} s}$ (because $\price(c) = p$ by \hypprices and thus $\PB_j(c) = 1 = \MPB_j$). Thus $\ls \reachable \smash{j \stackrel{c}{\gets} s}$ would have been an MPB alternating path of length $\ell$ to violator $s$ at time $t$, contradicting that the shortest such path had length $\ell + 1$. Hence $j \not \in R$. Now, because we performed the transfer $\smash{j \stackrel{c'}{\gets} i}$, item $c'$ is an MPB chore for $j$. Thus from \hypmpb, $\PB_j(c') = 1/p$. It follows that $\price(c') = p$ and $v_j(c') = -1$. But then $c' \not\in \entitled(i)$, because the only entitled chores with price $p$ are very difficult chores that every agent values at $-p$ (see the proof of \Cref{lem:base-case}), and $v_j(c') \neq -p$ so $c'$ is not a very difficult item. Thus $c' \not\in \entitled(i)$. Recall that $t$ is the time before the transfer of $c$ from $s$ to $i$, that $t'$ is the time after that transfer, and let $t''$ be the time after the transfer of $c'$ from $i$ to $j$. (So $\xx_i^{t''} = \xx_i^{t} \cup \{c\} \setminus \{c'\}$.) We now show that agent $i$ is not a violator anymore at time $t''$. First, note that $\xx_i^{t''}$ is obtained from $\xx_i^{t}$ by adding an item priced $p$ and removing an item priced $p$. Thus $\priceone(\xx_i^{t''}) = \priceone(\xx_i^{t})$. Since $i$ was not a violator at time $t$ (so $\priceone(\xx_i^{t}) \le \min_{s\in\agents} \price(\xx_s)$), and least spending cannot have decreased since time $t$ by \Cref{lem:properties-phase2b}(2), it follows that $i$ is not a violator at time $t''$.
	
	Thus we have shown that if $i \in R$ becomes a violator due to being given an item from some agent $s \not\in R$, then $i$ ceases to be a violator in the immediately next step by giving away a non-entitled item to an agent $j \not\in R$. The only other way that $i \in R$ could give away an item is if $i$ becomes a violator due to being given an item from some agent $s \in R$, but we show this never happens: if it did, consider the first time some $i \in R$ becomes a violator in this way. But then $s \in R$ must have previously become a violator due to being given an item from some $s'\not\in R$. But we have already proven that this is a contradiction, because $s'$ will immediately become a non-violator without giving an item to a member of $R$.
\end{proof} 
Finally, we prove the induction step of our induction hypotheses.
\begin{lemma}
	Suppose \Cref{alg:full-algo} reaches time $t_{k+1, a}$. Then \hypdisjoint to \hypmpb hold for $k+1$.
\end{lemma}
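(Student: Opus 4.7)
The plan is to verify \hypdisjoint through \hypmpb for iteration $k+1$ in turn, leveraging \Cref{lem:non-entitled-transferable}, \Cref{lemma:phase2a-no-violators}, and \Cref{lem:phase-2b-analysis} together with the induction hypotheses for $k$. Once \hypdisjoint is in hand, the other six items reduce to bookkeeping.

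I first dispose of three easy items. \hyptransferable for $k+1$ is the statement of \Cref{lem:non-entitled-transferable}. For \hypnotviolator, \Cref{lem:phase-2b-analysis} shows every $i \in H_1\cup\cdots\cup H_k$ is a non-violator at $t_{k+1,b}$, while each new $i \in H_{k+1}$ lies on an MPB alternating path from a least spender, so if it were a violator then Phase 2b would have executed another iteration of its while loop rather than terminating. For \hypentitled, no entitled chore is transferred in Phase 2a (by \hyptransferable), no entitled chore is transferred in Phase 2b (by \Cref{lem:phase-2b-analysis}), Phase 3 moves no chores at all, and the assignment $\entitled(i) \gets \xx_i$ makes the inclusion trivial for new members $i \in H_{k+1}$.

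The main obstacle is \hypdisjoint: $H_{k+1} \cap H_\ell = \emptyset$ for every $\ell \le k$. Using the induction hypotheses \hypmpb and \hypprices for $k$, at time $t_{k+1,b}$ we have $\MPB_i = 1$ for $i \in H := H_1\cup\cdots\cup H_k$ and $\MPB_i = 1/p$ otherwise, while $\price(c) \in \{1,p\}$ with the price-$1$ chores being exactly the entitled ones. A case analysis of the edge relation $j \stackrel{c}{\gets} i$ then shows that any edge using a price-$1$ chore has both endpoints in $H$, and that any edge using a price-$p$ chore which crosses the $H$-boundary imposes rigid constraints $v_i(c), v_j(c) \in \{-1,-p\}$ on its endpoints. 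Assuming for contradiction $j^\star \in H_{k+1} \cap H$, I would take a shortest MPB alternating path from a least spender to $j^\star$ and analyze its first and last crossings of the $H$-boundary; the goal is to either shorten the path by splicing across a crossing pair, contradicting minimality, or continue it along the MPB graph to a remaining violator, which by \Cref{lem:phase-2b-analysis} together with \Cref{lemma:phase2a-no-violators} must lie outside $H$, contradicting the termination condition of Phase 2b. Working through this combinatorial case analysis, while simultaneously keeping track of which least spenders survive Phase 2b, is the main technical step I expect to be difficult.

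With \hypdisjoint in hand the remaining properties follow by direct computation. For \hypalpha, every $i \in H_{k+1}$ has $\MPB_i = 1/p$ and by definition of $H_{k+1}$ no $c \in \xx(\agents \setminus H_{k+1})$ is MPB for any $i \in H_{k+1}$, so each ratio $\PB_i(c)/\MPB_i$ lies in $\{p, p^2\}$ and $\alpha \ge p$. If $\alpha$ were $p^2$, then every chore outside $\xx(H_{k+1})$ would be entitled and valued $-p$ by all of $H_{k+1}$; this forces the non-entitled chores entirely into $\xx(H_{k+1})$ and (since any agent with an empty bundle would be a least spender, hence in $H_{k+1}$) partitions $\agents$ into $H$ and $H_{k+1}$; combining \hypnotviolator for $k+1$ with \Cref{lem:phase-2b-analysis} then gives no violator, hence pEF1, and the algorithm would have returned at \cref{line:return} instead of reaching $t_{k+1,a}$. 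For \hypprices, \hypdisjoint and \hypentitled together imply that every chore in $\xx(H_{k+1})$ at $t_{k+1,b}$ is non-entitled and priced $p$; after dividing by $\alpha = p$ they become priced $1$ and entitled to their $H_{k+1}$ owners, while all other prices are untouched and remain in $\{1,p\}$. Finally, \hypmpb follows by a brief check paralleling \Cref{lem:base-case}: for $i \in H_{k+1}$, items in $\xx_i$ have $\PB_i = 1$ after the update and no other chore achieves a smaller ratio by the choice of $\alpha$; for $i \in H_1\cup\cdots\cup H_k$, no price of $i$'s items changed and the dropped prices can only raise $\PB_i$; and for $j \notin H_1\cup\cdots\cup H_{k+1}$, no price of any chore in $\xx_j$ changed.
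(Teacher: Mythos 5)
Your proposal gets the easy items right and correctly identifies \hypdisjoint as the load-bearing step, but that step is not actually carried out — you explicitly leave it as a sketch that you "expect to be difficult," so the proof has a genuine gap. Moreover, the route you sketch (tracking first and last crossings of the $H$-boundary and splicing paths to contradict minimality) is more complicated than what is needed. The paper's argument is much more direct: since the algorithm reaches Phase~3 of iteration $k+1$, there is a violator $s$ at time $t_{k+1,b}$, and by \hypnotviolator for $k+1$ (and the termination condition of Phase 2b) $s\notin R\cup H_{k+1}$. One then checks that every chore owned by $s$ at $t_{k+1,b}$ is priced $p$ (its bundle at $t_{k,a}$ was all priced $p$ by \hypprices since $s\notin R$ owns no entitled chores, Phase 2a does not touch $s$, and Phase 2b only moves items priced $p$), and that $\xx_s\neq\emptyset$. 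Picking any $c\in\xx_s$, every $i\in R$ has $\MPB_i=1$ and $\PB_i(c)=1$ (as $\price(c)=p$), so $c$ is an MPB chore for $i$, i.e.\ there is a direct edge $i\stackrel{c}{\gets}s$. Hence if $i$ were in $H_{k+1}$ then $s\in H_{k+1}$ as well, contradiction. No splicing or boundary-crossing analysis is required; a single edge to the known violator suffices.

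The remaining items are essentially as in the paper. Your argument for \hypalpha is logically sound but more circuitous: you show $\alpha\in\{p,p^2\}$ and rule out $p^2$ by a partition-and-termination argument, whereas the paper directly exhibits the minimizer by noting that the violator's chore $c$ above satisfies $\PB_i(c)/\MPB_i = 1/(1/p) = p$ for all $i\in H_{k+1}$, so $\alpha = p$. Your handling of \hyptransferable, \hypnotviolator (including the observation that new members of $H_{k+1}$ cannot be violators because Phase~2b would not have terminated), \hypentitled, \hypprices, and \hypmpb matches the paper. In short: correct framing and correct peripheral verifications, but the central claim \hypdisjoint is left unproved, and the sketched approach overcomplicates it.
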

\begin{proof}
	Write $R = H_1 \cup \dots \cup H_k$.
We have already proved \hyptransferable in \Cref{lem:non-entitled-transferable}. 
For \hypnotviolator, by \Cref{lem:phase-2b-analysis} no agent in $R$ is a violator at the end of Phase 2b, and thus at time $t_{k+1,b}$, which is \hypnotviolator.
	
	For \hypdisjoint, note that because the algorithm has reached time $t_{k+1,b}$, it did not return an allocation in  \cref{line:return}. Thus there exists a violator $s$. As we just proved, $s \not \in R$. By inductive hypothesis \hypprices, at time $t_{k,a}$, we had $\price(c) = p$ for all $c \in \xx_s^{t_{k,a}}$. Because $s \not\in R$, the bundle $\xx_s$ is not changed during Phase 2a of iteration $k + 1$. By \Cref{lem:phase-2b-analysis}, during Phase 2b, no entitled items are ever transferred. Thus using inductive hypothesis \hypprices, during Phase 2b only items priced $p$ are transferred. So all chores owned by $s$ at $t_{k+1, b}$ are priced $p$. Note that $\xx_s \neq \emptyset$, because we had $\xx_s \neq \emptyset$ at $t_{1,a}$ because $i \not\in H_1$ (see the proof of \Cref{lem:base-case}) and Phases 2a and 2b never take away an agent's last item. So select some chore $c\in\xx_s$. We show that $R \cap H_{k+1} = \emptyset$. Let $i \in R$. We wish to prove that $i \not\in H_{k+1}$.  By inductive hypothesis \hypmpb, we had $\MPB_i = 1$ at time $t_{k,a}$ and hence also at $t_{k+1,b}$. Then $\PB_i(c) = 1$ because $\price(c) = p$ (and $\PB_i(c) = 1/p$ would contradict $\MPB_i = 1$) and thus $c$ is an MPB item for $i$.  If we had $i \in H_{k+1}$, then (due to item $c$) also $s \in H_{k+1}$ by definition of $H_{k+1}$. But this is a contradiction because violators cannot be in $H_{k+1}$ since Phase 2b then would not have terminated. Hence $i \not\in H_{k+1}$. Thus $R \cap H_{k+1} = \emptyset$, which is \hypdisjoint.
	
	For \hypentitled, note first that at time $t_{k+1,a}$, for all $i \in H_{k+1}$, the algorithm has just set $\entitled(i) = \xx_i$ and thus $\entitled(i) \subseteq \xx_i$. For $i \in R$, note that we had $\entitled(i) \subseteq \xx_i$ at time $t_{k,a}$ by inductive hypothesis. Phase 2a never transfers an entitled item of $i$. By \Cref{lem:phase-2b-analysis}, Phase 2b does not do this either. So we have proven \hypentitled.
	
	Next, check \hypalpha. We have shown that in iteration $k + 1$, no entitled chores were transferred. Thus all chores that were transferred  had price $p$ at time $t_{k,a}$ (and thus also at time $t_{k+1, b}$) by inductive hypothesis \hypprices. We now prove \hypalpha, that $\alpha = p$ in iteration $k + 1$. Let $c \in \xx^{t_{k+1, b}}(\agents\setminus (H_1 \cup \dots \cup H_{k+1}))$ be a chore not owned by $R \cup H_{k+1}$ at time $t_{k+1,b}$; say it is owned by $s$. (Such a chore exists because otherwise the algorithm would have terminated at \cref{line:return}.) We have $\price(c) = p$. Item $c$ cannot be an MPB chore for anyone in $H_{k+1}$ since otherwise $s \in H_{k+1}$, contradiction. Since $\MPB_i = 1/p$ for all $i \in H_{k+1}$, we must have $\PB_i(c) = 1$. Hence we have $\alpha = 1 / (1/p) = p$, giving \hypalpha.
	
	For \hypprices, we only need to consider chores owned by $H_{k+1}$, since no other chores have their price changed. Let $c \in \xx^{t_{k+1, b}}_i$ for some $i \in H_{k+1}$. Since $\price^{t_{k+1, b}}(c) \in \{1,p\}$ by inductive hypothesis \hypprices and $\MPB_i = 1/p$ at time $t_{k+1, b}$ by \hypmpb, we have $\price^{t_{k+1, b}}(c) = p$. Since $\alpha  = p$ as we have just shown, then $\price^{t_{k+1, a}}(c) = \frac1\alpha p = 1$. Because $c \in \entitled(i)$, this gives \hypprices.
	
	Finally for \hypmpb, since chores owned by $H_{k+1}$ changed price from $p$ to $1$, for each $i \in H_{k+1}$ the value of $\MPB_i$ changes from $1/p$ to $1$. 
	For other agents, the MPB values have not changed: For $i \in \agents\setminus (H_1 \cup \dots \cup H_{k+1})$ we had $\MPB_i = 1/p$ at time $t_{k,a}$ by inductive hypothesis (H7). 
	At time $t_{k+1,b}$, agent $i$ owns at least one item $c$ (since $i \not \in H_1$) and since the algorithm always stays in equilibrium, $\PB_i(c) = 1/p$. 
	Since the price of $c$ was not changed in iteration $k+1$, we also have $\PB_i(c) = 1/p$ at time $t_{k+1,a}$. Hence at this time, $\MPB_i \le 1/p$, but $1/p$ is the smallest possible value, so $\MPB_i = 1/p$. For an agent $i \in R$, we had $\MPB_i = 1$ at time $t_{k,a}$ by \hypmpb. Since price reductions can only increase the value of $\MPB_i$ and $1$ is the highest possible pain-per-buck ratio, we also have $\MPB_i = 1$ at time $t_{k+1, a}$. This proves \hypmpb.
\end{proof}
 \section{MMS Under Restricted Utilities}\label{sec:MMS}

As discussed earlier, MMS allocations are not guaranteed to exist for arbitrary additive utilities. Prior work on allocating goods establishes that they always exist for binary utilities~\cite{BL16} and strictly lexicographic utilities~\cite{HSVX21}. In this section, we generalize these results to the classes of weakly lexicographic and factored personalized bivalued utilities. The following theorem summarizes our main result of this section.

\begin{theorem}\label{thm:mms}
	In every goods or chore division instance with weakly lexicographic or factored personalized bivalued utilities, an MMS allocation always exists and can be computed in polynomial time. 
\end{theorem}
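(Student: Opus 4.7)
The plan is to unify both utility classes under the umbrella of factored utilities. By Lemma~\ref{lem:wolex-factored}, each weakly lexicographic $v_i$ can be replaced with an ordinally equivalent factored $v'_i$ taking values in $\{\pm 1, \pm m, \pm m^2, \ldots\}$; since MMS is a purely ordinal notion (its definition only compares bundle utilities), this substitution preserves every MMS partition and every MMS allocation. Factored personalized bivalued utilities are already factored by definition (with $k=2$). Both subcases therefore reduce to producing an MMS allocation under factored utilities, and the argument naturally splits into two ingredients: (i)~computing each agent's MMS value and a witnessing partition in polynomial time, and (ii)~constructing an allocation that simultaneously meets every MMS threshold.

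For (i), I would prove as a separate lemma that MMS values for factored utilities are computable in polynomial time by a greedy level-by-level procedure. For a factored $v$ with values $0 < p_1 \mid \cdots \mid p_k$, the top level strictly dominates: any bundle containing fewer $p_k$-items has smaller $v$-value than any bundle containing more. Hence in an MMS $n$-partition the $p_k$-items must be split as evenly as possible, and the remaining items are distributed recursively, with the residual imbalance at the top level dictating which bundles can absorb additional items further down. The chores case is symmetric. This yields $\MMS_i$ together with a maximin partition for each agent $i$.

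For (ii), I would handle the two subcases separately. In the factored personalized bivalued case, the maximin partition of each agent $i$ prescribes a clean combinatorial quota $(d_i, e_i)$: at most $d_i$ items of value $\pm p_i$ and at most $e_i$ items of value $\pm 1$ per bundle, with $\MMS_i$ equal to the resulting utility. I would cast simultaneous feasibility as a bipartite assignment problem between items and agents, where each agent $i$ has one capacity on items she finds ``difficult'' and another on items she finds ``easy,'' and solve it by max-flow (or equivalently by a Hall-type argument), using the integrality of each $p_i$ to turn the MMS bound into an integer constraint with no fractional slack. For weakly lexicographic utilities, I would build the allocation top-down over the levels $L_k, L_{k-1}, \ldots, L_1$, matching items to agents at each level according to the quotas dictated by their own maximin partitions; the lexicographic gap guaranteed by Definition~\ref{def:wolex} ensures that the lower levels cannot erase an MMS guarantee already established at the higher levels, so residual items at the bottom may be handed out arbitrarily.

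The main obstacle is the simultaneous feasibility step in the personalized bivalued subcase. Each agent defines ``difficult'' according to her own partition, so the quotas $d_i$ are imposed on different subsets of $\items$ that may overlap arbitrarily, and there is no a priori reason that a single allocation should satisfy all of them at once. The crux of the proof is therefore a Hall-type or max-flow argument certifying feasibility of the joint assignment problem when every agent's target is her own MMS value. The integrality of each $p_i$ is what enables a clean matching reduction: it is exactly what forces MMS bundles to have the form ``$d_i$ difficult plus $e_i$ easy'' with no leftover fractional slack, and is precisely the hypothesis that the theorem invokes.
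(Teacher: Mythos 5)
Your first ingredient---that MMS values and witnessing partitions for factored utilities can be computed by a greedy, level-by-level (largest-first) procedure---matches the paper exactly (it is \Cref{alg:mms-value} and \Cref{lem:mms-partition-algorithm-works}), and your observation that MMS is ordinal so \Cref{lem:wolex-factored} can be used without loss is sound. However, your second ingredient contains a genuine gap. You propose to construct the allocation via a single bipartite assignment / max-flow problem in which each agent $i$ imposes a quota $(d_i, e_i)$ of at most $d_i$ difficult and $e_i$ easy items. But this is not how a maximin partition looks: its bundles typically have heterogeneous compositions (e.g.\ with $n=2$, $p=2$, and items $(2,2,2,1,1)$, the partition is $\{2,2\}$ and $\{2,1,1\}$---different mixes, equal value), so there is no single per-bundle quota consistent with $\MMS_i$. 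More seriously, even if the quotas were well-defined, the feasibility of the joint flow instance---the Hall-type condition you invoke---is precisely the content of the theorem, and you have not argued why it should hold when the agents' difficult/easy partitions of $\items$ overlap arbitrarily. Noting that the integrality of $p_i$ ``forces'' feasibility is not an argument; the integrality hypothesis is used in the paper in a much more local way, namely to control the shape of the greedy maximin partition.

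The paper avoids this global-feasibility problem entirely by an iterative valid-reduction argument on \emph{ordered} instances (\Cref{lem:ordered}, \Cref{lem:bag-reduction}), a tool your proposal does not use and which is essential: once all agents share the ranking $|v_i(1)| \ge \cdots \ge |v_i(m)|$, every agent's greedy maximin partition contains a bundle of the rigid form $\{1, n+1, \ldots, kn+1\}$ (\Cref{lem:wolex-one-partition}, \Cref{lem:factored-bival-one-partition}), with $k$ determined by a scalar statistic of $i$ ($C_i$ for weakly lexicographic, $\max\{T^\idle_i - AC_i, 0\}$ for bivalued). Choosing the agent extremizing this statistic makes her bundle contained in (for goods) or containing (for chores) every other agent's corresponding bundle, so \Cref{lem:bag-reduction} applies and the instance shrinks by one agent. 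No simultaneous matching is ever needed; the nesting of these special bundles is what replaces your unproven Hall condition. If you want to pursue a global-matching proof you would need to supply that feasibility argument explicitly, and I do not see how to do so without effectively re-deriving the nesting structure.
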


\subsection{Ordered Instances and Valid Reductions}
Let us begin by reviewing two basic techniques which are commonly used in the literature on computing MMS allocations. Throughout this section, we let $\agents = [n]$ and $\items = [m]$.

\subsubsection{Ordered Instances}
\citet[Prop.~14]{BL16} show that when dealing with MMS allocations, one can assume, without loss of generality, that all agents have the same preference ranking over the items. This result was originally stated for goods, but the same proof works for  chores as well. 

\begin{lemma}[\cite{BL16}]\label{lem:ordered}
	Let $I = (\agents, \items, \vv)$ be a goods or chore division instance. Let $I' = (\agents,\items, \vv')$ be an \emph{ordered instance}  where, for each $i \in \agents$, $v'_i$ is a permutation of $v_i$ such that $|v'_i(1)| \ge \ldots \ge |v'_i(m)|$. If $\xx'$ is an MMS allocation for $I'$, then there exists an MMS allocation $\xx$ for $I$. Given $\xx'$, one can compute $\xx$ in polynomial time. 
\end{lemma}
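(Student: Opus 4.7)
The plan is to prove the lemma in two stages: first, to observe that the MMS values are preserved in going from $I$ to $I'$; second, to exhibit a polynomial-time greedy procedure that converts $\xx'$ into an allocation $\xx$ of $I$ dominating it agent by agent, i.e., $v_i(\xx_i) \ge v'_i(\xx'_i)$ for every $i$.

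For the first stage, I would note that $v'_i$ is a permutation of $v_i$, so the multiset of values agent $i$ assigns to items in $\items$ is identical under $v_i$ and $v'_i$. Since $\MMS_i$ is the max-min over $n$-partitions of $\items$ and depends only on this multiset of values (item labels are irrelevant), we get $\MMS_i(I) = \MMS_i(I')$ for every agent $i$. Hence any $\xx$ obtained in the second stage is automatically MMS in $I$.

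For the construction, I would iterate through the positions of the ordered instance in an order tailored to goods or chores. In the goods case, process $t = 1, 2, \ldots, m$; at step $t$, let $i(t)$ denote the owner of item $t$ in $\xx'$, and assign to $i(t)$, in $I$, her most-preferred item among those not yet allocated. For chores, process $t = m, m-1, \ldots, 1$ instead and assign to $i(t)$ her least-painful remaining chore. To verify $v_i(\xx_i) \ge v'_i(\xx'_i)$, enumerate $i$'s items in $I$ as $r_1, \ldots, r_m$ with $|v_i(r_1)| \ge \cdots \ge |v_i(r_m)|$ so that $v'_i(s) = v_i(r_s)$, and list the positions $i$ owns in $\xx'$ as $t_1 < \cdots < t_k$. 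In the goods case, at step $t_j$ fewer than $t_j$ items have been handed out, so at least one of $\{r_1, \ldots, r_{t_j}\}$ is still available and agent $i$'s pick has $v_i$-value at least $v_i(r_{t_j}) = v'_i(t_j)$. For chores, at step $t_j$ only $m - t_j$ items have been handed out while $\{r_{t_j}, \ldots, r_m\}$ contains $m - t_j + 1$ items, so at least one remains and $i$'s pick has pain at most $|v'_i(t_j)|$. Summing over $j$ yields the desired domination in both cases.

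The main point of care, though not a serious obstacle, is choosing the correct processing order (forward for goods, reverse for chores) so that the counting argument at each step guarantees that an item of the required quality is still available. The running time is obviously polynomial, since the procedure performs $m$ steps and each step scans at most $m$ items to identify the favorite or least-painful remaining one.
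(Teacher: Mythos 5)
Your proof is correct, and it is essentially the standard argument from Bouveret and Lema\^itre (Prop.~14) that the paper cites: observe that $\MMS_i$ depends only on the multiset of agent $i$'s item values (so it is unchanged in $I'$), and then run the greedy ``pick your favorite remaining item when your slot comes up'' conversion, which ensures $v_i(\xx_i)\ge v'_i(\xx'_i)$ slot by slot. Your choice to reverse the processing order for chores (from position $m$ down to $1$, with each agent taking her least-painful remaining chore) is exactly the right adaptation, since it makes the same counting argument go through; forward processing would fail for chores, as the $m-t_j+1$ easiest chores for an agent need not survive the first $t_j-1$ picks.
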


Given \Cref{lem:ordered}, we will assume that all instances in this section (except in \Cref{sec:mms-po}, where our goal is to achieve PO in conjunction with MMS) are ordered instances. Specifically, we will assume that $|v_i(1)| \ge \ldots \ge |v_i(m)|$ for each agent $i \in \agents$. 

\subsubsection{Valid Reductions}
Another common idea used in the literature on finding (approximate) MMS allocations is that of \emph{valid reductions}~\cite{kurokawa2018fair,kurokawa2016can,amanatidis2017approximation,garg2019approximating,ghodsi2018fair,garg2021improved}. 

\begin{definition}[Valid Reduction]
	\label{def:valid-reduction}
	Let $I = (\agents, \items, \vv)$ be a goods or chore division instance, $i \in \agents$ be an agent, and $S \subseteq \items$ be a subset of items. The pair $(i,S)$ is a \emph{valid reduction} if
	\begin{enumerate}
		\item \label{enum:reduction-i-is-fine} $v_i(S) \ge \MMS^{n}_i(\items)$, and
		\item \label{enum:reduction-others-are-fine} $\MMS^{n-1}_j(\items \setminus S) \ge \MMS^n_j(\items)$ for all $j \in \agents \setminus \set{i}$.
	\end{enumerate}
\end{definition}

If $(i, S)$ is a valid reduction, we can allocate bundle $S$ to agent $i$, and ignore $i$ and $S$ subsequently. Formally, consider the reduced instance $I' = (\agents \setminus \{i\}, \items \setminus S, \vv)$ obtained from $I$ by removing $i$ and $S$. Then if $\xx'$ is an MMS allocation for $I'$, then the allocation $\xx$ with $\xx_i = S$ and $\xx_j = \xx'_j$ for all $j \neq i$ is an MMS allocation for $I$. This holds because agent $i$ receives her MMS value in $\xx$ by (\ref{enum:reduction-i-is-fine}), and for any other agent $j$, $v_j(\xx'_j) \ge \MMS^{n-1}_j(\items \setminus S) \ge \MMS^n_j(\items)$ by (2). 

Our proofs for both goods and chore division under both weakly lexicographic and factored personalized bivalued utilities work in the same fashion: we show that every instance admits a valid reduction which can be computed efficiently. The next lemma identifies one of the ways of finding a valid reduction. 

\begin{lemma}
	\label{lem:bag-reduction}
	For a goods or chore division instance $I = (\agents, \items, \vv)$, the pair $(i, S)$, where $i \in \agents$ and $S \subseteq \items$ is a valid reduction if $v_i(S) \ge \MMS^n_i(\items)$ and,
	for all agents $i' \in \agents \setminus \set{i}$, there is a maximin $n$-partition $P_{i'} = (S'_1,\ldots,S'_n)$ of agent $i'$ and a bundle $S' \in P_{i'}$ with 
	$S \subseteq S'$ for goods division and $S \supseteq S'$ for chore division.
\end{lemma}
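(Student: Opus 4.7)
The plan is to verify the two conditions in \Cref{def:valid-reduction}. Condition~(\ref{enum:reduction-i-is-fine}) is given directly by the hypothesis $v_i(S) \ge \MMS^n_i(\items)$. So the entire task is to establish condition~(\ref{enum:reduction-others-are-fine}): for every agent $i' \in \agents \setminus \{i\}$, we must exhibit a partition of $\items \setminus S$ into $n-1$ bundles, each of value at least $\MMS^n_{i'}(\items)$. The hypothesis hands us a candidate structure to modify, namely the maximin $n$-partition $P_{i'} = (S'_1, \ldots, S'_n)$ of $\items$ together with a designated bundle $S' \in P_{i'}$ that is well-positioned relative to $S$; the rest of the proof is just to massage $P_{i'}$ into an $(n-1)$-partition of $\items \setminus S$ without dropping any bundle's value.

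For the goods case, relabel so that $S \subseteq S' = S'_1$. Then $\items \setminus S = (S' \setminus S) \cup S'_2 \cup \cdots \cup S'_n$. I would construct the $(n-1)$-partition by dumping the leftover $S' \setminus S$ into one of the other bundles, say forming $\bigl(S'_2 \cup (S' \setminus S),\, S'_3,\, \ldots,\, S'_n\bigr)$. Because values are non-negative, $v_{i'}(S'_2 \cup (S' \setminus S)) \ge v_{i'}(S'_2) \ge \MMS^n_{i'}(\items)$, and each other bundle $S'_t$ retains its original value $\ge \MMS^n_{i'}(\items)$. Taking the minimum yields $\MMS^{n-1}_{i'}(\items \setminus S) \ge \MMS^n_{i'}(\items)$.

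For the chores case, relabel so that $S \supseteq S' = S'_1$. Now the ``extra'' items $S \setminus S'$ reside inside the other bundles $S'_2, \ldots, S'_n$, and I remove them to form the $(n-1)$-partition $(S'_2 \setminus S,\, S'_3 \setminus S,\, \ldots,\, S'_n \setminus S)$ of $\items \setminus S$. Since $v_{i'}$ is non-positive on every item, deleting items can only weakly increase a bundle's value, so $v_{i'}(S'_t \setminus S) \ge v_{i'}(S'_t) \ge \MMS^n_{i'}(\items)$ for each $t \ge 2$. Again the minimum over bundles is at least $\MMS^n_{i'}(\items)$, which certifies $\MMS^{n-1}_{i'}(\items \setminus S) \ge \MMS^n_{i'}(\items)$.

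There is no real obstacle here; the argument is essentially one application of monotonicity per case, and the whole point of the two containment conditions ($S \subseteq S'$ for goods and $S \supseteq S'$ for chores) is precisely to ensure that the induced modification of $P_{i'}$ goes in the monotone-favorable direction. The only mild subtlety is to present the goods and chores cases in parallel and to notice that in both cases the resulting $n-1$ bundles are disjoint and exactly cover $\items \setminus S$, so they genuinely form a partition witnessing the required lower bound on $\MMS^{n-1}_{i'}(\items \setminus S)$.
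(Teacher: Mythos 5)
Your proof is correct and follows essentially the same approach as the paper: in both cases you start from the maximin $n$-partition $P_{i'}$, discard the distinguished bundle $S'$, and argue that monotonicity of $v_{i'}$ (non-negative for goods, non-positive for chores) keeps every surviving bundle at or above $\MMS^n_{i'}(\items)$. In fact your treatment of the goods case is slightly more careful than the paper's: the paper simply deletes $S'$ and does not explicitly say what happens to the leftover items $S'\setminus S$, whereas you correctly note that they must be merged into one of the remaining bundles so that the result is genuinely an $(n-1)$-partition of $\items\setminus S$.
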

\begin{proof}
	Fix an agent $i' \in \agents \setminus \set{i}$. Given the definition of a valid reduction, we only need to show that $\MMS_{i'}^{n-1}(\items \setminus S) \ge \MMS_{i'}^n(\items)$. Fix a maximin $n$-partition $P_{i'}$ as in the statement of the lemma, and note that $\MMS_{i'}^n(\items) = \min_{k \in [n]} v_{i'}(S'_k)$. Then, it is sufficient to construct an $(n-1)$-partition of $\items \setminus S$ such that each bundle in this partition is worth at least $\MMS_{i'}^n(\items)$ to agent $i'$. 
	
	\paragraph{Goods division:} In this case, we construct the desired $(n-1)$-partition of $\items \setminus S$ by starting from the maximin $n$-partition $P'$ of $\items$ and deleting bundle $S'$. Note that none of the other bundles contain a good from $S$ because $S \subseteq S'$. Since deleting a bundle can only improve the utility of the agent for the worst bundle, the utility of agent $i'$ for the worst bundle in the new partition is at least her utility for the worst bundle in $P'$, which is $\MMS_{i'}^n(\items)$. 
	
	\paragraph{Chore division:} In this case, we construct the desired $(n-1)$-partition of $\items \setminus S$ as before, by starting from the maximin $n$-partition $P'$ of $\items$, deleting bundle $S'$, and deleting any chores in $S \setminus S'$ from the remaining bundles. Since $S \supseteq S'$, this is an $(n-1)$-partition of $\items \setminus S$. Further, deleting a bundle and deleting some chores from the remaining bundles can only improve the utility of the agent for the worst bundle. Hence, the utility of agent $i'$ for the worst bundle in the new partition is at least her utility for the worst bundle in $P'$, which is $\MMS_{i'}^n(\items)$. 
\end{proof}

\subsection{Exact MMS Value for Factored Utilities}

Note that in order to check the validity of a reduction $(i,S)$ via \Cref{lem:bag-reduction}, we need to relate $S$ to one of the bundles in a maximin $n$-partition of every agent other than $i$. For this, we need to reason about what a maximin $n$-partition looks like for an agent. We show that in any goods or chore division instance with factored utilities (which covers weakly lexicographic and personalized factored bivalued utilities as special cases), a maximin $n$-partition of an agent (and hence, her MMS value) can be computed efficiently. This is in sharp contrast to the case of general additive utilities, for which the problem is known to be NP-hard for both goods and chores~\citep[p.~224, \textsc{3-Partition}]{GJ79}.

\Cref{alg:mms-value} considers the items in a nonincreasing order of their absolute value and greedily assigns them to the bundle with the lowest total absolute value. In the end, the value of the least-valued bundle is the MMS value. This works for both goods division and chore division. 
\Cref{fig:mms-value-alg-example} shows the execution of the algorithm on an example.

\begin{algorithm}
	\caption{Compute a maximin $n$-partition for a factored utility function $v$}
	\label{alg:mms-value}
	\SetAlgoLined
	\DontPrintSemicolon
	\SetAlgoNoEnd
	$\xx \leftarrow (\xx_i = \emptyset)_{i \in \agents}$ \tcp*{$\xx$ denotes a partial allocation}
	\For{$r \in \items$ \text{\emph{in a nonincreasing order of}} $\abs{v(r)}$}{
		$k^* \leftarrow \argmin_{k \in \agents} \abs{v(\xx_k)}$\;
		$\xx_{k^*} \leftarrow \xx_{k^*} \cup \set{r}$\;
	}
	\Return $\xx$
\end{algorithm}

\begin{figure}
	\centering
	\setlength{\tabcolsep}{0mm}
	\begin{tabular}{ccccccccc}
\begin{tikzpicture}
			[
			item-box/.style={anchor=south,minimum width=0.28cm, font=\footnotesize, draw, inner sep=0cm},
			newest-box/.style={fill=black!10},
			bag-box/.style={anchor=north,font=\footnotesize, inner xsep=0.3mm}
			]
			\matrix[anchor=south, column sep=0.0mm]
			{
				\draw node[item-box,newest-box,minimum height=3cm] {12};
				&
				\draw (0,0) node[item-box,minimum height=0cm] {};
				&
				\draw (0,0) node[item-box,minimum height=0cm] {};
				&
				\draw (0,0) node[item-box,minimum height=0cm] {};
				\\
				\node[bag-box] {$\xx_1$};
				&
				\node[bag-box] {$\xx_2$};
				&
				\node[bag-box] {$\xx_3$};
				&
				\node[bag-box] {$\xx_4$};
				\\
			};
			\node[font=\footnotesize] {Step 1};
		\end{tikzpicture}
		&
\begin{tikzpicture}
			[
			item-box/.style={anchor=south,minimum width=0.28cm, font=\footnotesize, draw, inner sep=0cm},
			newest-box/.style={fill=black!10},
			bag-box/.style={anchor=north,font=\footnotesize, inner xsep=0.3mm}
			]
			\matrix[anchor=south, column sep=0mm]
			{
				\draw node[item-box,minimum height=3cm] {12};
				&
				\draw (0,0) node[item-box,newest-box,minimum height=1.5cm] {6};
				&
				\draw (0,0) node[item-box,minimum height=0cm] {};
				&
				\draw (0,0) node[item-box,minimum height=0cm] {};
				\\
				\node[bag-box] {$\xx_1$};
				&
				\node[bag-box] {$\xx_2$};
				&
				\node[bag-box] {$\xx_3$};
				&
				\node[bag-box] {$\xx_4$};
				\\
			};
			\node[font=\footnotesize] {Step 2};
		\end{tikzpicture}
		&
\begin{tikzpicture}
			[
			item-box/.style={anchor=south,minimum width=0.28cm, font=\footnotesize, draw, inner sep=0cm},
			newest-box/.style={fill=black!10},
			bag-box/.style={anchor=north,font=\footnotesize, inner xsep=0.3mm}
			]
			\matrix[anchor=south, column sep=0mm]
			{
				\draw node[item-box,minimum height=3cm] {12};
				&
				\draw (0,0) node[item-box,minimum height=1.5cm] {6};
				&
				\draw (0,0) node[item-box,newest-box,minimum height=1.5cm] {6};
				&
				\draw (0,0) node[item-box,minimum height=0cm] {};
				\\
				\node[bag-box] {$\xx_1$};
				&
				\node[bag-box] {$\xx_2$};
				&
				\node[bag-box] {$\xx_3$};
				&
				\node[bag-box] {$\xx_4$};
				\\
			};
			\node[font=\footnotesize] {Step 3};
		\end{tikzpicture}
		&
\begin{tikzpicture}
			[
			item-box/.style={anchor=south,minimum width=0.28cm, font=\footnotesize, draw, inner sep=0cm},
			newest-box/.style={fill=black!10},
			bag-box/.style={anchor=north,font=\footnotesize, inner xsep=0.3mm}
			]
			\matrix[anchor=south, column sep=0mm]
			{
				\draw node[item-box,minimum height=3cm] {12};
				&
				\draw (0,0) node[item-box,minimum height=1.5cm] {6};
				&
				\draw (0,0) node[item-box,minimum height=1.5cm] {6};
				&
				\draw (0,0) node[item-box,newest-box,minimum height=0.75cm] {3};
				\\
				\node[bag-box] {$\xx_1$};
				&
				\node[bag-box] {$\xx_2$};
				&
				\node[bag-box] {$\xx_3$};
				&
				\node[bag-box] {$\xx_4$};
				\\
			};
			\node[font=\footnotesize] {Step 4};
		\end{tikzpicture}
		&
\begin{tikzpicture}
			[
			item-box/.style={anchor=south,minimum width=0.28cm, font=\footnotesize, draw, inner sep=0cm},
			newest-box/.style={fill=black!10},
			bag-box/.style={anchor=north,font=\footnotesize, inner xsep=0.3mm}
			]
			\matrix[anchor=south, column sep=0mm]
			{
				\draw node[item-box,minimum height=3cm] {12};
				&
				\draw (0,0) node[item-box,minimum height=1.5cm] {6};
				&
				\draw (0,0) node[item-box,minimum height=1.5cm] {6};
				&
				\draw (0,0) node[item-box,minimum height=0.75cm] {3};
				\draw (0,0.75) node[item-box,newest-box,minimum height=0.75cm] {3};
				\\
				\node[bag-box] {$\xx_1$};
				&
				\node[bag-box] {$\xx_2$};
				&
				\node[bag-box] {$\xx_3$};
				&
				\node[bag-box] {$\xx_4$};
				\\
			};
			\node[font=\footnotesize] {Step 5};
		\end{tikzpicture}
		&
\begin{tikzpicture}
			[
			item-box/.style={anchor=south,minimum width=0.28cm, font=\footnotesize, draw, inner sep=0cm},
			newest-box/.style={fill=black!10},
			bag-box/.style={anchor=north,font=\footnotesize, inner xsep=0.3mm}
			]
			\matrix[anchor=south, column sep=0mm]
			{
				\draw node[item-box,minimum height=3cm] {12};
				&
				\draw (0,0) node[item-box,minimum height=1.5cm] {6};
				\draw (0,1.5) node[item-box,newest-box,minimum height=0.75cm] {3};
				&
				\draw (0,0) node[item-box,minimum height=1.5cm] {6};
				&
				\draw (0,0) node[item-box,minimum height=0.75cm] {3};
				\draw (0,0.75) node[item-box,minimum height=0.75cm] {3};
				\\
				\node[bag-box] {$\xx_1$};
				&
				\node[bag-box] {$\xx_2$};
				&
				\node[bag-box] {$\xx_3$};
				&
				\node[bag-box] {$\xx_4$};
				\\
			};
			\node[font=\footnotesize] {Step 6};
		\end{tikzpicture}
		&
\begin{tikzpicture}
			[
			item-box/.style={anchor=south,minimum width=0.28cm, font=\footnotesize, draw, inner sep=0cm},
			newest-box/.style={fill=black!10},
			bag-box/.style={anchor=north,font=\footnotesize, inner xsep=0.3mm}
			]
			\matrix[anchor=south, column sep=0mm]
			{
				\draw node[item-box,minimum height=3cm] {12};
				&
				\draw (0,0) node[item-box,minimum height=1.5cm] {6};
				\draw (0,1.5) node[item-box,minimum height=0.75cm] {3};
				&
				\draw (0,0) node[item-box,minimum height=1.5cm] {6};
				\draw (0,1.5) node[item-box,newest-box,minimum height=0.75cm] {3};
				&
				\draw (0,0) node[item-box,minimum height=0.75cm] {3};
				\draw (0,0.75) node[item-box,minimum height=0.75cm] {3};
				\\
				\node[bag-box] {$\xx_1$};
				&
				\node[bag-box] {$\xx_2$};
				&
				\node[bag-box] {$\xx_3$};
				&
				\node[bag-box] {$\xx_4$};
				\\
			};
			\node[font=\footnotesize] {Step 7};
		\end{tikzpicture}
		&
\begin{tikzpicture}
			[
			item-box/.style={anchor=south,minimum width=0.28cm, font=\footnotesize, draw, inner sep=0cm},
			newest-box/.style={fill=black!10},
			bag-box/.style={anchor=north,font=\footnotesize, inner xsep=0.3mm}
			]
			\matrix[anchor=south, column sep=0mm]
			{
				\draw node[item-box,minimum height=3cm] {12};
				&
				\draw (0,0) node[item-box,minimum height=1.5cm] {6};
				\draw (0,1.5) node[item-box,minimum height=0.75cm] {3};
				&
				\draw (0,0) node[item-box,minimum height=1.5cm] {6};
				\draw (0,1.5) node[item-box,minimum height=0.75cm] {3};
				&
				\draw (0,0) node[item-box,minimum height=0.75cm] {3};
				\draw (0,0.75) node[item-box,minimum height=0.75cm] {3};
				\draw (0,1.5) node[item-box,newest-box,minimum height=0.25cm, font=\footnotesize] {1};
				\\
				\node[bag-box] {$\xx_1$};
				&
				\node[bag-box] {$\xx_2$};
				&
				\node[bag-box] {$\xx_3$};
				&
				\node[bag-box] {$\xx_4$};
				\\
			};
			\node[font=\footnotesize] {Step 8};
		\end{tikzpicture}
		&
\begin{tikzpicture}
			[
			item-box/.style={anchor=south,minimum width=0.28cm, font=\footnotesize, draw, inner sep=0cm},
			newest-box/.style={fill=black!10},
			bag-box/.style={anchor=north,font=\footnotesize, inner xsep=0.3mm}
			]
			\matrix[anchor=south, column sep=0mm]
			{
				\draw node[item-box,minimum height=3cm] {12};
				&
				\draw (0,0) node[item-box,minimum height=1.5cm] {6};
				\draw (0,1.5) node[item-box,minimum height=0.75cm] {3};
				&
				\draw (0,0) node[item-box,minimum height=1.5cm] {6};
				\draw (0,1.5) node[item-box,minimum height=0.75cm] {3};
				&
				\draw (0,0) node[item-box,minimum height=0.75cm] {3};
				\draw (0,0.75) node[item-box,minimum height=0.75cm] {3};
				\draw (0,1.5) node[item-box,minimum height=0.25cm, font=\footnotesize] {1};
				\draw (0,1.75) node[item-box,newest-box,minimum height=0.25cm, font=\footnotesize] {1};
				\\
				\node[bag-box] {$\xx_1$};
				&
				\node[bag-box] {$\xx_2$};
				&
				\node[bag-box] {$\xx_3$};
				&
				\node[bag-box] {$\xx_4$};
				\\
			};
			\node[font=\footnotesize] {Step 9};
		\end{tikzpicture}
	\end{tabular}
	\caption{An execution of \Cref{alg:mms-value} to obtain a maximin $4$-partition for the factored utility function $v = (12, 6, 6, 3, 3, 3, 3, 1, 1)$. For each step, the gray box indicates the item placed by the algorithm in that step.}
	\label{fig:mms-value-alg-example}
\end{figure}
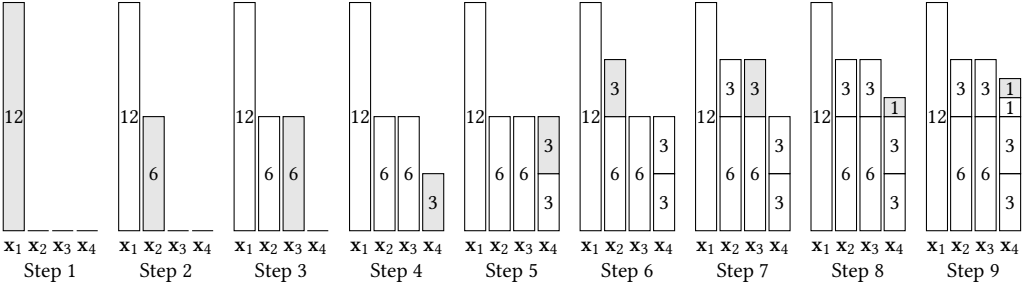

\begin{lemma}\label{lem:mms-partition-algorithm-works}
	Given a factored utility function $v$ over a set of items $\items$ (all goods or all chores), \Cref{alg:mms-value} efficiently computes a maximin $n$-partition of $\items$ under $v$.
\end{lemma}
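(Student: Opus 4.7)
The plan is to prove by induction on the number of items placed by the algorithm that its partial allocation after each step can be extended to a maximin $n$-partition of $\items$; applying this invariant at $t = m$ yields that the algorithm's output is itself such a partition. The running time is clearly polynomial: after sorting the $m$ items by $|v|$ in $O(m\log m)$ time, the main loop runs $m$ times, each iteration doing $O(n)$ work to identify the minimum-load bundle, for an overall bound of $O(m\log m + mn)$.

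For the invariant, the base case $t = 0$ is vacuous, with any maximin $n$-partition serving as witness. For the inductive step, assume the partial allocation $\xx^{t-1}$ extends to a maximin partition $P^* = (S_1^*, \ldots, S_n^*)$, meaning $\xx^{t-1}_k \subseteq S^*_k$ for each $k$ (after possibly relabeling the bundles of $P^*$). Let $r$ be the $t$-th item processed, with $|v(r)| = p_j$; the algorithm places $r$ into the bundle $k^*$ currently minimizing $|v(\xx^{t-1}_\cdot)|$. If $r \in S^*_{k^*}$ the same witness suffices. Otherwise $r \in S^*_{k'}$ for some $k' \neq k^*$, and I would construct a new maximin witness $P^{**}$ by performing an exchange: move $r$ from $S^*_{k'}$ to $S^*_{k^*}$ and simultaneously move a subset $B \subseteq S^*_{k^*} \setminus \xx^{t-1}_{k^*}$ with $|v(B)| = p_j$ back to $S^*_{k'}$. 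This swap preserves every bundle's absolute value, so $P^{**}$ is again maximin and satisfies $\xx^{t-1}_{k^*} \cup \{r\} \subseteq S^{**}_{k^*}$.

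The main obstacle is justifying the existence of the compensating subset $B$. Since the algorithm processes items in non-increasing order of $|v|$, every item in $A_{k^*} := S^*_{k^*} \setminus \xx^{t-1}_{k^*}$ has absolute value at most $p_j$, and each such value divides $p_j$ by the factored assumption. The crucial technical ingredient is a \emph{subset-sum lemma for factored values}: given items whose absolute values lie in a factored chain and are each at most $p_j$, if their total absolute value is at least $p_j$ then some subset of them sums to exactly $p_j$. I would prove this by induction on $j$: use an item of value $p_j$ directly if one exists; otherwise, greedily include items of the largest remaining value, exploiting divisibility to argue that any shortfall is a positive integer multiple of the next-smaller value, whose total mass in the remaining items is still enough to complete the sum. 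Applying the subset-sum lemma requires $|v(A_{k^*})| \geq p_j$, which follows from the maximin property of $P^*$ combined with $k^*$ being the current minimum-load bundle (so $L_{k^*} \leq L_{k'}$), since $|v(S^*_{k^*})| \geq |v(S^*_{k'})| - p_j - |v(A_{k'})|$ via maximinality and bundle-value comparisons. A separate case analysis handles the borderline situation where $|v(A_{k^*})| < p_j$: this can only arise when $|v(S^*_{k^*})|$ is within $p_j$ of $L_{k^*}$, and in this case one shows by a local rebalancing of the items in $S^*_{k'} \setminus \xx^{t-1}_{k'}$ that a different maximin partition already accommodates $r$ in bundle $k^*$ without requiring an exact exchange.
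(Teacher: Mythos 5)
Your invariant-based induction (maintain a maximin partition that agrees with the algorithm's partial allocation, repair via exchanges) is the same structure the paper uses, and your subset-sum lemma for factored values is exactly the tool the paper applies in its ``Case 2b'' (greedily take a prefix of the new items in the target bundle; divisibility forces the cumulative sum to hit $p_j$ exactly). The polynomial running-time bound is also correct.

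However, there is a genuine gap in the core of the argument. Your exchange only works when $|v(A_{k^*})| \geq p_j$, and you assert that this "follows from the maximin property of $P^*$ combined with $k^*$ being the current minimum-load bundle," citing the inequality $|v(S^*_{k^*})| \geq |v(S^*_{k'})| - p_j - |v(A_{k'})|$. But this inequality (even granting it) only rewrites to $L_{k^*} + |v(A_{k^*})| \geq L_{k'} - p_j$, which given $L_{k^*} \leq L_{k'}$ yields at best $|v(A_{k^*})| \geq -p_j$ --- a vacuous bound. In fact $|v(A_{k^*})| < p_j$ can happen routinely (e.g.\ when the witness partition $P^*$ happens to have already "closed off" bundle $k^*$ at the current step), so this is not merely a borderline corner. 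The paper handles this via two separate, genuinely different swaps: when $L_{k^*} = L_{k'}$ it performs a \emph{full swap} of the entire remainders $A_{k^*}$ and $A_{k'}$ (which preserves all bundle values and requires no subset-sum argument), and when $L_{k^*} < L_{k'}$ but $|v(A_{k^*})| < p_j$ it swaps \emph{all} of $A_{k^*}$ with the single item $r$ and then verifies directly that the new bundle values stay within the old min--max range, so the partition remains maximin even though bundle values changed. Your description of "a local rebalancing of the items in $S^*_{k'} \setminus \xx^{t-1}_{k'}$" does not identify either of these swaps nor prove they preserve maximinality, so the proof as written is incomplete on a substantive case.
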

\begin{proof}
	Let $\xx$ be the partition returned by the algorithm. Without loss of generality, let $r_1,\ldots,r_m$ be the order in which the algorithm considers the items. Then, $v(r_{k+1}) \mid v(r_k)$ for $k \in [m-1]$. Suppose for contradiction that this is not a maximin partition. Among all maximin partitions, choose $\xx'$ such that the lowest index $\ell$ for which $\xx$ and $\xx'$ differ in the assignment of item $r_\ell$ is the maximum possible. 
	
	Let $\yy$ denote the partial allocation of items $r_1,\ldots,r_{\ell-1}$ under both $\xx$ and $\xx'$. Let $i$ and $i'$ be such that $r_\ell \in \xx_i \cap \xx'_{i'}$; note that $i \neq i'$. Because the algorithm assigns $r_\ell$ to bundle $i$ given the partial allocation $\yy$, we must have $|v(\yy_i)| \le |v(\yy_{i'})|$. We consider two cases.
	
	\paragraph{Case 1:} Suppose $|v(\yy_i)| = |v(\yy_{i'})|$. Let $\hat{\xx}$ be a partition obtained by starting from the maximin partition $\xx'$, and swapping the items in $\xx'_i \setminus \yy_i$ and the items in $\xx'_{i'} \setminus \yy_{i'}$ between bundles $i$ and $i'$. Note that $v(\hat{\xx}_i) = v(\xx'_{i'})$ and $v(\hat{\xx}_{i'}) = v(\xx'_i)$. Hence, $\hat{\xx}$ is also a maximin partition. But it matches $\xx$ in the assignment of the first $\ell$ items, which is a contradiction. 
	
	\paragraph{Case 2:} Suppose $|v(\yy_i)| < |v(\yy_{i'})|$. Note that because $v(r_\ell) \mid v(r_k)$ for all $k < \ell$, we must have $|v(\yy_i)| \le |v(\yy_{i'})|-|v(r_\ell)|$. Here, we consider two sub-cases.
	
	\begin{itemize}
		\item \emph{Case 2a:} Suppose $|v(\xx'_i)| < |v(\yy_i)|+|v(r_\ell)| \le |v(\yy_{i'})|$. Let $\hat{\xx}$ be a partition obtained by starting from the maximin partition $\xx'$, and swapping the items in $\xx'_i \setminus \yy_i$ and the item $r_\ell$ between bundles $i$ and $i'$. Note that $|v(\hat{\xx}_i)| = |v(\yy_i)|+|v(r_\ell)| > |v(\xx'_i)|$ and $|v(\hat{\xx}_{i'})| \ge |v(\yy_{i'})| > |v(\xx'_i)|$. Hence, the minimum absolute value across bundles weakly increases when moving from $\xx'$ to $\hat{\xx}$. Similarly, note that $|v(\hat{\xx}_i)| = |v(\yy_i)|+|v(r_\ell)| \le |v(\yy_{i'})| \le |v(\xx'_{i'})|$ and $|v(\hat{\xx}_{i'})| < |v(\yy_{i'})|+|v(r_\ell)| \le |v(\xx'_{i'})|$. Hence, the maximum absolute value across bundles weakly decreases when moving from $\xx'$ to $\hat{\xx}$. This implies that $\hat{\xx}$ is also a maximin partition. However, $\hat{\xx}$ matches $\xx$ in the assignment of the first $\ell$ items, which is a contradiction.
		
		\item \emph{Case 2b:} Suppose $|v(\xx'_i)| \ge |v(\yy_i)|+|v(r_\ell)|$. Suppose $\xx'_i \setminus \yy_i = \set{r_{k_1},r_{k_2},\ldots}$ where $k_1 < k_2 < \ldots$. Let $t$ be the smallest index such that $|v(\yy_i \cup \set{r_{k_1},\ldots,r_{k_t}})| \ge |v(\yy_i)|+|v(r_\ell)|$. Note that $|v(\yy_i \cup \set{r_{k_1},\ldots,r_{k_{t-1}}})| < |v(\yy_i)|+|v(r_\ell)|$. Further, since $v(r_{k_t}) \mid v(r)$ for all $r \in \yy_i \cup \set{r_{k_1},\ldots,r_{k_{t-1}}}$, we must have $|v(\yy_i \cup \set{r_{k_1},\ldots,r_{k_{t-1}}})| \le |v(\yy_i)|+|v(r_\ell)|-|v(r_{k_t})|$. Hence, it must be the case that $|v(\yy_i \cup \set{r_{k_1},\ldots,r_{k_t}})| = |v(\yy_i)|+|v(r_\ell)|$, i.e., $|v(\set{r_{k_1},\ldots,r_{k_t}})| = |v(r_\ell)|$. In this case, swapping the set of items $\set{r_{k_1},\ldots,r_{k_t}}$ with the item $r_\ell$ between bundles $i$ and $i'$ in $\hat{\xx}$ produces another maximin partition which matches $\xx$ in the assignment of the first $\ell$ items, which is a contradiction. 
	\end{itemize} 
	This completes the proof.
\end{proof}
 
When we assume our instance to be ordered, we will consider the items in the standard order $1,\ldots,m$ in \Cref{alg:mms-value}. This will allow us to reason about the exact indices of items allocated to different bundles under \Cref{alg:mms-value}. 

\Cref{alg:mms-value} does not always work for utility functions $v$ that are not factored. Consider, for example, $v = (3, 3, 2, 2, 2)$ and $n = 2$. The algorithm produces the partition $\xx_1 = \{3, 2, 2\}$ and $\xx_2 = \{3, 2\}$, achieving a minimum share of $5$. However, the unique MMS partition is given by $\xx_1 = \{3, 3\}$ and $\xx_2 = \{2, 2, 2\}$, which achieves a maximin share of $6$.

\subsection{Weakly Lexicographic Utilities}

We now present a valid reduction for weakly lexicographic utilities. First, we introduce the concept of a ``bad cut''. Recall that we work with ordered instances in which $|v_i(1)| \ge \ldots \ge |v_i(m)|$ for all $i$. 

\begin{definition}[Bad Cuts]
	In a goods or chore division instance $I = (\agents,\items,\vv)$, we say that index $k \in [m-1]$ is a \emph{cut} of agent $i$ if $v_i(k) \neq v_i(k+1)$. Further, if $k$ is not a multiple of $n$, we say that it is a \emph{bad cut} of agent $i$. Define $C_i$ to be the smallest bad cut of agent $i$; let $C_i = m$ if agent $i$ does not have any bad cuts.
\end{definition}

\begin{table}[th]
	\centering
\setlength\arrayrulewidth{0.1pt}
	\begin{tabular}{l |  l l l  |  l l l |  l l l | l}
		\toprule
		$i$ & $\abs{v_1}$ & $\abs{v_2}$ & $\abs{v_3}$ & $\abs{v_4}$ & $\abs{v_5}$ & $\abs{v_6}$ & $\abs{v_7}$ & $\abs{v_8}$ & $\abs{v_9}$ & $C_i$ \\
		\midrule
		$i_1$ & 81 & 81 & 81 & 81${|_{4}}$& 9 & 9 & 9${|_{7}}$ & 1 & 1 & 4 \\
		$i_2$ & 81 & 81 & 81$|_\checkmark$ & 9 & 9 & 9$|_\checkmark$ & 1 & 1  & 1 & 9 \\
		$i_3$ & 729${|_{1}}$ & 81 & 81 & 81${|_{4}}$ & 9 & 9 & 9${|_{7}}$ & 1 & 1 & 1 \\
\bottomrule
	\end{tabular}
	\caption{An instance with weakly lexicographic utilities. Bad cuts at index $k$ are shown as $|_k$, and non-bad cuts as $|_\checkmark$.}
	\label{table:wolex-example}
	\vspace{-10pt}
\end{table}

\begin{example}
	\label{example:wolex-example}
	The ordered instance described in \Cref{table:wolex-example} consists of $n = 3$ agents and $ m =9$ items. The \emph{cuts} of $i_1$, $i_2$, and $i_3$ are $\{4, 7\}$, $\{3, 6\}$, and $\{1, 4, 7\}$ respectively. Here, a cut is considered a \emph{bad cut} if it is not divisible by $n = 3$. Then, all cuts of $i_1$ and $i_3$ are bad while $i_2$ has no bad cuts. Following the definition of $C_i$'s, we have $C_{i_1} = 4$, $C_{i_2} = m = 9$, and $C_{i_3} = 1$.
\end{example}

First, for any agent $i$, we identify a specific bundle in a maximin $n$-partition of agent $i$ produced by \Cref{alg:mms-value}, in terms of $C_i$. 

\begin{lemma}
	\label{lem:wolex-one-partition}
	For a goods or chore division instance $I = (\agents, \items, \vv)$ with weakly lexicographic utilities and agent $i \in \agents$, there exists a maximin $n$-partition of agent $i$ in which one of the bundles is $S = \set{1, n + 1, \ldots, kn + 1}$, where $k = \floor{(C_i-1)/n}$.
\end{lemma}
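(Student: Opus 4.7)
The plan is to execute \Cref{alg:mms-value} with the tie-breaking rule that, whenever several bundles are tied at the minimum absolute value, the next item goes to the lowest-indexed such bundle. By \Cref{lem:wolex-factored} I may assume $v_i$ is factored (this preserves the ordinal comparisons of all bundles, hence also the set of maximin partitions), and by \Cref{lem:mms-partition-algorithm-works} the output is a maximin $n$-partition; so it suffices to show that under this tie-breaking the output bundle $\xx_1$ equals $S$.

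The structural fact driving the proof is that every cut strictly less than $C_i$ is a multiple of $n$. I would then prove by induction on $j \in \set{1, \ldots, k+1}$ that at the start of round $j$ (the round processing items $(j-1)n+1, \ldots, jn$) every bundle has the same absolute value. The base case is trivial. For the inductive step with $j \leq k$, any cut in the interior of round $j$, i.e.\ at position $p \in [(j-1)n+1, jn-1]$, would be a non-multiple of $n$ and hence a bad cut; but such a cut would lie in $[1, kn-1] \subseteq [1, C_i-2]$, contradicting the minimality of $C_i$. Therefore round $j$ consists of $n$ items of a common absolute value, and the tie-breaking rule assigns the $i$-th item of the round to bundle $i$, preserving the invariant.

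At the start of round $k+1$, all bundles share a common value $V$ and bundle $1$ contains exactly $\set{1, n+1, \ldots, (k-1)n+1}$. Let $r = C_i - kn \in \set{1, \ldots, n}$. The same ``no cut available'' argument applied to the range $[kn+1, C_i-1]$ (or the empty range, if $r=1$) shows that items $kn+1, \ldots, C_i$ all have the same absolute value, call it $V_s$. Tie-breaking therefore routes them into bundles $1, \ldots, r$; in particular, bundle $1$ receives item $kn+1$, so $\xx_1 \supseteq S$ after item $C_i$ is placed, and bundles $1, \ldots, r$ sit at value $V+V_s$ while bundles $r+1, \ldots, n$ still have value $V$.

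The last and main step is to show bundle $1$ receives no further items. If $C_i = m$ then no items remain and we are done. Otherwise $C_i$ is an actual bad cut, so items $C_i+1, \ldots, m$ all lie in classes $L_{s+1}, \ldots, L_\ell$ strictly below the value class $L_s$ of $C_i$; the weakly lexicographic property applied to any item of $L_s$ gives that their combined absolute value is strictly less than $V_s$. Consequently no bundle in $\set{r+1, \ldots, n}$ can ever reach $V+V_s$ no matter how the remainder is distributed, so the greedy step always picks from those bundles and $\xx_1$ is frozen at $S$. I expect the only subtlety to be the bookkeeping around the boundary cases $r=n$ or $C_i=m$ (where the trailing sum is empty), but the template above handles these uniformly.
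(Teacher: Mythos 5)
Your proof is correct and takes essentially the same approach as the paper: both analyze the execution of \Cref{alg:mms-value} on the ordered instance, using the fact that all cuts below $C_i$ are multiples of $n$ to show the first $kn$ items distribute evenly round-robin, that items $kn{+}1,\ldots,C_i$ go one each to distinct bundles including bundle $1$, and that the weakly lexicographic property (the tail after $C_i$ having less total absolute value than a single item at position $C_i$) forces all remaining items to go to bundles other than bundle $1$. Your version is somewhat more careful on two points the paper glosses over---fixing a concrete tie-breaking rule and explicitly invoking \Cref{lem:wolex-factored} before applying \Cref{lem:mms-partition-algorithm-works}---but the underlying argument is identical.
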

\begin{proof}
	Note that because $C_i$ is the smallest bad cut of agent $i$, we have that for each $k' \in [k]$, agent $i$ has equal utility for all items in $[(k' - 1)n + 1, k'n]$. Consider how \Cref{alg:mms-value} constructs a maximin $n$-partition for agent $i$ given $v_i$. First, for each $k' \in [k]$, it divides items in $[(k'-1)n+1,k'n]$ equally between the bundles (one to each). Then, it assigns items $[kn + 1, C_i]$ to $C_i - kn$ many bundles, again one to each. Note that $C_i-kn \ge 1$ by the definition of $k$. Hence, the first bundle is precisely $S$. 
	
	Note that these $C_i-kn$ bundles receive an extra item $[kn + 1, C_i]$ from compared to the remaining $n-(C_i-kn)$ bundles. Further, because $v_i$ is weakly lexicographic and $C_i$ is a bad cut, this item has more absolute value than all items indexed greater than $C_i$ combined. Thus, \Cref{alg:mms-value} divides the remaining items $[C_i+1,m]$ among the remaining $n - (C_i - kn)$ bundles. In particular, bundle $1$ does not receive any of these items. Hence, in the end, bundle $1$ contains exactly the set of items $S$, as needed. 
\end{proof}

\Cref{lem:wolex-one-partition} shows that in \Cref{example:wolex-example} there exists a maximin $3$ partition for $i_1$ where one of the bundles is $\{1, 4\}$. Same applies to  $\{1, 4, 7\}$ and $\{1\}$ for $i_1$, $i_2$ and $i_3$ respectively. Following this observation and  assuming items here are all goods, allocating $\{1\}$ to $i_3$ is a valid reduction by \Cref{lem:bag-reduction}, because $\{1\}$ is a subset of the other two bundles described. If the items were chores, i.e.\ values being costs, allocating $\{1, 4, 7\}$ to $i_2$ would have formed a valid reduction.

Next, we show that if we choose an agent $i$ with the minimum or maximum $C_i$ and the corresponding $S$ from \Cref{lem:wolex-one-partition}, the pair $(i,S)$ forms a valid reduction. Note that this valid reduction can be found in polynomial time. 

\begin{lemma}
	\label{lem:wolex-reduction}
	For a goods (respectively, chore) division instance $I = (\agents, \items, \vv)$ with weakly lexicographic utilities, the pair $(i, S)$ is a valid reduction  when $i$ is an agent with the minimum (resp., maximum) $C_i$ and $S = \set{1, n + 1, \ldots, kn + 1}$, where $k = \floor{(C_i-1)/n}$. 
\end{lemma}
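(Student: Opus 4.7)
The plan is to derive the statement as an immediate consequence of the two tools developed earlier in the section, namely \Cref{lem:bag-reduction} (the sufficient condition for a valid reduction) and \Cref{lem:wolex-one-partition} (the structural description of a specific bundle in a maximin $n$-partition under weakly lexicographic utilities). So there is no serious obstacle; the task is bookkeeping the two containment cases correctly.

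First, I would verify condition~(\ref{enum:reduction-i-is-fine}) of \Cref{def:valid-reduction}, i.e.\ $v_i(S) \ge \MMS^n_i(\items)$. This is immediate from \Cref{lem:wolex-one-partition}: that lemma exhibits a maximin $n$-partition of agent $i$ in which one bundle is precisely $S$, so $v_i(S)$ is at least the minimum absolute value across the bundles of that partition, which equals (up to the sign convention for chores) the maximin share $\MMS^n_i(\items)$.

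Next, I would verify the hypothesis of \Cref{lem:bag-reduction} about the other agents. Fix any $i' \in \agents \setminus \{i\}$. Apply \Cref{lem:wolex-one-partition} to $i'$ to obtain a maximin $n$-partition of $i'$ that contains the bundle $S_{i'} = \{1, n+1, \ldots, k_{i'} n + 1\}$, where $k_{i'} = \floor{(C_{i'}-1)/n}$. Observe that since $S$ and $S_{i'}$ both consist of indices in arithmetic progression with common difference $n$ starting at $1$, one is contained in the other according to which of $k$ and $k_{i'}$ is larger. In the goods case, $i$ is chosen to minimize $C_i$, so $C_i \le C_{i'}$, hence $k \le k_{i'}$, and $S \subseteq S_{i'}$, as required. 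In the chores case, $i$ is chosen to maximize $C_i$, giving $C_i \ge C_{i'}$, hence $k \ge k_{i'}$, and $S \supseteq S_{i'}$, again as required.

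Finally, invoking \Cref{lem:bag-reduction} with these witnesses $S_{i'}$ yields that $(i, S)$ is a valid reduction. Efficiency is clear: computing every $C_{i'}$, selecting the minimum or maximum, and forming $S$ can all be done in $O(nm)$ time.
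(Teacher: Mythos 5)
Your proof is correct and follows essentially the same route as the paper: verify condition (1) of the valid-reduction definition via \Cref{lem:wolex-one-partition}, then use \Cref{lem:wolex-one-partition} on each other agent $i'$ to produce a nested bundle $S_{i'}$ (nested because both are arithmetic progressions of step $n$ starting at $1$, and the floor function is monotone in $C_i$), and finally invoke \Cref{lem:bag-reduction}. The only (trivial) addition beyond the paper's proof is that you spell out the polynomial running time, which the paper asserts just before the lemma statement.
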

\begin{proof}
	By \Cref{lem:wolex-one-partition}, we have $S$ is one of the bundles in a maximin $n$-partition of agent $i$; hence, $v_i(S) \ge \MMS_i$. 
	
	Consider any other agent $i'$, define $k' = \floor{(C_{i'}-1)/n}$, and let $S' = \set{1,n+1,\ldots,k'n+1}$. Then, by \Cref{lem:wolex-one-partition}, $S'$ is one of the bundles in a maximin $n$-partition of agent $i'$. Further, by our choice of agent $i$, we have $k \le k'$ (thus, $S \subseteq S'$) for goods division, and $k \ge k'$ (thus, $S \supseteq S'$) for chore division. It is easy to see that $S$ and $S'$ satisfy the necessary conditions from \Cref{lem:bag-reduction}. Hence, $(i,S)$ is a valid reduction. \end{proof}

\subsection{Factored Personalized Bivalued Utilities}

In this section, we present a valid reduction for factored personalized bivalued utilities. Recall that we work with ordered instances. Hence, for each agent $i$, there exists $k \in [m]$ such that $|v_i(r)| = p_i$ for all $r \le k$ and $|v_i(r)| = 1$ for all $r > k$. Thus, each agent $i$ has at most one cut ($k$, if $k < m$), and $C_i$ is equal to this cut (if it exists and it is bad) and $m$ otherwise. However, in this case, simply choosing an agent $i$ with the minimum or maximum $C_i$ does not work. Instead, we rely on a different metric, called ``idle time''.

\begin{definition}[Idle Time]
	In a goods or chore division instance $I = (\agents,\items,\vv)$ with factored personalized bivalued utilities, we define $AC_i$ as $0$ if $C_i = m$ and as $n - (C_i \bmod n)$ otherwise.
	Let the \emph{idle time} of agent $i$ to be
	$
	T^{\idle}_i = \min\set{p_i \cdot AC_i, m - C_i}.
	$
\end{definition}
	
\begin{table}[h]
	\centering
\setlength\arrayrulewidth{0.1pt}
	\begin{tabular}{l  l l l  |  l l l  |   l l l   |  l l l}
		\toprule
		$i$ & $\abs{v_1}$ & $\abs{v_2}$ & $\abs{v_3}$ & $\abs{v_4}$ & $\abs{v_5}$ & $\abs{v_6}$ & $\abs{v_7}$ & $\abs{v_8}$ & $\abs{v_9}$ & $C_i$ & $AC_i$ & $T^{\idle}_i$
		\\
		\midrule
		$i_1$ & 2 & 2 & 2 & 2 $|_4$ & \underline{\textbf{1}} & \underline{\textbf{1}}& \underline{1} & \underline{1}  & 1 &  4 & 2 & 4
		\\ 
		$i_2$ & 5 $|_1$ & \underline{\textbf{1}}& \underline{\textbf{1}}& \underline{1} & \underline{1} & \underline{1} & \underline{1} & \underline{1} & \underline{1} & 1 & 2 & 8
		\\ 
		$i_3$ & 4 & 4 & 4 & 4 & 4 & 4 & 4 & 4 $|_8$ &  \underline{\textbf{1}} & 8 & 1 & 1
		\\ 
		\bottomrule
	\end{tabular}
	\caption{An instance with personalized bivalued utilities. Bold cells refer to active bundles $AC_i$, and the underlined ones refer to idle times $T^{\idle}_i$.}
	\label{table:personalized-bival-example}
	\vspace{-15pt}
\end{table}

\begin{example}
	\label{example:personalized-bival-example}
	\Cref{table:personalized-bival-example} presents an instance with personalized bivalued utilities where $p_{i_1} = 2$, $p_{i_2} = 5$ and $p_{i_3} = 4$.  The number of \emph{active bundles} and \emph{idle times} are also shown in the table.
\end{example}

First, note that when agent $i$ does not admit a bad cut, we have $C_i = m$, $AC_i = 0$, and $T^\idle_i = 0$. Suppose agent $i$ admits a bad cut $C_i = kn + r$ with remainder $r \in [n-1]$. Observe that \Cref{alg:mms-value} operates in at most three phases. In the first phase, it divides the items with absolute value $p_i$ in a round robin fashion between all $n$ bundles, until it reaches the bad cut. At that time, $C_i \bmod n$ bundles have an extra item with absolute value $p_i$. We refer to the remaining bundles as the \emph{``active bundles''}; note that there are precisely $AC_i$ many active bundles. In the second phase, it divides the items with absolute value $1$ between the active bundles in a round robin fashion, until either all items are allocated or all $n$ bundles become of exactly equal value (this is where the assumption of the utilities being factored, i.e., $p_i$ being an integer is crucial). Note that the duration of this second phase is precisely the idle time of agent $i$ defined above. If there are any remaining items with absolute value $1$, the algorithm divides them between all $n$ bundles in a round robin fashion in the final phase. 

Using this observation, we are ready to characterize one of the bundles in some maximin $n$-partition of agent $i$. 

\begin{lemma}
	\label{lem:factored-bival-one-partition}
	For a goods or chore division instance $I = (\agents, \items, \vv)$ with factored personalized bivalued utilities and agent $i \in \agents$, there exists a maximin $n$-partition of agent $i$ in which one of the bundles is $S = \set{1, n + 1, \ldots, kn + 1}$, where $k = \floor{(m - \max\set{T^\idle_i - AC_i, 0}-1)/n}$.
\end{lemma}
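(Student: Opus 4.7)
The plan is to apply \Cref{alg:mms-value} to the valuation $v_i$ to obtain a maximin $n$-partition $\xx^{\star}$ of $\items$ (valid by \Cref{lem:mms-partition-algorithm-works}), and then, when necessary, swap items of equal value between bundles to transform $\xx^{\star}_1$ into the target bundle $S$ without changing any bundle's total value. The key observation is that, under personalized bivalued utilities on an ordered instance, every item is either a \emph{heavy} item with $|v_i(\cdot)| = p_i$ (indices in $[C_i]$) or a \emph{light} item with $|v_i(\cdot)| = 1$ (indices in $[m]\setminus[C_i]$); items within each category are value-interchangeable for agent $i$, so the tuple of bundle values depends only on the per-bundle heavy and light counts. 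Any partition with the same per-bundle heavy and light counts as $\xx^{\star}$ is therefore also a maximin partition.

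To identify $\xx^{\star}_1$ I will trace the three phases of \Cref{alg:mms-value} informally described in the paragraph preceding the lemma statement. In Phase~$1$, the $C_i$ heavy items are distributed round-robin starting from bundle $1$, so bundle $1$ accumulates exactly $\{1, n+1, \ldots, qn+1\}$ where $q = \lfloor(C_i-1)/n\rfloor$; when $C_i < m$, the remainder $r = C_i - qn$ is at least $1$ (since $C_i$ is a bad cut), making bundle $1$ a non-active ``heavy'' bundle with $\lceil C_i/n\rceil = q+1$ heavy items at the end of Phase~$1$. In Phase~$2$, light items go only to the $AC_i$ active bundles, so bundle $1$ is untouched. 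Phase~$3$ runs precisely when $T^{\idle}_i = p_i \cdot AC_i < m - C_i$; in that case all bundles are tied in value after Phase~$2$, the round-robin resumes at bundle $1$, and bundle $1$ receives $\lceil(m - C_i - p_i \cdot AC_i)/n\rceil$ of the remaining light items.

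The rest is a case analysis on whether Phase~$3$ occurs, driven by how $T^{\idle}_i$ resolves. In the three no-Phase-$3$ cases ($C_i = m$; $T^{\idle}_i = m - C_i = p_i \cdot AC_i$; and $T^{\idle}_i = m - C_i < p_i \cdot AC_i$), unfolding $k = \lfloor(m - \max\{T^{\idle}_i - AC_i, 0\} - 1)/n\rfloor$ using the identity $C_i + AC_i = n\lceil C_i/n\rceil$ shows that $|S| = q+1$ and in fact $S = \xx^{\star}_1$, so there is nothing to swap. In the Phase-$3$ case ($T^{\idle}_i = p_i \cdot AC_i < m - C_i$), the same unfolding gives $k = q + \lceil(m - C_i - p_i \cdot AC_i)/n\rceil$, and counting indices $jn+1 \in S$ with $jn+1 \le C_i$ (equivalently $j \le q$) shows that $S$ has $q+1$ heavy items and $\lceil(m - C_i - p_i \cdot AC_i)/n\rceil$ light items, matching $\xx^{\star}_1$ exactly. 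In this final case, $\xx^{\star}_1$ and $S$ share the heavy set $\{1, n+1, \ldots, qn+1\}$ but may hold different (equal-sized) subsets of light items, so I carry out $1$-for-$1$ light-item swaps between bundle $1$ and the other bundles to produce a partition whose first bundle equals $S$; because light items are value-equivalent, each swap preserves every bundle's value, and the result is still maximin. The main obstacle is the algebraic bookkeeping of this case split, specifically verifying the decomposition of $k$ in each regime; the swap step itself is essentially free once the counts have been shown to match.
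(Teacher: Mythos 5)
Your proof is correct and follows essentially the same route as the paper's: trace \Cref{alg:mms-value}'s three phases, then exploit the interchangeability of equal-valued items to put bundle $1$ into the form $S$. The only stylistic difference is that the paper achieves this by choosing a favorable tie-breaking inside the algorithm (reserving the indices $\{C_i{+}AC_i{+}1,\ldots,m{-}(T^\idle_i{-}AC_i)\}$ for Phase 3 so that bundle $1$ equals $S$ directly, and appealing to \Cref{lem:mms-partition-algorithm-works} to justify the altered tie-breaking), whereas you run the algorithm once, verify that the heavy/light counts of $\xx^{\star}_1$ match those of $S$, and then perform value-preserving one-for-one swaps of light items afterward.
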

\begin{proof}
	Fix an agent $i \in \agents$. First, if agent $i$ does not have a bad cut, then $T^\idle_i = AC_i = 0$, so $m - \max\set{(T^\idle_i - AC_i), 0} = m$. In this case, \Cref{alg:mms-value} simply allocates all items between $n$ bundles in a round robin fashion, so $S$ coincides with the first bundle. 
	
	Next, suppose agent $i$ has a bad cut $C_i = k'n+r$, where $r \in [n-1]$. In this case, the algorithm divides the first $C_i$ items of absolute value $p_i$ between all $n$ bundles in a round robin fashion, after which point the first bundle has items $\set{1,n+1,\ldots,k'n+1}$. After this, \Cref{alg:mms-value} enters the second phase of allocating items of absolute value $1$ between the active bundles in a round robin fashion, which runs for $T^\idle_i$ steps. 
	
	If $T^\idle_i \le AC_i$, then there must be at most $AC_i$ items left after the bad cut $C_i$. Thus, $k'n +1 \le C_i \le m \le C_i + AC_i = (k' + 1)n$. This implies that $k = \floor{(m-1)/n} = k'$. Hence, $S$ is precisely the first bundle produced by \Cref{alg:mms-value}.
	
	Finally, suppose $T^\idle_i \ge AC_i$. Since all items in $[C_i + 1, m]$ have absolute value $1$, we can do the following during the second phase of \Cref{alg:mms-value}: first allocate items $\set{C_i+1,\ldots,C_i+AC_i}$ to the active bundles, one each, and then allocate items $\set{m -(T^\idle_i - AC_i)+1,\ldots,m}$ to the active bundles in a round robin fashion. Items $\set{C_i+AC_i+1,\ldots,m -(T^\idle_i - AC_i)}$ (if any) are reserved for the third phase in which items with absolute value $1$ need to be divided in a round robin fashion between all $n$ bundles. This change can be interpreted as running \Cref{alg:mms-value} with a different tie-breaking among items with absolute value $1$. Hence, by \Cref{lem:mms-partition-algorithm-works}, this still produces a maximin partition. Under this partition, the items allocated to bundle $1$ (which is necessarily \emph{not} an active bundle) are those that would be allocated if we allocate items $\set{1,\ldots,m -(T^\idle_i - AC_i)}$ in a round robin fashion, i.e., $S = \set{1,n+1,\ldots,kn+1}$, where $k = \floor{(m -(T^\idle_i - AC_i)-1)/n}$, as needed. 
\end{proof} 
In \Cref{example:personalized-bival-example}, we can find a valid reduction similar to the case of weakly lexicographic utilities. By \Cref{lem:factored-bival-one-partition}, there is a maximin $3$ partition for $i_1$ where one of the bundles is $\{1, 4, 7\}$. Same holds for bundles $\{1\}$ and $\{1, 4, 7\}$ for $i_2$ and $i_3$ respectively. If items were goods, the pair of  $i_2$ with $\{1\}$ would have been a valid reduction, and the pair $i_1$ and $\{1, 4, 7\}$ would have worked if they were chores.

Now, we can show that choosing agent $i$ with the minimum or maximum $\max\set{(T^\idle_i - AC_i), 0}$ and the corresponding $S$ from \Cref{lem:factored-bival-one-partition} yields a valid reduction $(i,S)$. 

\begin{lemma}
	\label{lem:valid-reduction-factored-bival}
	For a goods (respectively, chore) division instance $I = (\agents, \items, \vv)$ with weakly lexicographic utilities, the pair $(i, S)$ is a valid reduction when $i$ is an agent with the maximum (resp., minimum) value of $\max\set{(T^\idle_i - AC_i), 0}$ and $S = \set{1, n + 1, \ldots, kn + 1}$, where $k = \floor{(m-\max\set{(T^\idle_i - AC_i), 0}-1)/n}$. 
\end{lemma}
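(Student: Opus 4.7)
The plan is to verify the two conditions of \Cref{lem:bag-reduction} directly, using \Cref{lem:factored-bival-one-partition} as the workhorse. (Note: although the statement reads ``weakly lexicographic'', the quantities $T^{\idle}_{i'}$ and $AC_{i'}$ used in the claim are those defined for factored personalized bivalued utilities, so I proceed in that setting, which is also the setting of the surrounding subsection.) For brevity, write $\delta_{i'} = \max\set{T^{\idle}_{i'} - AC_{i'}, 0}$ and $k_{i'} = \floor{(m - \delta_{i'} - 1)/n}$ for each agent $i'$, so that $S_{i'} = \set{1, n+1, \ldots, k_{i'} n + 1}$ is the particular maximin bundle identified in \Cref{lem:factored-bival-one-partition}. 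Note that the chosen pair in the lemma is exactly $(i, S_i)$.

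The first condition, $v_i(S) \ge \MMS^n_i(\items)$, is immediate: by \Cref{lem:factored-bival-one-partition} applied to $i$, the set $S = S_i$ arises as one of the bundles in a maximin $n$-partition of agent $i$, so its utility is at least $\MMS^n_i(\items)$.

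For the second condition, I invoke \Cref{lem:bag-reduction}, which only needs, for every other agent $i'$, a maximin $n$-partition of $i'$ containing a bundle $S'$ with $S \subseteq S'$ (goods) or $S \supseteq S'$ (chores). I take $S' = S_{i'}$, which is a bundle in some maximin $n$-partition of $i'$ by \Cref{lem:factored-bival-one-partition}. The key observation is purely monotonic: the set $S_{i'}$ is completely determined by $k_{i'}$, and $k_{i'}$ is nonincreasing in $\delta_{i'}$, so $S_{i'}$ is nonincreasing (under set inclusion) in $\delta_{i'}$. Concretely, if $\delta_{i'} \le \delta_i$ then $k_{i'} \ge k_i$ and hence $S_i = S \subseteq S_{i'} = S'$; if $\delta_{i'} \ge \delta_i$ then $k_{i'} \le k_i$ and hence $S_i = S \supseteq S_{i'} = S'$.

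In the goods case, $i$ is chosen to maximize $\delta_i$, so $\delta_{i'} \le \delta_i$ for all $i'$ and the required $S \subseteq S'$ inclusion holds. In the chores case, $i$ is chosen to minimize $\delta_i$, so $\delta_{i'} \ge \delta_i$ for all $i'$ and the required $S \supseteq S'$ inclusion holds. Both hypotheses of \Cref{lem:bag-reduction} are therefore satisfied, and $(i,S)$ is a valid reduction. There is no real obstacle here: the heavy lifting was done in \Cref{lem:factored-bival-one-partition}, and once that structural description of one bundle in a maximin partition is in hand, the reduction follows by choosing the agent that is extremal in the single parameter $\delta$.
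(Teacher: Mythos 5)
Your proof is correct and follows essentially the same route as the paper's: invoke \Cref{lem:factored-bival-one-partition} to get the bundle $S_{i'}$ for each agent, observe that $k_{i'}$ is nonincreasing in $\delta_{i'}$ so that the extremal choice of $i$ yields the required set inclusion, and conclude via \Cref{lem:bag-reduction}. The only difference is presentational: you spell out the monotonicity $k_{i'}\ge k_i \Leftrightarrow \delta_{i'}\le\delta_i$ explicitly, whereas the paper just points back to the analogous step in \Cref{lem:wolex-reduction}; you also rightly flag that the lemma statement should read ``factored personalized bivalued'' rather than ``weakly lexicographic''.
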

\begin{proof}
	By \Cref{lem:factored-bival-one-partition}, we know that $S$ is one of the bundles in a maximin $n$-partition of agent $i$; hence, $v_i(S) \ge \MMS_i$. 
	
	For any other agent $i'$, define $S' = \set{1, n + 1, \ldots, k'n + 1}$, where $k' = \floor{(m-\max\set{(T^\idle_{i'} - AC_{i'}), 0}-1)/n}$. Then, by \Cref{lem:factored-bival-one-partition}, $S'$ is one of the bundles in a maximin $n$-partition of agent $i'$. Further, due to our choice of $i$ and using the same reasoning as used in the proof of \Cref{lem:wolex-reduction}, we have that $S \subseteq S'$ for goods division and $S \supseteq S'$ for chore division, which satisfies the condition of \Cref{lem:bag-reduction}.
\end{proof} 
\subsection{Achieving Pareto Optimal MMS Allocations}\label{sec:mms-po}

In this section, we show that for weakly lexicographic as well as for factored bivalued instances, we can compute an allocation that is MMS and PO in polynomial time. Our approach uses the fact that if $\xx$ is an MMS allocation, and $\xx'$ is a Pareto improvement over $\xx$ then $\xx'$ is also MMS. Thus to find an MMS and PO allocation, we can compute an MMS allocation using \Cref{thm:mms} and then repeatedly find Pareto improvements until we reach a PO allocation. In this section, we will show that we can in polynomial time find Pareto improvements if they exist, and that we will reach a PO allocation after at most polynomially many Pareto improvements.

\citet{aziz2019efficient} prove that in case of goods division with weakly lexicographic or bivalued utilities, one can efficiently test if a given allocation is Pareto optimal (PO). Further, if it is not PO, a Pareto dominating allocation with special properties always exists and can be computed efficiently. The following lemma states their result for goods division, together with an extension to chore division. While in the case of weakly lexicographic utilities our proof for chores almost mirrors their proof for goods, the ideas needed in the case of bivalued utilities are slightly different for chores. Also, the statement below is their claim for weakly lexicographic utilities; while they make a differently worded claim for bivalued utilities, their proof also shows that this claim holds for bivalued utilities. 

\begin{lemma}\label{lem:pareto-improve}
	In a goods or chore division instance with weakly lexicographic or bivalued utilities, one can efficiently test whether a given allocation $\xx$ is Pareto optimal. Further, if $\xx$ is not Pareto optimal, then there exists a cycle of distinct agents $(i_1,\ldots,i_k,i_{k+1}=i_1)$ and a cycle of distinct items $(r_1,\ldots,r_k,r_{k+1}=r_1)$ such that:
	\begin{enumerate}
		\item $r_t \in \xx_{i_t}$ and $v_{i_t}(r_{t-1}) \ge v_{i_t}(r_t)$ for each $t \in \set{2,\ldots,k+1}$, 
		\item at least one of the above inequalities is strict, and 
		\item the allocation $\xx^*$ obtained from $\xx$ by reallocating item $r_{t-1}$ to agent $i_t$ for each $t \in \set{2,\ldots,k+1}$ is a Pareto improvement over $\xx$. 
	\end{enumerate}
	Such a Pareto improvement $\xx^*$ can be computed in polynomial time. 
\end{lemma}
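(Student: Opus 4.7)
I would reduce the problem to finding an improving cycle in a directed ``exchange graph'' on items, exactly the standard tool in top-trading-cycle-style arguments. Build $G = (\items, E)$ where for each item $r \in \items$, letting $i$ denote its current owner in $\xx$, we draw a directed edge $r \to r'$ whenever $v_i(r') \ge v_i(r)$ (for either goods or chores --- the inequality reads in agent $i$'s utility). Label such an edge \emph{strict} if $v_i(r') > v_i(r)$. Testing PO then reduces to asking whether $G$ contains a directed cycle using at least one strict edge: for every strict edge $r \to r'$, check (via BFS in $G$) whether there is a directed path from $r'$ back to $r$. This runs in polynomial time. If such a cycle $r_1, r_2, \dots, r_k, r_1$ is found, list the owners as $i_t = $ owner of $r_t$. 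After relabeling the cycle in reverse so that agent $i_t$ receives $r_{t-1}$, conditions (1)--(3) of the lemma are immediate: every $i_t$ weakly improves, at least one strictly, the agents and items on the cycle are distinct (a simple chordless cycle can always be extracted), and $\xx^\ast$ defined by the shift is a Pareto improvement.

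\textbf{Converse direction (the main obstacle).} The nontrivial part is showing that if $\xx$ is not PO, then such a single-item-swap cycle must exist. For general additive utilities this is false: Pareto improvements could in principle require simultaneous multi-item transfers that do not decompose into one-for-one swaps. Here I use the restricted utility classes. Suppose $\xx^\ast$ Pareto dominates $\xx$, and consider the symmetric difference $\Delta = \{(r, i \to j) : r \in \xx_i \cap \xx^\ast_j, \, i \ne j\}$. I would argue that $\Delta$ must contain a single-item-swap cycle of the required form.

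\textbf{Weakly lexicographic case.} Let $(L_1, \dots, L_\ell)$ be the common-level decomposition of agent $i$'s valuation. The defining inequality $|v_i(r)| > |\sum_{r' \in L_{s+1} \cup \cdots} v_i(r')|$ implies that for each agent $i$, the quantity $v_i(\xx_i)$ is determined lexicographically by the count of items $i$ holds at each level. Thus $v_i(\xx^\ast_i) \ge v_i(\xx_i)$ means that at the first level where the multisets differ, agent $i$ holds strictly more items under $\xx^\ast$ (for goods; fewer for chores). One can then follow an alternating ``trading walk'' through $\Delta$ starting from the highest level where some agent strictly improves, always choosing, at each hop, an item the receiving agent gains at her highest-deficit level. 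Because levels are finite, the walk eventually cycles, and the resulting cycle satisfies (1)--(3).

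\textbf{Bivalued case.} Here valuations lie in $\{a, b\}$ for a common pair. Call an edge of $\Delta$ ``heavy'' for agent $j$ if the item $r$ transferred to $j$ has $v_j(r) = b$ (the more desirable value, $|b|>|a|$ for goods and $|b| < |a|$ for chores). Since $v_j(\xx^\ast_j) \ge v_j(\xx_j)$ and valuations are two-valued, for each agent the count of heavy items gained minus lost, together with overall item count change, must weakly improve her. A counting argument on $\Delta$ then yields a cycle in which every edge is non-strictly-worsening and at least one is strict. The main subtlety is handling the case where one agent's improvement comes purely from receiving more items rather than better items; the restriction to bivalued $\{a,b\}$ keeps this bookkeeping manageable, and a swap cycle in $G$ is extracted by following heavy-edge preferences and breaking ties with item-count preferences.

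\textbf{Expected main obstacle.} The delicate step is the chores side of the bivalued case: unlike goods, where an agent always wants more highly-valued items, chore-recipients want fewer costly items and the trading walk can get stuck at an agent who strictly benefits only via a \emph{net reduction} in items rather than by receiving a cheaper item. Handling that cleanly --- ensuring the walk we build truly closes into a cycle of one-for-one swaps with at least one strict inequality --- is where I expect the argument to require the most care, and is presumably also where the authors' proof for chores diverges from that of \citet{aziz2019efficient} for goods.
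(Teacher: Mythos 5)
Your first half (building the exchange graph on items with an edge $r \to r'$ whenever the current owner $i$ of $r$ has $v_i(r') \ge v_i(r)$, marking strictness, and testing for a cycle containing a strict edge via BFS from each strict edge) is exactly what the paper does for the efficient-computation part of the lemma, and is fine.

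The genuine gap is in the ``converse direction,'' and it is not merely a matter of fleshing out details. Your trading walk is started from a strictly-improving agent and you claim ``the walk eventually cycles, and the resulting cycle satisfies (1)--(3),'' but this is false as stated: the walk can close into a cycle whose edges are all equalities (no agent on the cycle is strictly better off from the swap), which then fails condition~(2). The paper's key technical device, which you do not mention, is to fix $\hat\xx$ to be a Pareto improvement that is \emph{closest} to $\xx$, minimizing $|\bigcup_i \hat\xx_i \setminus \xx_i|$. Then when the chain closes into an all-equalities cycle, one undoes the cycle and contradicts minimality; this is what forces the eventual cycle to contain a strict inequality. Without minimality your walk simply terminates in an uninformative cycle. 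A second, independent missing piece is the bivalued-chores handling you flag yourself: the paper first establishes (again via minimality) that in $\hat\xx$ no agent's bundle strictly shrinks as a subset (no ``clear winner''), and that the number of $b$-valued chores strictly decreases, before running a chain with four explicit cases; none of that is recovered by your ``counting argument on heavy edges,'' which you acknowledge you have not worked out. So the overall architecture matches the paper, but the proof as written does not go through: the minimality-of-$\hat\xx$ argument is the essential missing ingredient in both the weakly lexicographic and the bivalued cases, and the bivalued-chores case additionally needs the ``no clear winner'' lemma.
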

Given bivalued utilities with values $0 < a < b$ (goods division) or $0 > a > b$ (chore division), an allocation $\yy$, and an agent $i$, define $\yy^+_i$ (resp., $\yy^-_i$) as the set of items in $\yy_i$ for which agent $i$ has value $b$ (resp., $a$). 

\begin{proof}
	The goods division case is proved by \citet{aziz2019efficient}. Let us focus on chore division. First, let us establish the existence of the special Pareto improvement $\xx^*$ in case $\xx$ is not PO. Note that if we establish the existence of the desired cycles of agents and items, then the first two properties of these cycles claimed in the lemma imply that the reallocation that yields $\xx^*$ makes each agent weakly better off and some agent strictly better off, i.e., that it is a Pareto improvement. 
	
	\paragraph{Chore division, weakly lexicographic utilities:} Let $I=(\agents,\items,\vv)$ be a chore division instance with weakly lexicographic utilities and $\xx$ be an allocation that is not Pareto optimal. Among all Pareto improvements, let $\hat{\xx}$ be the one that is closest to $\xx$ in that it minimizes $|\bigcup_{i \in \agents} \hat{\xx}_i \setminus \xx_i|$. 
	
	Consider an agent $i_1$ who is strictly better off under $\hat{\xx}$ than under $\xx$; such an agent must exist in a Pareto improvement. Then, there must exist a chore $c_1 \in \xx_{i_1} \setminus \hat{\xx}_{i_1}$ which $i_1$ has given away under $\hat{\xx}$. Let $i_2 \neq i_1$ be the agent who received $c_1$, so that $c_1 \in \hat{\xx}_{i_2}$. Because agent $i_2$ is weakly better off under $\hat{\xx}$ than under $\xx$ but has received chore $c_1$ with $v(c_1) < 0$, she must have given away at least one chore. Because her utility function is weakly lexicographic, in fact she must have lost a chore $c_2 \in \xx_{i_2} \setminus \hat{\xx}_{i_2}$ with $v_{i_2}(c_1) \ge v_{i_2}(c_2)$. 
	
	More generally, for $t \ge 2$, suppose we obtain a sequence of chores $(c_1,\ldots,c_{t-1})$ and a sequence of agents $(i_1,\ldots,i_t)$ such that chore $c_k$ is transferred from agent $i_k$ to agent $i_{k+1}$ for each $k \in [t-1]$. Then, because agent $i_t$ receives chore $c_{t-1}$ under $\hat{\xx}$, and she is weakly better off under $\hat{\xx}$ than under $\xx$, and her utility function is weakly lexicographic, she must have lost a chore $c_t \in \xx_{i_t} \setminus \hat{\xx}_{i_t}$ with $v_{i_t}(c_{t-1}) \ge v_{i_t}(c_t)$ to another agent $i_{t+1}$, and the sequence continues. Since the number of agents is finite, this process must run into a cycle where $i_{t+1} = i_{\ell}$ for some $\ell < t$. We can picture this situation as follows.
	\[
	\begin{tikzpicture}
		\node (i1) {$i_1$};
		\node (i2) [right=1.2 of i1] {$i_2$};
		\node (i3) [right=1.2 of i2] {$i_\ell$};
		\node (it1) [right=1.7 of i3] {$i_{t-1}$};
		\node (it) [right=1.2 of it1] {$i_{t}$};
		\draw[-latex] 
			(i1) -- (i2) 
			node [midway, font=\small, inner sep=1pt, fill=white] {$c_1$};
		\draw[-latex] 
			(i2) -- (i3) 
			node [midway, font=\small, inner sep=1pt, fill=white] {\dots};
		\draw[-latex] (i3) -- (it1) node [midway,inner sep=1pt,fill=white] {\dots};
	
		\draw[-latex] 
			(it1) -- (it) 
			node [midway, font=\small, inner sep=1pt, fill=white] {$c_{t-1}$};
		\draw[-latex] 
			(it) to [bend right=20] 
			node [midway, font=\small, inner sep=1pt, fill=white] {$c_t$}
			(i3);
	\end{tikzpicture}
	\]
	
We now consider the cycle $(i_\ell, \dots, i_t, i_\ell)$. If we have $v_{i_{\ell}}(c_t) > v_{i_{\ell}}(c_{\ell})$ or $v_{i_k}(c_{k-1}) > v_{i_k}(c_k)$ for some $k \in \set{\ell+1,\ldots,t}$, then the cycle satisfies the conditions of the lemma and we are done. Otherwise we have $v_{i_{\ell}}(c_t) = v_{i_{\ell}}(c_{\ell})$ and $v_{i_k}(c_{k-1}) = v_{i_k}(c_k)$ for all $k \in \set{\ell+1,\ldots,t}$. In this case, the allocation obtained by starting from $\hat{\xx}$ and reassigning chore $c_k$ back to agent $i_k$ for each $k \in \set{\ell,\ldots,t}$ is still a Pareto improvement over $\xx$ (because all agents are indifferent between this new allocation and $\hat{\xx}$) and is closer to $\xx$, which contradicts our choice of $\hat{\xx}$.
	
	\paragraph{Chore division, bivalued utilities:} Let $I = (\agents,\items,\vv)$ be a chore division instance with bivalued utilities and $\xx$ be an allocation that is not Pareto optimal. Among all Pareto improvements, choose $\hat{\xx}$ to be the closest to $\xx$ in the sense of minimizing $|\bigcup_{i \in \agents} \hat{\xx}_i \setminus \xx_i|$. 
	
	First, we show that there is no \emph{clear winner} $i$ in $\hat{\xx}$ for whom $\hat{\xx}_i \subsetneq \xx_i$. If this were the case, we could take a chore $c \in \xx_i \setminus \hat{\xx}_i$ and give it back to agent $i$ in $\hat{\xx}$. The resulting allocation would still be a Pareto improvement over $\xx$ (agent $i$ is still weakly better off, and the agent who gives $c$ back must now be strictly better off), and it would be closer to $\xx$, which contradicts our choice of $\hat{\xx}$. 
	
	Next, we show that $|\bigcup_{i \in \agents} \hat{\xx}^+_i| < |\bigcup_{i \in \agents} \xx^+_i|$, i.e., $\hat{\xx}$ allocates strictly fewer chores to agents who find them difficult than does $\xx$. Note that for any allocation $\yy$, the social welfare (the sum of utilities of agents) under $\yy$ is 
	\[
	\textstyle
	|\bigcup_{i \in \agents} \yy^+_i| \cdot b + |\bigcup_{i \in \agents} \yy^-_i| \cdot a = |\bigcup_{i \in \agents} \yy^+_i| \cdot (b-a) + m \cdot a.
	\]
	Because $\hat{\xx}$ is a Pareto improvement over $\xx$, the social welfare under $\hat{\xx}$ is strictly higher than the social welfare under $\xx$. Since $b-a < 0$, it follows that $|\bigcup_{i \in \agents} \hat{\xx}^+_i| < |\bigcup_{i \in \agents} \xx^+_i|$.
	
	Consider a chore $c_1 \in \bigcup_{i \in \agents} \xx^+_i \setminus \bigcup_{i \in \agents} \hat{\xx}^+_i $. Suppose $c_1 \in \xx^+_{i_1} \cap \hat{\xx}^-_{i_2}$ for some $i_2 \neq i_1$. We can represent this as $i_1 \stackrel[\xx]{b}{\to} c_1 \stackrel[\hat{\xx}]{a}{\to} i_2$, where, for an arrow connecting agent $i$ with chore $c$, the entry above indicates $v_i(c)$ while the entry below indicates the allocation in which $c$ is allocated to $i$. Consider extending this chain as much as possible by adding alternating $\stackrel[\xx]{a}{\to}$ and $\stackrel[\hat{\xx}]{a}{\to}$ edges to obtain $i_1 \stackrel[\xx]{b}{\to} c_1 \stackrel[\hat{\xx}]{a}{\to} i_2 \ldots c_{t-1} \stackrel[\hat{\xx}]{a}{\to} i_t$. There are two possibilities: either the chain stops at agent $i_t$ for some $t \ge 2$ (and we are unable to extend it further), or an agent repeats at some point (i.e., $i_t = i_\ell$ for some $\ell < t$). 
	
	\paragraph{Case 1: the chain stops at agent $i_t$.} First, suppose $\xx^+_{i_t} \neq \emptyset$. Consider any chore $\hat{c}_t \in \xx^+_{i_t}$. Consider the allocation obtained by starting from $\xx$ and cyclically shifting chores as follows: chore $c_k$ is moved to agent $i_{k+1}$ for $k \in [t-1]$, and chore $\hat{c}_t$ is moved to agent $i_1$. Note that agent $i_1$ loses a $b$-valued chore and gains a chore, agents $i_2$ through $i_{t-1}$ each lose an $a$-valued chore and gain an $a$-valued chore, and agent $i_t$ loses a $b$-valued chore and gains an $a$-valued chore. Thus, this is the kind of cycle sought in the lemma. 

	Next, suppose $\xx^+_{i_t} = \emptyset$. Because $\hat{\xx}$ is a Pareto improvement over $\xx$ in which agent $i_t$ gains a new chore $c_{t-1}$, she must have also lost at least one chore. Pick $c_t \in \xx^-_{i_t} \setminus \hat{\xx}_{i_t}$. Let $c_t \in \hat{\xx}_{i_{t+1}}$. If $c_t \in \hat{\xx}^-_{i_{t+1}}$, then the chain could have continued. Hence, it must be the case that $c_t \in \hat{\xx}^+_{i_{t+1}}$. In this case, consider the allocation obtained by starting from $\hat{\xx}$ and exchanging chores $c_{t-1}$ and $c_t$ between agents $i_t$ and $i_{t+1}$. Note that the utility to agent $i_t$ does not change because she loses an $a$-valued chore and gains an $a$-valued chore, and agent $i_{t+1}$ is weakly better because she loses a $b$-valued chore and gains a chore. Hence, the resulting allocation is still a Pareto improvement over $\xx$. However, it is also closer to $\xx$ than $\hat{\xx}$ is, because we give chore $c_t$ back to agent $i_t$ during the exchange. This contradicts the definition of $\hat{\xx}$. 
	
	\paragraph{Case 2: $i_t = i_\ell$ for some $\ell < t$.} First, suppose $\ell=1$. Then, consider the allocation obtained by starting from $\xx$ and cyclically shifting chores as follows: chore $c_k$ is moved to agent $i_{k+1}$ for $k \in [t-1]$. Note that agent $i_1$ loses a $b$-valued chore and gains an $a$-valued chore, and agents $i_2$ through $i_{t-1}$ each lose an $a$-valued chore and gain an $a$-valued chore. Thus, this is the kind of cycle sought in the lemma. 

	Finally, suppose $\ell \neq 1$. In this case, consider the allocation obtained by starting from $\hat{\xx}$ and cyclically shifting the chores back as follows: chore $c_k$ is moved back to agent $i_k$ for $k \in \set{\ell,\ell+1,\ldots,t-1}$. Compared to $\hat{\xx}$, agents $i_{\ell}$ through $i_{t-1}$ each lose an $a$-valued chore and gain an $a$-valued chore. Hence, the resulting allocation is still a Pareto improvement over $\xx$, and it is closer to $\xx$ than $\hat{\xx}$ is, which contradicts the definition of $\hat{\xx}$. 
	
	\paragraph{Efficient computation:} Finally, for finding the kind of cycles sought in the lemma, we can use the same method that \citet{aziz2019efficient} use for weakly lexicographic utilities. We can create a directed graph with the items as the nodes, and add an edge $(r,r')$ whenever $v_i(r') \ge v_i(r)$ for the agent $i$ who holds item $r$ under $\xx$. We call this edge strict if $v_i(r') > v_i(r)$. Then, the problem reduces to testing the existence of a cycle in this graph with at least one strict edge (and finding it if it exists). This can be done efficiently by considering each strict edge $(r,r')$, and trying to find a path in the graph from $r'$ to $r$. If a cycle is found, the desired Pareto improvement $\xx^*$ can be computed efficiently by reallocating items along the cycle.
\end{proof}
 
For bivalued instances, the conclusion of \Cref{lem:pareto-improve} also follows from our \Cref{thm:fpo}, proved in \Cref{sec:fpo}, which shows that for bivalued utilities Pareto optimality and fractional Pareto optimality are equivalent, together with the fact that fractional Pareto optimality can be checked in polynomial time via linear programming. In fact, the proofs of \Cref{lem:pareto-improve} and \Cref{thm:fpo} are very similar.

The next lemma shows that starting from any allocation, if we repeatedly find a Pareto improvement by invoking \Cref{lem:pareto-improve}, then we arrive at a Pareto optimal allocation in at most a polynomial number of steps. 

\begin{lemma}\label{lem:pareto-chain}
Let $\xx^0$ be an allocation in a goods division or chore division instance with weakly lexicographic or bivalued utilities. Let $(\xx^0,\xx^1,\xx^2,\ldots)$ be a chain in which, for each $k \ge 1$, $\xx^k$ is a Pareto improvement over $\xx^{k-1}$ satisfying the properties in \Cref{lem:pareto-improve}. Then, the chain terminates at a Pareto optimal allocation in at most a polynomial number of steps. 
\end{lemma}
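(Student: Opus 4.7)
Because the chain can only be extended when the current allocation is not Pareto optimal, and \Cref{lem:pareto-improve} supplies a cycle improvement whenever that is the case, the chain automatically terminates at a Pareto optimal allocation; what must be shown is that it does so after polynomially many steps. My strategy is to exhibit, separately for bivalued and for weakly lexicographic utilities, a non-negative integer potential $\Phi$ that (i) takes at most polynomially many values on allocations, and (ii) moves by at least one unit under every step of the chain. The key structural feature that makes this design easy is that each improvement from \Cref{lem:pareto-improve} is a cyclic swap: every agent $i_t$ in the cycle gives up exactly one item $r_t$ and receives exactly one item $r_{t-1}$ with $v_{i_t}(r_{t-1}) \ge v_{i_t}(r_t)$, and at least one of these inequalities is strict. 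In particular, every agent's bundle size is preserved along the chain.

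For bivalued utilities, let $\Phi(\xx) = \sum_{i \in \agents}\len{\xx_i^+}$, where $\xx_i^+$ is the set of $b$-valued items allocated to $i$ (as defined just before \Cref{lem:pareto-improve}). For an agent $i_t$ in the cycle whose swapped items satisfy $v_{i_t}(r_{t-1}) = v_{i_t}(r_t)$, both items lie in the same bivalued class for $i_t$, so $\len{\xx_{i_t}^+}$ is unchanged; for an agent $i_t$ with $v_{i_t}(r_{t-1}) > v_{i_t}(r_t)$, the two items lie in opposite classes for $i_t$, and $\len{\xx_{i_t}^+}$ moves by exactly $\pm 1$, with the sign constant across all strict swaps on the cycle (increasing for goods since $a<b$, decreasing for chores since $a>b$). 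Hence $\Phi$ is strictly monotone along the chain, and since $0 \le \Phi(\xx) \le m$, the chain terminates in at most $m$ steps.

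For weakly lexicographic utilities, use the partition $(L_1^i,\ldots,L_{k_i}^i)$ from \Cref{def:wolex} and define a per-agent preference rank $\rho_i : \items \to [k_i]$ by $\rho_i(r) = k_i - j + 1$ for goods and $\rho_i(r) = j$ for chores whenever $r \in L_j^i$, so that in both cases $\rho_i$ is strictly increasing in $i$'s preference over items. Set $\Phi(\xx) = \sum_{i \in \agents}\sum_{r \in \xx_i}\rho_i(r)$. If $v_{i_t}(r_{t-1}) = v_{i_t}(r_t)$, then, since the two items share an absolute value and have the same sign, they lie in the same level $L_j^{i_t}$, so $\rho_{i_t}$ is unchanged; if the inequality is strict, they lie in different levels with $r_{t-1}$ strictly preferred by $i_t$, so $\rho_{i_t}(r_{t-1}) > \rho_{i_t}(r_t)$. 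Summed over the cycle, each step raises $\Phi$ by at least $1$, and since $\Phi(\xx) \le m \cdot \max_i k_i \le m^2$, the chain terminates in at most $m^2$ steps. Combined with the polynomial-time subroutine from \Cref{lem:pareto-improve}, this yields the overall polynomial running time. The main obstacle is the choice of potential: the raw social welfare $\sum_i v_i(\xx_i)$ is monotone but not polynomially bounded in the weakly lexicographic case, so one must exploit the swap-cycle structure to define the level-based $\rho_i$ above.
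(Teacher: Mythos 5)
Your proof is correct and follows essentially the same approach as the paper: for bivalued utilities you count the $b$-valued items held by agents (the paper uses the identical potential), and for weakly lexicographic utilities you sum per-agent level indices (the paper's $h(i,r)$ is your $\rho_i(r)$ up to an orientation flip, so the paper argues the potential decreases for goods and increases for chores whereas you normalize so it always increases). The bounds ($m$ and $m^2$ respectively) and the key observation that the swap-cycle structure from \Cref{lem:pareto-improve} makes the potential move by at least one per step also match.
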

\begin{proof}
	For bivalued utilities, note that the Pareto improvement $\xx^*$ identified in \Cref{lem:pareto-improve} strictly increases (resp., reduces) the number of goods (resp., chores) allocated to agents who value it at $b$. Since this value is between $0$ and $m$, the chain must end in at most $m$ steps.
	
	Next, consider an instance $I = (\agents,\items,\vv)$ with weakly lexicographic utilities. Let us define a quantity $h(i,r)$ for every agent $i$ and item $r$: if $(L_1,\ldots,L_k)$ is the partition of $\items$ under the weakly lexicographic utility function $v_i$ of agent $i$ as in \Cref{def:wolex} and $r \in L_t$, then we set $h(i,r) = t$. For an allocation $\yy$, define the potential function $\phi(\yy) = \sum_{i \in \agents} \sum_{r \in \yy_i} h(i,r)$. Note that $m \le \phi(\yy) \le m^2$. We show that in case of goods division (resp., chore division), every Pareto improvement in the chain strictly decreases (resp., increases) the potential. This implies that the chain must terminate in $O(m^2)$ steps. 

	Consider any Pareto improvement from $\xx$ to $\xx^*$ obtained by a cycle of items $(r_1,\ldots,r_k,r_{k+1}=r_1)$ as in \Cref{lem:pareto-improve}. For any agent $i$, note that for every item $r_t$ that she loses, she gains a unique item $r_{t-1}$ with $v_i(r_{t-1}) \ge v_i(r_t)$, which implies $h(i,r_{t-1}) \le h(i,r_t)$ for goods division and the opposite inequality for chore division. Thus, $\sum_{r \in \xx^*_i} h(i,r) \le \sum_{r \in \xx_i} h(i,r)$ for goods division and the opposite inequality holds for chore division. Further, because the Pareto improvement strictly improves the utility of some agent, the inequality for that agent is strict. Hence, the Pareto improvement strictly decreases (resp., increases) the potential value for goods division (resp., chore division), as desired. 
\end{proof} 
In \Cref{lem:pareto-chain}, note that if the initial allocation $\xx$ is an MMS allocation, then the final allocation must be both MMS and PO, since Pareto improvements preserve the MMS property. Plugging in the MMS allocation obtained in \Cref{thm:mms} as the initial allocation, we obtain the following result. 

\begin{corollary}\label{cor:mms-po}
	In every goods division or chore division instance with weakly lexicographic or factored bivalued utilities, an MMS and PO allocation always exists and can be computed in polynomial time. 
\end{corollary}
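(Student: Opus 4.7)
The plan is to combine the three ingredients already in place: the MMS existence result from \Cref{thm:mms}, the Pareto-improvement extraction of \Cref{lem:pareto-improve}, and the polynomial termination bound of \Cref{lem:pareto-chain}. The structure of the argument is essentially a short composition, with one easy but crucial observation bridging the MMS and PO parts.

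First, I would invoke \Cref{thm:mms} to compute, in polynomial time, an initial MMS allocation $\xx^0$. This is legitimate in both settings, since the class of factored bivalued utilities is a subclass of factored personalized bivalued utilities (take $p_i = p$ for all $i$), and the theorem also covers weakly lexicographic utilities. Next, I would set up the iterative improvement: starting from $\xx^0$, at step $k \ge 1$, test whether $\xx^{k-1}$ is Pareto optimal using the polynomial-time test in \Cref{lem:pareto-improve}; if it is, return it, otherwise let $\xx^k$ be the Pareto improvement produced by \Cref{lem:pareto-improve}. Note that \Cref{lem:pareto-improve} applies in exactly our two settings (weakly lexicographic or bivalued, and factored bivalued is a special case of bivalued), so both the existence of an improvement and its polynomial-time computability are available.

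The bridging observation is that the MMS property is preserved along such a chain: if $\xx^k$ Pareto dominates $\xx^{k-1}$, then for every agent $i$ we have $v_i(\xx^k_i) \ge v_i(\xx^{k-1}_i) \ge \MMS_i$, so $\xx^k$ is still an MMS allocation. Hence every $\xx^k$ produced by the loop is MMS. By \Cref{lem:pareto-chain}, applied to the chain $(\xx^0, \xx^1, \ldots)$, the loop terminates after at most polynomially many steps at a Pareto optimal $\xx^k$, which is then both MMS and PO.

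I do not expect a real obstacle here: the only non-routine ingredient is the preservation of MMS under Pareto improvements, which is immediate from the definition of Pareto domination. Everything else — computing the initial MMS allocation, detecting Pareto inefficiency, extracting a single improvement, and bounding the number of iterations — is imported verbatim from earlier results. The overall running time is the sum of the polynomial running time of \Cref{thm:mms} and polynomially many invocations of the polynomial-time procedure of \Cref{lem:pareto-improve}, hence polynomial, establishing the corollary.
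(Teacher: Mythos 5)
Your proof is correct and follows essentially the same route as the paper's own justification: compute an MMS allocation via \Cref{thm:mms}, iteratively apply the Pareto-improvement procedure of \Cref{lem:pareto-improve}, bound the number of iterations by \Cref{lem:pareto-chain}, and note that Pareto improvements preserve the MMS guarantee. Nothing substantive differs.
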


Note that none of our results about PO in this section apply to \emph{personalized} bivalued utilities. Obtaining a polynomial-time algorithm for finding an MMS and PO allocation for personalized bivalued instances remains an open problem.

 \section{Discussion}\label{sec:discussion}

We make progress on the open question regarding the existence of an envy-free up to one item (EF1) and Pareto optimal (PO) allocation of chores, by giving a positive answer for the special case when agents have bivalued utilities (i.e., all utilities are in $\set{a,b}$ for some $0 > a > b$). Our algorithm uses the Fisher market framework, which has been used successfully for allocating goods~\cite{barman2018finding}, but requires novel ideas to adapt it to allocate chores. In case of goods with bivalued utilities, \citet{amanatidis2021maximum} show that an allocation satisfying the stronger fairness guarantee of envy-freeness up to any good (EFX) always exists and can be computed efficiently; they also establish the existence of an EFX + PO allocation. \citet{garg2021computing} improve upon this by using the Fisher market framework to compute an EFX + PO allocation efficiently. Investigating whether EFX or EFX + PO allocations of \emph{chores} always exist with bivalued utilities and, if so, whether they can be computed efficiently is an exciting future direction. Alternatively, establishing the existence (and efficient computation) of EF1 + PO allocations of chores under other natural classes of utility functions, such as weakly lexicographic utilities, is also an appealing avenue for future work. Yet another direction would be to adapt our algorithm to achieve EF1 + PO allocations of \emph{mixed items} (where some are goods but others are chores), at least under restricted utilities. 

Regarding our results on maximin share fairness (MMS), recall that the existence of an MMS allocation immediately implies the existence of an MMS + PO allocation because Pareto improvements preserve the MMS guarantee. However, computing an MMS + PO allocation may not always be easy, even when computing an MMS allocation is. To the best of our knowledge, our result is the first to establish non-trivial efficient computation of an MMS + PO allocation under a natural class of utility functions. It would be interesting to try to achieve MMS for more general classes of utility functions, such as general bivalued utilities (when $b/a$ is not an integer) or all factored valuations.

\bibliographystyle{ACM-Reference-Format}
%%% -*-BibTeX-*-
%%% Do NOT edit. File created by BibTeX with style
%%% ACM-Reference-Format-Journals [18-Jan-2012].

\newpage
\appendix
\section*{Appendix}
\appendix

\section{Proof of Lemma~\ref{lem:properties-phase2b}}

In this section, we prove the two useful properties of Phase 2b claimed in \Cref{lem:properties-phase2b}. These properties hold for general additive utilities, as shown in \citet{barman2018finding}.
In Phase 2b, we use a relaxed condition in \cref{line:phase-2b-condition} compared to the Phase 2 described in \citet{barman2018finding}. The difference is that for the shortest MPB alternating path $\smash{\ls \stackrel{c_1}{\gets} i_1 \stackrel{c_2}{\gets} \cdots \stackrel{c_\ell}{\gets} i_\ell}$, instead of making a transfer when $\price(\xx_{i_{\ell}}) - \price(c_{\ell}) > \price(\xx_{\ls})$ (referred to as a ``path violator''), we do a transfer when $\priceone(\xx_{i_\ell}) > \price(\xx_{\ls})$ (referred to as a ``violator''). Note that $\price(\xx_{i_{\ell}}) - \price(c_{\ell})  \le \priceone(\xx_{i_\ell})$. Therefore, if our  Phase2b makes a transfer then the original Phase 2 in \cite{barman2018finding} also does that.

One observation is that the minimum spending value never decreases over the run of a Phase 2b. This is \Cref{lem:properties-phase2b} (\ref{enum:phase2b-ls-spending}).

\begin{lemma}
	\label{lem:ls-spending-increase}
	During a run of Phase 2b, the minimum spending, i.e.\ $\min_{i \in \agents} \price(\xx_i)$, never decreases. 
\end{lemma}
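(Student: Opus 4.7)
The plan is to argue that each single iteration of the while loop in Phase~2b can only increase (weakly) the current minimum spending, from which the lemma follows by induction on the number of transfers performed during the run. So it suffices to analyze what happens in one transfer.

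First, I would observe that in line~\ref{line:phase-2b-transfer}, only two agents' bundles change: the donor $i_\ell$ loses chore $c_\ell$ and the receiver $i_{\ell-1}$ gains it. Consequently $\price(\xx_{i_{\ell-1}})$ strictly increases by $\price(c_\ell)$, $\price(\xx_{i_\ell})$ strictly decreases by the same amount, and every other agent's spending is unchanged. The minimum spending $\min_i \price(\xx_i)$ can therefore only decrease if $i_\ell$'s new spending drops below the previous minimum value, which equals $\price(\xx_{\ls})$ by choice of $\ls$ at the top of the loop.

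Next, I would invoke the guard in line~\ref{line:phase-2b-condition}: the transfer is only performed when $\priceone(\xx_{i_\ell}) > \price(\xx_{\ls})$. By definition of $\priceone$, we have
\[
\priceone(\xx_{i_\ell}) \;=\; \price(\xx_{i_\ell}) - \max_{c \in \xx_{i_\ell}} \price(c) \;\le\; \price(\xx_{i_\ell}) - \price(c_\ell),
\]
since $c_\ell \in \xx_{i_\ell}$. Chaining gives the new spending of $i_\ell$ is $\price(\xx_{i_\ell}) - \price(c_\ell) \ge \priceone(\xx_{i_\ell}) > \price(\xx_{\ls})$, i.e., still strictly above the previous minimum. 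All other agents' spendings are either unchanged (so $\ge \price(\xx_{\ls})$ trivially) or strictly larger (the receiver $i_{\ell-1}$), so the minimum spending after the transfer is at least $\price(\xx_{\ls})$, proving monotonicity for a single step.

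I do not expect any real obstacle here: the argument is a one-line comparison that only uses the triggering condition of the while loop and the definition of $\priceone$. The only thing to be careful about is the degenerate case $i_{\ell-1} = \ls$ (i.e., $\ell = 1$), but this is handled identically since $\ls$'s spending only goes up, and $i_1$'s new spending still exceeds the old $\price(\xx_{\ls})$ by the computation above.
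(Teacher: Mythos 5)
Your proof is correct and takes essentially the same route as the paper: both arguments reduce to a single transfer, observe that only the donor's spending decreases, and then chain $\price(\xx_{\ls}) < \priceone(\xx_{i_\ell}) \le \price(\xx_{i_\ell}) - \price(c_\ell)$ to conclude that the donor's post-transfer spending remains above the old minimum. The only difference is cosmetic — you spell out the inequality $\priceone(\xx_{i_\ell}) \le \price(\xx_{i_\ell}) - \price(c_\ell)$ and the case $\ell = 1$ explicitly, whereas the paper leaves these implicit.
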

\begin{proof}
	It is sufficient to show the minimum spending does not decrease after each single transfer of Phase 2b. Suppose we transfer a chore $c$ from agent $i$ to agent $j$. Let $\ls$ be the least spender before the transfer. Let $\xx$ and $\xx'$ denote the allocations before and after the transfer respectively. Right before the transfer, $i$ must have been a violator to $\ls$, then
	$
	\price(\xx_{\ls}) < \priceone(\xx_i) \le \price(\xx_i) - \price(c) = \price(\xx'_i).
	$
	That is, the spending of $i$ after the transfer is strictly larger than the minimum spending before the transfer. Furthermore, spending of $j$ has increased by $\price(c)$, and other agents have the same bundle (and spending) as they had before the transfer. In conclusion, $\min_{i' \in \agents }\price(\xx_{i'}) \le \min_{i' \in \agents } \price(\xx'_{\ls})$.
\end{proof}

Next, we show that Phase 2b must terminate after at most $\poly(n, m, \max_{i \in \agents} \abs{\mathcal{U}_i})$ steps, where $U_i$ is the set of all different utilities agent $i$ has for all subsets of items. 
The proof follows from the next two lemmas.

\begin{lemma}[Lemma 13, \citet{barman2018finding}]
	\label{lem:phase2-ls-change}
	After $\poly(n, m)$ steps in Phase 2b, either the identity of the least spender changes or a Phase 3 happens.
\end{lemma}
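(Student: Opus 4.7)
The plan is to bound the number of consecutive transfers during Phase 2b while the least spender $\ls$ remains fixed (after which, by hypothesis, either the identity of $\ls$ changes or Phase 2b terminates and Phase 3 is entered). By Lemma~\ref{lem:ls-spending-increase}, $\price(\xx_\ls)$ is monotone non-decreasing during Phase 2b, so ``$\ls$ remains the least spender'' is a well-defined regime in which we need to argue $\poly(n,m)$ progress. Note that prices are not altered during Phase 2b, so the set of MPB edges only changes due to ownership changes.

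First, I would partition agents by their MPB-alternating-path distance from $\ls$: let $B_k$ be the set of agents reachable from $\ls$ via a shortest MPB alternating path of length exactly $k$, and define $\ell^* = \min\{k : B_k \text{ contains a violator}\}$. Phase 2b selects a violator $i_{\ell^*} \in B_{\ell^*}$ and transfers one of its items $c_{\ell^*}$ to its predecessor $i_{\ell^*-1} \in B_{\ell^*-1}$.  The transfer has two deterministic effects on the violating structure: (a) $\priceone(\xx_{i_{\ell^*}})$ strictly decreases, since $i_{\ell^*}$ loses an item and other items' prices do not change; (b) $i_{\ell^*-1}$ gains an item, so it may itself become a violator, which would shrink $\ell^*$ by at least one in the next iteration.

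Next, I would set up a lexicographic potential $\Phi = (\ell^*,\, V_{\ell^*})$, where $V_{\ell^*}$ is a progress measure at the current frontier level $\ell^*$ --- e.g., the number of violators at level $\ell^*$ together with the sum of their $\priceone$ values (tie-broken suitably). I would argue that at every transfer, either $\ell^*$ strictly decreases (case (b) above), or $\ell^*$ stays the same and $V_{\ell^*}$ strictly decreases (by (a)). Since $\ell^* \in \{1,\dots,n\}$ can decrease at most $n-1$ times before bottoming out, and the secondary quantity $V_{\ell^*}$ ranges over at most $\poly(n,m,\max_i|\mathcal{U}_i|)$ distinct values (as prices are fixed during Phase 2b and $\priceone$ takes values in agent $i$'s utility lattice), lexicographic descent yields a $\poly(n,m)$ bound on consecutive transfers. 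If Phase 2b runs out of violators in this regime, by its termination condition we then either return in \cref{line:return} or proceed to Phase 3.

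The main obstacle is making the secondary potential $V_{\ell^*}$ strictly monotone across transfers at a fixed $\ell^*$. The delicate point is that when the transfer moves $c_{\ell^*}$ to $i_{\ell^*-1}$, the MPB graph can acquire or lose edges incident to these two agents, so an agent previously at level $k > \ell^*$ could migrate to a smaller level --- potentially creating a new violator at level $\ell^*$ that offsets the decrease in $\priceone(\xx_{i_{\ell^*}})$. I would handle this by observing that such a level-drop for any agent $j$ requires $c_{\ell^*}$ to appear on a new shortest path from $\ls$ to $j$ through $i_{\ell^*-1}$, which forces $j$ to move to level exactly $\ell^*$ (not below, since $i_{\ell^*-1}$ is at level $\ell^*-1$). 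Case analysis on whether such a $j$ is itself newly a violator then either collapses to case (b) (with $\ell^*$ decreasing in the next step) or preserves the lex-decrease of $(\ell^*, V_{\ell^*})$, completing the bound.
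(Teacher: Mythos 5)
Your plan diverges from the paper's proof in a way that leaves a real gap.

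The lemma asserts a bound of $\poly(n,m)$, independent of the price/utility magnitudes. Your proposed secondary potential $V_{\ell^*}$ is built from $\priceone$-values, and you explicitly note that it ranges over $\poly(n,m,\max_i\abs{\mathcal{U}_i})$ distinct values. That is not a $\poly(n,m)$ range in general: this lemma is attributed to \citet{barman2018finding} and is supposed to hold for arbitrary additive utilities (and hence arbitrary prices), where $\abs{\mathcal{U}_i}$ can be exponential in $m$. So lexicographic descent on $(\ell^*, V_{\ell^*})$, as you define it, yields at best a pseudo-polynomial bound, not the polynomial bound that the lemma claims. The $\max_i\abs{\mathcal{U}_i}$ factor is allowed to appear in the \emph{overall} running-time bound for Phase~2b (\Cref{lem:phase2-pseudopoly}), but specifically because it is paid for by \Cref{lem:phase2-same-ls}, which controls how often the least spender can change; it must \emph{not} appear in the bound for the number of steps between two consecutive least-spender changes, which is what this lemma is about.

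A second, smaller issue is that your lex-potential presupposes that $\ell^*$ (the minimum level of a violator) never increases while $\ls$ is fixed, which is not clear: after a transfer, the violator $i_{\ell^*}$ may cease to be a violator, and if it was the only level-$\ell^*$ violator, the next shortest violator path may be strictly longer. Your ``case analysis'' paragraph at the end does not resolve this, so the lex monotonicity is unproven.

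The paper avoids both issues with a purely combinatorial potential over \emph{all} agents, namely $\phi(t) = \sum_{i \in \agents}\bigl(m\cdot(n-\level(i,t)) + \abs{G(i,t)}\bigr)$, where $\level(i,t)$ is agent $i$'s MPB-alternating-path distance from the least-spender set (capped at $n$) and $G(i,t)\subseteq\xx_i^t$ is the set of chores that serve as the last edge of a shortest such path into $i$. This potential is integral, bounded by $O(mn^2)$, and shown to strictly decrease after every transfer: the weighting by $m$ ensures that any level increase (which frees up at most $m$ units in the $\abs{G}$ terms) dominates, and if no level changes then $G(i_\ell,\cdot)$ loses one chore. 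Crucially, the decrease is proved \emph{globally}---the paper argues no agent's level ever drops after a transfer---which is exactly the monotonicity your construction is missing. If you want to salvage your approach, you would need to replace $V_{\ell^*}$ with a measure that is bounded by a polynomial in $n$ and $m$ alone (a count of items or agents, as in the paper), and you would need to extend your potential to account for all levels simultaneously rather than just the frontier level $\ell^*$.
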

\begin{proof}

	At time $t$, let $LS$ be the set of agents with the minimum spending, and for all agents $i \in \agents$ define
	\[
	\level(i, t) \coloneq
	\begin{cases}
		\ell , & \text{if $\exists \ls \in LS \colon \ls \reachable i$, and $\ell$ is the length of the shortest of such paths,} \\
		n, & \text{if $\not\exists \ls \in LS \colon \ls \reachable i$.}
	\end{cases}
	\]
	
	Furthermore, let $G(i, t)$ be the set of chores $c \in \xx^t_i$ such that there exists an MPB alternating path $\smash{\ls {\gets} \cdots \gets i' \stackrel{c}{\gets} i}$ where the last edge uses chore $c$.
	For agents $i$ where $\level(i, t) = n$, $G(i, t) = \emptyset$.
	
	Now, define the potential function $\phi(t)$ as follows,
	\[
	\phi(t) = \sum_{i \in \agents} m \cdot (n - \level(i, t)) + \len{G_{i, t}}.
	\]
	Note that $\phi$ is always integral and positive, $\sum_{i \in \agents} \len{G_{i, t}} \le m$, and $\sum_{i \in \agents} m \cdot (n - \level(i, t)) \le mn^2$. Then, to show the lemma holds, it suffices to prove the potential function strictly decreases after each transfer. Therefore, Phase 2b terminates after $O(mn^2)$ steps. 
	
	Suppose we transfer the chore $c$ from  agent $i_\ell$ to agent $i_{\ell - 1}$.

	Agents at level $\ell - 2$ do not consider $c$ their MPB, because otherwise the shortest path to $i_\ell$ would be of length $\ell -1$. Therefore, after $i_{\ell - 1}$ receives $c$, $\level(i_{\ell - 1}, t + 1) = \level(i_{\ell - 1}, t)$ and $G(i_{\ell - 1}, t + 1) = G(i_{\ell - 1}, t)$, and $\phi$ does not change for the terms related to $i_{\ell - 1}$.
	
	Similarly, other agents $i' \in \agents \setminus \set{i_{\ell}, i_{\ell - 1}}$ cannot move to lower levels after this transfer.

	For $i_\ell$, either there exists another chore in $G_{i, t}$ which keeps her in level $\ell$, or she moves to a strictly higher level (or possibly $\level(i_{\ell}, t + 1) = n$). In either case, we can show $\phi$ strictly decreases.

	If $\level(t, i_\ell) < \level(t + 1, i_\ell)$, then any change in $\sum_{i \in \agents} \len{G_{i, t}}$ will be cancelled out by the decrease in $m \cdot (n - \level(i_\ell, t))$ due to the lexicographical weighting. 
	
	If $\level(t, i_\ell) = \level(t + 1, i_\ell)$, then for other agents $i' \in \agents \setminus \set{i_{\ell}, i_{\ell - 1}}$, $G_{i', t}$ does not change. However, $\abs{G_{i_\ell, t + 1}} = \abs{G_{i_{\ell}, t}} - 1$. Thus, $f$ decreases by at least one after each transfer in Phase 2b.
\end{proof}

\begin{lemma}
	\label{lem:phase2-same-ls}
	During a continuous run of Phase 2b, if agent $i$ ceases being the least spender after time $t$, and becomes the least spender again at some time $t' > t$, then her utility have must have decreased, i.e.\ ${v_i(\xx^{t'}_i)} < v_i(\xx^{t}_i)$.
\end{lemma}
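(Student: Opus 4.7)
The plan is to reduce the claim to a statement about prices, and then track $\price(\xx_i^\tau)$ across the Phase~2b steps between $t$ and $t'$. Since Phase~2b does not modify prices, $(\xx, \price)$ remains in equilibrium throughout with a \emph{fixed} positive value of $\MPB_i$, and because $i$ owns only MPB chores at every time, $|v_i(\xx_i^\tau)| = \MPB_i \cdot \price(\xx_i^\tau)$. Consequently, the claim $v_i(\xx_i^{t'}) < v_i(\xx_i^t)$ is equivalent to $\price(\xx_i^{t'}) > \price(\xx_i^t)$, and this is what I will prove.

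Let $m_\tau := \min_{k \in \agents} \price(\xx_k^\tau)$; by \Cref{lem:ls-spending-increase}, $m$ is non-decreasing. Since $i$ is a least spender at both $t$ and $t'$, we have $\price(\xx_i^t) = m_t$ and $\price(\xx_i^{t'}) = m_{t'} \ge m_t$, so only strictness is at stake. Let $\tau^* \in (t, t']$ be the first time $i$ is no longer a least spender. A violator---the only role that sheds a chore in Phase~2b---must satisfy $\priceone > m$, which is incompatible with $\price = m$, so $i$ cannot have lost a chore while in the argmin; and had her bundle stayed unchanged, monotonicity of $m$ would have kept her in the argmin. Thus $i$ must have gained a chore at step $\tau^*-1 \to \tau^*$, giving $\price(\xx_i^{\tau^*}) > \price(\xx_i^{\tau^*-1}) = m_{\tau^*-1} \ge m_t$.

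I then induct on $\tau \in [\tau^*, t']$ to show $\price(\xx_i^\tau) > m_t$. The crucial case is when $i$ loses a chore $c$ at step $\tau \to \tau+1$: the Phase~2b rule forces $i$ to be the violator end $i_\ell$ of the alternating path, so $\priceone(\xx_i^\tau) > m_\tau \ge m_t$, and since $\price(c) \le \max_{c' \in \xx_i^\tau} \price(c')$,
\[
\price(\xx_i^{\tau+1}) = \price(\xx_i^\tau) - \price(c) \ge \price(\xx_i^\tau) - \max_{c' \in \xx_i^\tau} \price(c') = \priceone(\xx_i^\tau) > m_t.
\]
The remaining cases (gaining a chore, or not being involved in the transfer) preserve the bound trivially. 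At $\tau = t'$ this yields $\price(\xx_i^{t'}) > m_t = \price(\xx_i^t)$, which finishes the proof via the MPB identity. The main obstacle is precisely the chore-loss case: a careless bound would allow $i$ to shed a high-priced chore and fall below $m_t$, but the violator condition $\priceone > m$ exactly cancels the worst-case loss $\max_{c'} \price(c')$, keeping her spending strictly above $m_t$.
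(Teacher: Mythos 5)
Your proof is correct and reaches the same conclusion, but the bookkeeping is genuinely different from the paper's. Both arguments reduce to prices via the MPB identity, both rely on the monotonicity of least spending (\Cref{lem:ls-spending-increase}), and both exploit the same key cancellation: the violator condition $\priceone(\xx_i) > \min_j \price(\xx_j)$ exactly offsets the loss of the maximum-priced chore, so a give-away cannot push $i$'s spending below the current minimum. Where you differ is in the organization. The paper splits into two cases (either $i$ never gives away a chore on $(t, t']$ — immediate — or else it focuses on the \emph{last} give-away time $t_\ell$, applies the cancellation once there, and observes that spending can only rise afterwards). You instead identify the \emph{first} departure time $\tau^*$, argue $i$ must have gained a chore there, and run a forward induction maintaining the invariant $\price(\xx_i^\tau) > m_t$ for all $\tau \in [\tau^*, t']$, applying the cancellation at every give-away. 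Your inductive framing handles give-aways uniformly rather than singling out the last one; it also treats the moment $i$ leaves the argmin with a bit more care than the paper does (the paper implicitly takes the departure to occur exactly at step $t \to t+1$, whereas you locate $\tau^*$ explicitly and rule out the other two possibilities). Net effect: same proof idea, slightly different scaffolding, and arguably a cleaner handling of the edge cases.
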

\begin{proof}
	When agent $i$ ceases being the least spender, she must have beenshe must have received a chore $c$ at time $t$. That is $\price(\xx^t_i) = \min_{i' \in \agents} \price(\xx^t_{i'})$, and $\xx^{t + 1}_i = \xx^t_i \cup \set{c}$. First, suppose $\xx^{t + 1}_i \subsetneq \xx^{t'}_i$, i.e.\ $i$ did not give away any chores from $t + 1$ to $t'$, then $v_i(\xx^{t'}_i) \le v_i(\xx^{t + 1}_i) < v_i(\xx^{t}_i)$.

	Now, assume $i$ has given away at least one chore from $t + 1$ to $t'$. Let $t_\ell$ be the last time she gave away an item. Suppose that item is $c'$. At $t_\ell$, $i$ must have been a violator to the least spender, say agent $\ls$, then
	\[
	\price(\xx^{t_\ell}_{\ls}) < \priceone(\xx^{t_\ell}_{i}) \le \price(\xx^{t_\ell}_i) - \price(c') = \price(\xx^{t_\ell + 1}_i).
	\]
	Furthermore, as $i$ was the least spender at $t$ and the minimum spending has not decreased by \Cref{lem:ls-spending-increase}, $\price(\xx^t_i) \le \price(\xx^{t_\ell}_{\ls})$. Putting these together, we conclude that $\price(\xx^t_i) < \price(\xx^{t_\ell + 1}_i)$. Since this was the last time $i$ gave away a chore, her spending could not go any lower. Thus, $\price(\xx^{t}_i) < \price(\xx^{t'}_i)$. There were no price changes between $t$ and $t'$, then $\MPB_i$ has remained the same, and $\abs{v_i(\xx^t_i)} = \MPB_i \cdot \price(\xx^{t}_i) < \MPB_i \cdot \price(\xx^{t'}_i)  = \abs{v_i(\xx^{t'}_i)}$ which completes the proof.
\end{proof}

With the two lemmas above, we can prove an upper bound on the running time of Phase 2b.
\begin{lemma}
	\label{lem:phase2-pseudopoly}
	Phase 2b of \Cref{alg:full-algo} should terminate after at most  $\poly(n, m, \max_{i \in \agents} \len{\mathcal{U}_i})$ time, where $\mathcal{U}_i = \set{v_i(S) \mid \forall S \subseteq \items}$ is the set of all different utilities agent $i$ can obtain.
\end{lemma}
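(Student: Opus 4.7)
The plan is to combine the two preceding lemmas in a straightforward counting argument. By \Cref{lem:phase2-ls-change}, within any run of $\poly(n,m)$ consecutive steps of Phase 2b, the identity of the least spender must change (otherwise the algorithm would have moved on to Phase 3, but Phase 2b does not itself enter Phase 3). So to bound the total running time of Phase 2b, it suffices to bound how many times the identity of the least spender can change during a single run.

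To bound the number of such changes, I would attribute each change to the agent that newly becomes the least spender, and then bound, for each agent $i \in \agents$, the number of times $i$ can become the least spender during a single run of Phase 2b. Fix such an agent $i$ and consider all times $t_1 < t_2 < \cdots$ at which $i$ is the least spender (with $i$ not being the least spender immediately before each such $t_j$, except possibly at $t_1$). By \Cref{lem:phase2-same-ls}, we have $v_i(\xx^{t_{j+1}}_i) < v_i(\xx^{t_j}_i)$ for every $j$; hence the utilities $v_i(\xx^{t_j}_i)$ take distinct values, all lying in $\mathcal{U}_i$. Therefore the number of such times is at most $|\mathcal{U}_i|$, and the total number of least-spender changes over the run is at most $\sum_{i \in \agents} |\mathcal{U}_i| \le n \cdot \max_{i \in \agents} |\mathcal{U}_i|$.

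Multiplying this bound on least-spender changes by the $\poly(n,m)$ bound between consecutive changes from \Cref{lem:phase2-ls-change}, we conclude that Phase 2b terminates after at most $\poly(n, m) \cdot n \cdot \max_{i \in \agents} |\mathcal{U}_i| = \poly(n, m, \max_{i \in \agents} |\mathcal{U}_i|)$ steps, as claimed. There is no real obstacle here: the two lemmas have been tailored precisely for this combination, and the only care needed is to note that \Cref{lem:phase2-same-ls} applies within a single continuous run of Phase 2b (no intervening price changes), which is exactly the setting in which we invoke it.
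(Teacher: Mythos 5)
Your proposal is correct and follows essentially the same approach as the paper: it combines \Cref{lem:phase2-ls-change} (which bounds the time between least-spender changes) with \Cref{lem:phase2-same-ls} (which bounds, per agent, the number of times that agent can re-enter the least-spender role, via strict utility decrease). Your write-up is actually slightly more careful than the paper's terse three-sentence argument, in particular by explicitly summing $\sum_{i\in\agents}|\mathcal{U}_i| \le n \cdot \max_i |\mathcal{U}_i|$ rather than leaving the per-agent-to-global step implicit.
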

\begin{proof}
	By \Cref{lem:phase2-same-ls}, the number of times an agent ceases being among the least spenders is bounded by the number of different utilities she can have, i.e.\,  $\max_{i \in \agents} \len{\mathcal{U}_i}$. Moreover, by \Cref{lem:phase2-ls-change}, after $\poly(n, m)$ time the identity of the least spender must change or we turn to a Phase 3. Therefore, a continuous Phase 2 can run for at most $\poly(n, m, \max_{i \in \agents} \len{\mathcal{U}_i})$
\end{proof}

As a corollary of \Cref{lem:phase2-pseudopoly} applied to the bivalued chores case, and due to the fact that $\len{\mathcal{U}_i} \le m^2$ (fix the number of $-1$'s and $-p$'s in the bundle), Phase 2b terminates in $\poly(n, m)$ time.
Therefore, we have proved part (\ref{enum:phase2b-poly}) of \Cref{lem:properties-phase2b} as \Cref{lem:phase2-pseudopoly} and part (\ref{enum:phase2b-ls-spending}) of \Cref{lem:properties-phase2b} as \Cref{lem:ls-spending-increase}. 

\newpage
\section{PO is equivalent to $\text{f}$PO for bivalued utilities}
\label{sec:fpo}

\newcommand{\zz}{\mathbf{z}}

In this section, we will prove \Cref{thm:fpo}, which states that for bivalued utilities, an allocation $\xx$ is Pareto optimal (PO) if and only if it is fractionally Pareto optimal (fPO). We will give the proof for the case of chore division. The proof for goods division is very similar, and can be obtained by reversing the direction of the arrows in the pictures and by swapping the terms ``give away'' and ``receive''. It is worth noting that \Cref{thm:fpo} does \emph{not} hold for personalized bivalued utilities. A counterexample for two agents and two goods is $v_1(a) = 1$, $v_1(b) = 2$, $v_2(a) = 1$, and $v_2(b) = 3$. Then the allocation $\xx_1 = \{b\}$, $\xx_2 = \{a\}$ is Pareto optimal, but it fails fPO since it is dominated by the fractional allocation $\xx_1 = \{a, \frac12b\}$, $\xx_2 = \{\frac12b\}$.

First, note that one direction is trivial: an fPO allocation is also PO.
For the other direction, let $\xx$ be an (integral) allocation that fails fPO. We will show that $\xx$ is not (integrally) Pareto optimal. Fix an arbitrary fractional allocation $\yy$ that Pareto dominates $\xx$. (A \emph{fractional allocation} $\yy = (\yy_{i,c})_{i \in \agents, c \in \items}$ is a collection of numbers $\yy_{i,c} \in [0,1]$ with $\sum_{i\in \agents} \yy_{i,c} = 1$ for all $c\in \items$, where $\yy_{i,c}$ denotes the fraction of item $c$ allocated to agent $i$.)

Fixing the initial allocation $\xx$, any fractional allocation $\zz$ can be described by a ``diff'' vector $(\alpha_{i,j,c})_{i,j\in \agents, c\in \items}$ where $\alpha_{i,j,c} \in [0,1]$ describes the amount of item $c$ that agent $i$ needs to give to $j$ in order to turn $\xx$ into $\zz$. Thus
\[
\alpha_{i,j,c} = \begin{cases}
	0 & \text{if $i = j$}, \\
	0 & \text{if $i \neq j$ and $c \not\in \xx_i$}, \\
	z_{j,c} & \text{if $i \neq j$ and  $c \in \xx_i$}. \\
\end{cases}
\]
From now on let $\zz$ be a fractional allocation such that $v_{i}(\zz) \ge v_{i}(\yy)$ for all $i \in \agents$, and among such allocations, let $\zz$ be the one that minimizes $\sum_{i,j,c} \alpha_{i,j,c}$. Note that such an allocation $\zz$ exists, because the objective function is continuous and the feasibility set ($v_{i}(\zz) \ge v_{i}(\yy)$ for all $i$) is compact. Because $\yy$ Pareto dominates $\xx$, then $\zz$ also Pareto dominates $\xx$.

When $\alpha_{i,j,c} > 0$, we draw the following edge, indicating that $c$ was (partially) transferred, and showing the valuation of both agents for $c$.
\[
\begin{tikzpicture}[]
	\node (i1) {$i$};
	\node (i2) [right=2.5cm of i1] {$j$};
	\draw[-latex] 
		(i1) -- (i2) 
		node [midway, font=\small, inner sep=2pt, fill=white] {$c$}
		node [pos=0.1, below, font=\small] {$v_i(c)$}
		node [pos=0.9, below, font=\small] {$v_j(c)$};
\end{tikzpicture}
\]

Since $\zz$ Pareto dominates $\xx$, the utilitarian social welfare of $\zz$ is higher than that of $\xx$. In other words,
\[
	\sum_{i,j,c} (v_j(c) - v_i(c))\cdot \alpha_{i,j,c} > 0,
\]  
because the left-hand side describes the additional social welfare under $\zz$ compared to $\xx$.
It follows that there are $i_1,i_2,c_1$ with $\alpha_{i_1,i_2,c_1} > 0$ such that $v_{i_1}(c_1) = -p$ and $v_{i_2}(c_1) = -1$, that is, at least one chore is partially transferred from an agent $i_1$ who thinks it's difficult to another agent $i_2$ who thinks it's easy.
Thus,
\[
\begin{tikzpicture}[]
	\node (i1) {$i_1$};
	\node (i2) [right=2.5cm of i1] {$i_2$};
	\draw[-latex] 
		(i1) -- (i2) 
		node [midway, font=\small, inner sep=1pt, fill=white] {$c_1$}
		node [pos=0.1, below, font=\small] {$-p$}
		node [pos=0.9, below, font=\small] {$-1$};
\end{tikzpicture}
\]

Now consider a sequence $(i_1, i_2, \dots, i_t)$ of distinct agents of maximum length such that there are chores $c_2, \dots, c_{t-1}$ with
\[
\begin{tikzpicture}[]
	\node (i1) {$i_j$};
	\node (i2) [right=2.5cm of i1] {$i_{j+1}$};
	\draw[-latex] 
		(i1) -- (i2) 
		node [midway, font=\small, inner sep=1pt, fill=white] {$c_j$}
		node [pos=0.1, below, font=\small] {$-1$}
		node [pos=0.9, below, font=\small] {$-1$};
	\node [right=0.5cm of i2] {for $j = 2, \dots, t-1$.};
\end{tikzpicture}
\]
Now there are two possibilities: either the chain stops at agent $i_t$ and we cannot extend it further, or we can extend it to an agent $i_t$ that already appeared in the sequence (so $i_t = i_\ell$ for some $\ell \in \{1,\dots,t-1\}$).

\begin{itemize}
	\item \emph{Case 1: Suppose that the chain cannot be extended.} 
	Consider agent $i_t$. In moving from $\xx$ to $\zz$, she received some of chore $c_{t-1}$ which makes her worse off. Because $\zz$ is a Pareto improvement, she must give away part of at least some chore, say $c_t$, to some agent $i_{t+1}$.
	\begin{itemize}
		\item Suppose $v_{i_t}(c_t) = -p$, i.e. $c_t$ is difficult for $i_t$.
		Thus, we have the following situation:
		\[
		\begin{tikzpicture}
			\node (i1) {$i_1$};
			\node (i2) [right=2cm of i1] {$i_2$};
			\node (i3) [right=2cm of i2] {$i_3$};
			\node (it1) [right=2cm of i3] {$i_{t-1}$};
			\node (it) [right=2cm of it1] {$i_{t}$};
			\draw[-latex] 
				(i1) -- (i2) 
				node [midway, font=\small, inner sep=1pt, fill=white] {$c_1$}
				node [pos=0.1, below, font=\small] {$-p$}
				node [pos=0.9, below, font=\small] {$-1$};
			\draw[-latex] 
				(i2) -- (i3) 
				node [midway, font=\small, inner sep=1pt, fill=white] {$c_2$}
				node [pos=0.1, below, font=\small] {$-1$}
				node [pos=0.9, below, font=\small] {$-1$};
			\draw[-latex] (i3) -- (it1) node [midway,fill=white] {\dots};

			\draw[-latex] 
				(it1) -- (it) 
				node [midway, font=\small, inner sep=1pt, fill=white] {$c_{t-1}$}
				node [pos=0.1, below, font=\small] {$-1$}
				node [pos=0.9, below, font=\small] {$-1$};
				
			\draw[-latex, dashed] 
				(it) to [bend right=15] 
				node [midway, font=\small, inner sep=1pt, fill=white] {$c_t$}
				node [pos=0.05, above, font=\small] {$-p$}
				node [pos=0.95, above, font=\small] {$?$}
				(i1);
		\end{tikzpicture}
		\]
		(The question mark indicates that we have not determined the value $v_{i_1}(c_t)$, and the dashed line indicates that we may have $\alpha_{i_t, i_1, c_t} = 0$.)
		It follows now that $\xx$ is not (integrally) Pareto optimal. An integral Pareto improvement can be found by implementing the shown cycle ``integrally'': agent $i_j$ gives all of chore $c_j$ to $i_{j+1}$ for $j = 1, \dots, t-1$, and $i_t$ gives all of chore $c_t$ to $i_1$. This change makes $i_t$ strictly better off (receiving an easy chore but giving away a difficult one), leaves $i_2, \dots, i_{t-1}$ indifferent, and either leaves $i_1$ indifferent or makes $i_1$ strictly better off, depending on the value of $v_{i_1}(c_t)$.
		\item Suppose $v_{i_t}(c_t) = -1$, i.e. $c_t$ is easy for $i_t$. 
		Because we cannot extend the chain, it must be that $v_{i_{t+1}}(c_t) = -p$. 
		Thus, we have the following situation:
		\[
		\begin{tikzpicture}
			\node (i1) {$i_{t-1}$};
			\node (i2) [right=2cm of i1] {$i_t$};
			\node (i3) [right=2cm of i2] {$i_{t+1}$};
			\draw[-latex] 
				(i1) -- (i2) 
				node [midway, font=\small, inner sep=1pt, fill=white] {$c_{t-1}$}
				node [pos=0.1, below, font=\small] {$-1$}
				node [pos=0.9, below, font=\small] {$-1$};
			\draw[-latex] 
				(i2) -- (i3) 
				node [midway, font=\small, inner sep=1pt, fill=white] {$c_t$}
				node [pos=0.1, below, font=\small] {$-1$}
				node [pos=0.9, below, font=\small] {$-p$};
		\end{tikzpicture}
		\]
		Let $\beta = \min\{\alpha_{i_{t-1},i_t,c_{t-1}}, \alpha_{i_t,i_{t+1},c_t}\}$. The picture says that $\beta > 0$. Now consider the allocation $\zz'$ that is like $\zz$ except that 
		\begin{alignat*}{5}
			&\alpha'_{i_{t-1},i_t,c_{t-1}} &&= \alpha_{i_{t-1},i_t,c_{t-1}} &&- \beta, \\
			&\alpha'_{i_t,i_{t+1},c_t} &&= \alpha_{i_t,i_{t+1},c_t} &&- \beta, \\
			&\alpha'_{i_{t-1},i_{t+1},c_{t-1}} &&= \alpha_{i_{t-1},i_{t+1},c_{t-1}} &&+ \beta.
		\end{alignat*}
		That is, we reduced the amount transferred along the arcs shown in the picture by $\beta$ and instead transfer a $\beta$ amount of $c_{t-1}$ directly from $i_{t-1}$ to $i_{t+1}$ (i.e., skipping $i_t$).
		Note that agents $i_{t-1}$ and $i_t$ are indifferent between $\zz$ and $\zz'$, and $i_{t+1}$ is either indifferent or is better off in $\zz'$, depending on the value $v_{i_{t+1}}(c_{t-1})$. Thus, for all $i \in \agents$, we have $v_i(\zz') \ge v_i(\zz) \ge v_i(\yy)$. But $\sum_{i,j,c} \alpha'_{i,j,c} < \sum_{i,j,c} \alpha_{i,j,c}$, contradicting our choice of $\zz$.
		\end{itemize}
	\item \emph{Case 2: Suppose that the chain can be extended by repeating an agent.} That is, we can draw another edge to an agent $i_\ell$ with $\ell \in [t-1]$.
		\begin{itemize}
		\item Suppose $\ell = 1$. Thus, we have the following situation:
		\[
		\begin{tikzpicture}
			\node (i1) {$i_1$};
			\node (i2) [right=2cm of i1] {$i_2$};
			\node (i3) [right=2cm of i2] {$i_3$};
			\node (it1) [right=2cm of i3] {$i_{t-1}$};
			\node (it) [right=2cm of it1] {$i_{t}$};
			\draw[-latex] 
				(i1) -- (i2) 
				node [midway, font=\small, inner sep=1pt, fill=white] {$c_1$}
				node [pos=0.1, below, font=\small] {$-p$}
				node [pos=0.9, below, font=\small] {$-1$};
			\draw[-latex] 
				(i2) -- (i3) 
				node [midway, font=\small, inner sep=1pt, fill=white] {$c_2$}
				node [pos=0.1, below, font=\small] {$-1$}
				node [pos=0.9, below, font=\small] {$-1$};
			\draw[-latex] (i3) -- (it1) node [midway,fill=white] {\dots};

			\draw[-latex] 
				(it1) -- (it) 
				node [midway, font=\small, inner sep=1pt, fill=white] {$c_{t-1}$}
				node [pos=0.1, below, font=\small] {$-1$}
				node [pos=0.9, below, font=\small] {$-1$};
				
			\draw[-latex] 
				(it) to [bend right=15] 
				node [midway, font=\small, inner sep=1pt, fill=white] {$c_t$}
				node [pos=0.05, above, font=\small] {$-1$}
				node [pos=0.95, above, font=\small] {$-1$}
				(i1);
		\end{tikzpicture}
		\]
		It follows that $\xx$ is not (integrally) Pareto optimal. An integral Pareto improvement can be found by implementing the shown cycle ``integrally'': agent $i_j$ gives all of chore $c_j$ to $i_{j+1}$ for $j = 1, \dots, t-1$, and $i_t$ gives all of chore $c_t$ to $i_1$. This change makes $i_1$ strictly better off (receiving an easy chore but giving away a difficult one), and leaves all other agents indifferent.
		\item Suppose $\ell > 1$. For concreteness, we take $\ell = 2$ but the other cases are analogous.
		\[
		\begin{tikzpicture}
			\node (i1) {$i_1$};
			\node (i2) [right=2cm of i1] {$i_2$};
			\node (i3) [right=2cm of i2] {$i_3$};
			\node (it1) [right=2cm of i3] {$i_{t-1}$};
			\node (it) [right=2cm of it1] {$i_{t}$};
			\draw[-latex] 
				(i1) -- (i2) 
				node [midway, font=\small, inner sep=1pt, fill=white] {$c_1$}
				node [pos=0.1, below, font=\small] {$-p$}
				node [pos=0.9, below, font=\small] {$-1$};
			\draw[-latex] 
				(i2) -- (i3) 
				node [midway, font=\small, inner sep=1pt, fill=white] {$c_2$}
				node [pos=0.1, below, font=\small] {$-1$}
				node [pos=0.9, below, font=\small] {$-1$};
			\draw[-latex] (i3) -- (it1) node [midway,fill=white] {\dots};
		
			\draw[-latex] 
				(it1) -- (it) 
				node [midway, font=\small, inner sep=1pt, fill=white] {$c_{t-1}$}
				node [pos=0.1, below, font=\small] {$-1$}
				node [pos=0.9, below, font=\small] {$-1$};
			\draw[-latex] 
				(it) to [bend right=15] 
				node [midway, font=\small, inner sep=1pt, fill=white] {$c_t$}
				node [pos=0.05, above, font=\small] {$-1$}
				node [pos=0.95, above, font=\small] {$-1$}
				(i2);
		\end{tikzpicture}
		\]
		Let $\beta = \min\{\alpha_{i_2,i_3,c_2},\dots,\alpha_{i_{t-1},i_t,c_{t-1}}, \alpha_{i_t,i_2,c_t}\}$. The picture says that $\beta > 0$. Now consider the allocation $\zz'$ that is like $\zz$ except that
		\begin{alignat*}{5}
			&\alpha'_{i_j, i_{j+1}, c_j} &&= \alpha_{i_j, i_{j+1}, c_j} &&- \beta \qquad \text{for $j = 2, \dots, t-1$, and} \\
			&\alpha'_{i_t, i_2, c_t} &&= \alpha_{i_t, i_2, c_t} &&- \beta.
		\end{alignat*}
		That is, we reduced the amount transferred along the arcs of the cycle shown in the picture by $\beta$.
		Note that all agents are indifferent between $\zz$ and $\zz'$, because for each edge in the cycle, the receiver values the item received equally to the item given away. Thus for all $i \in \agents$, we have $v_i(\zz') = v_i(\zz) \ge v_i(\yy)$. But $\sum_{i,j,c} \alpha'_{i,j,c} < \sum_{i,j,c} \alpha_{i,j,c}$, contradicting our choice of $\zz$.
		\end{itemize}
\end{itemize}
Each case has led to either a contradiction or else to the desired conclusion that $\xx$ fails Pareto optimality. \qed

\end{document}